\newcolumntype{Y}{>{\centering\arraybackslash}X}
\newcommand{\algorithmfootnote}[2][\footnotesize]{%
    \let\old@algocf@finish\@algocf@finish
    \def\@algocf@finish{\old@algocf@finish
    \leavevmode\rlap{\begin{minipage}{\linewidth}
                         #1#2
    \end{minipage}}%
    }%
}
\definecolor{darkred}{RGB}{200,0,0}
\newcommand{\eat}[1]{}
\newcommand{\opt}{\text{\rm OPT}}
\newcommand{\alg}{\text{\rm ALG}}
\newcommand{\pr}[1]{\mleft(#1\mright)}
\newcommand{\pc}[1]{\mleft\{#1\mright\}}
\newcommand{\ceil}[1]{\mleft\lceil#1\mright\rceil}
\newcommand{\cset}[2]{\pc{#1\middle|#2}}
\newtheorem{observation}{Observation}
\setlist{itemsep=0pt}
\NewDocumentCommand{\tree}{o}{
    \IfNoValueTF{#1}{T}{T_{#1}}
}
\NewDocumentCommand{\loc}{m}{x(\req)}
\NewDocumentCommand{\distsgn}{}{\delta}
\NewDocumentCommand{\dist}{O{}mm}{\distsgn_{#1}\pr{#2, #3}}
\NewDocumentCommand{\req}{}{q}
\NewDocumentCommand{\reqs}{}{Q}
\NewDocumentCommand{\creqs}{}{Q^{\star}}
\NewDocumentCommand{\creq}{}{\req^{\star}}
\NewDocumentCommand{\pen}{}{\pi}
\NewDocumentCommand{\ar}{}{\Delta}
\NewDocumentCommand{\npoint}{}{n}
\NewDocumentCommand{\level}{m}{\ell_{#1}}
\NewDocumentCommand{\alevel}{m}{\overline{\ell}_{#1}}
\NewDocumentCommand{\stime}{m}{t_{#1}}
\NewDocumentCommand{\ftime}{m}{\tau_{#1}}
\NewDocumentCommand{\ptime}{m}{\sigma_{#1}}
\NewDocumentCommand{\ms}{}{G}
\NewDocumentCommand{\serv}{}{\lambda}
\NewDocumentCommand{\servs}{}{\Lambda}
\NewDocumentCommand{\pservs}{}{\servs^{\mathrm{p}}}
\NewDocumentCommand{\pservsf}{}{\servs^{\mathrm{pf}}}
\NewDocumentCommand{\pservss}{}{\servs^{\mathrm{ps}}}
\NewDocumentCommand{\cservs}{}{\servs^{\mathrm{c}}}
\NewDocumentCommand{\ball}{mm}{\ballsgn\pr{#1,#2}}
\NewDocumentCommand{\perball}{mmm}{\ballsgn^{#1}\pr{#2,#3}}
\NewDocumentCommand{\nreqs}{}{m}
\NewDocumentCommand{\ballsgn}{}{B}
\NewDocumentCommand{\elig}{}{E}
\NewDocumentCommand{\cyl}{}{\gamma}
\NewDocumentCommand{\percyl}{m}{\cyl^{#1}}
\NewDocumentCommand{\cyls}{}{\Gamma}
\NewDocumentCommand{\percyls}{}{\mathscr{H}}
\NewDocumentCommand{\pdchg}{m}{D^{*}_{\mathrm{p}, #1}}
\NewDocumentCommand{\prp}{m}{\pr{#1}^{+}}
\NewDocumentCommand{\cdchg}{m}{D^{*}_{\mathrm{c},#1}}
\NewDocumentCommand{\cpen}{m}{\pen_{\mathrm{c}, #1}}
\NewDocumentCommand{\pcyls}{}{\cyls^{\mathrm{p}}}
\NewDocumentCommand{\pcylsf}{}{\cyls^{\mathrm{pf}}}
\NewDocumentCommand{\pcylss}{}{\cyls^{\mathrm{ps}}}
\NewDocumentCommand{\ccyls}{}{\cyls^{\mathrm{c}}}
\NewDocumentCommand{\pcyl}{m}{\cyl_{\mathrm{p}}(#1)}
\NewDocumentCommand{\ccyl}{m}{\cyl_{\mathrm{c}}(#1)}
\NewDocumentCommand{\ppercyls}{}{\percyls^{\mathrm{pf}}}
\NewDocumentCommand{\I}{mm}{(#1, #2]}
\NewDocumentCommand{\rlt}{m}{r_{#1}}
\NewDocumentCommand{\dlt}{m}{d_{#1}}
\NewDocumentCommand{\ylt}{mm}{d_{#1}(#2)}
\NewDocumentCommand{\ctr}{}{h}
\NewDocumentCommand{\rylt}{mm}{y_{#1}(#2)}
\NewDocumentCommand{\trd}{}{Y}
\NewDocumentCommand{\ivl}{}{I}
\NewDocumentCommand{\civl}{m}{\ivl_{\mathrm{c}}(#1)}
\NewDocumentCommand{\pivl}{m}{\ivl_{\mathrm{p}}(#1)}
\NewDocumentCommand{\ccap}{mm}{\cost{#1 \cap #2}}
\NewDocumentCommand{\levsep}{}{b}
\NewDocumentCommand{\stree}{}{\mathrm{ST}}
\NewDocumentCommand{\pcstree}{}{\mathrm{PCST}}
\newcommand{\lIfElse}[3]{\lIf{#1}{#2 \textbf{else}~#3}}
\NewDocumentCommand{\cost}{m}{
    c\pr{#1}
}
\NewDocumentCommand{\coston}{mmo}{
    \IfNoValueTF{#3}{
        #1\pr{#2}
    }{
        #1\pr{#2|#3}
    }
}
\title{Improved and Deterministic Online Service with Deadlines or Delay}
\author{Noam Touitou}
\affiliation{%
  \institution{Amazon}
  \city{Tel Aviv}
  \country{Israel}
}
\email{noamtwx@gmail.com}
\keywords{online, deadlines, delay, service, k-server}
\date{}
\begin{document}

    \begin{abstract}
        We consider the problem of online service with delay on a general metric space, first presented by Azar, Ganesh, Ge and Panigrahi (STOC 2017).
The best known randomized algorithm for this problem, by Azar and Touitou (FOCS 2019), is $O(\log^2 \npoint)$-competitive, where $\npoint$ is the number of points in the metric space.
This is also the best known result for the special case of online service with deadlines, which is of independent interest.

In this paper, we present $O(\log \npoint)$-competitive \emph{deterministic} algorithms for online service with deadlines or delay, improving upon the results from FOCS 2019.
Furthermore, our algorithms are the first deterministic algorithms for online service with deadlines or delay which apply to general metric spaces and have sub-polynomial competitiveness.
    \end{abstract}

    \maketitle

%

    \section{Introduction}
    \label{sec:Intro}
    In online service with deadlines/delay, a server exists on a metric space of $\npoint$ points.
Requests arrive over time on points in the metric space, demanding service by the algorithm.
The algorithm can serve requests by moving the server to their location, incurring a cost which is the distance traveled by the server on the metric space.
In \emph{online service with deadlines}, each request has an associated deadline by which it must be served.
In \emph{online service with delay}, a more general problem, the deadline is replaced with delay costs which accrue while the request is pending.
Specifically, each request has an associated, non-decreasing delay function, such that the total delay cost incurred by a pending request until time $t$ is the value of its delay function at $t$.

Online service with delay was first introduced by Azar et al.~\cite{DBLP:conf/stoc/AzarGGP17}, who gave an $O(\log^4 \npoint)$ randomized algorithm for the problem, based on randomized embedding of the metric space into a tree (specifically, a weighted hierarchically well-separated tree), then solving the problem on the resulting tree.
In~\cite{DBLP:conf/focs/AzarT19}, this was improved to $O(\log^2 \npoint)$-competitiveness, through an improved algorithm for online service on a tree.
The result of~\cite{DBLP:conf/focs/AzarT19} remains the best known randomized result for this problem.

Without randomization, much less is known about this problem.
There is no known deterministic algorithm (of competitiveness less than polynomial) which applies to general metric spaces.
For \emph{specific} metric spaces, some results are known.
When the metric space is uniform (or weighted uniform), the work of Azar et al.~\cite{DBLP:conf/stoc/AzarGGP17} implies a constant-competitive deterministic algorithm.
When the metric space is a line, Bienkowski et al.~\cite{DBLP:conf/sirocco/BienkowskiKS18} presented an $O(\log \ar)$-competitive deterministic algorithm; here, $\ar$ is the aspect ratio of the metric space, or the ratio between the largest and smallest pairwise distances (for a line, note that $\ar \ge \npoint$).

\subsection{Our Results}
We consider online service with deadlines/delay on a metric space of $\npoint$ points, and present the following results.

\begin{enumerate}
    \item An $O(\log \npoint)$-competitive, deterministic algorithm for online service with deadlines that runs in polynomial time. (\Cref{sec:OSD}.)
    \item An $O(\log \npoint)$-competitive, deterministic algorithm for online service with delay that runs in polynomial time. (\Cref{sec:OSY}.)
\end{enumerate}

Both results improve upon the best known \emph{randomized} algorithm for service with deadlines/delay, which is the randomized $O(\log^2 \npoint)$-competitive algorithm of~\cite{DBLP:conf/focs/AzarT19}.
Moreover, these are the first deterministic algorithms of sub-polynomial competitiveness for online service with deadlines/delay on general metric spaces.
Note that while the result for deadlines is implied by the result for delay, we chose to present it independently.
This is both for ease of presentation and since the deadline case is of independent interest.

In fact, we show that our algorithms achieve a stronger result: they are $O(\log \min\pc{\npoint, \nreqs})$-competitive, where $\nreqs$ is the number of requests in the input.
Note that previous algorithms had no guarantee in terms of the number of requests.
Specifically, previous algorithms were based on randomized tree embedding, and thus lose $\Theta(\log \npoint)$ in competitiveness even when $\nreqs$ is constant.
We discuss this result in \cref{sec:RR}.

\subsection{Our Techniques}

\textbf{Online service with deadlines.}.
The algorithm for online service with deadlines employs the main concept used in~\cite{DBLP:conf/focs/AzarT20} for network design problems with deadlines, which is to assign levels to requests that increase over time, such that high-level requests are only served in high-cost services.
Services also have levels, determined by the level of request whose deadline has been reached.
The budget of a service is exponential in its level, and a service of level $\ell$ only serves requests of level at most $\ell$.

However, in network design the cost of serving a request is fixed at all times, while in online service this cost depends on the current location of the server.
Thus, while levels are maintained for each request, the participation of a request in a service depends both on its level and on its distance from the server at the time of service; the resulting parameter is called the \emph{adjusted level} of the request.
As a service of level $\ell$ restricts itself to requests with adjusted level at most $\ell$, each service is confined to some ``service ball'' centered at the server's location.

In addition, online service calls for a more aggressive raising of levels.
In particular, the following properties are crucial to the analysis:
\begin{enumerate}
    \item Upon the deadline of a request, a service is started with level much larger than that of the triggering request.
    \item Upon the end of a service, requests in its service ball are upgraded to a \emph{higher} level than that of the service.
\end{enumerate}
Compare this to the network design framework of~\cite{DBLP:conf/focs/AzarT20}, in which the constant difference in levels between service and triggering request can be any positive number without breaking the analysis (and is chosen to be $1$).
In addition, for network design the level of an eligible request only increased to the level of the service.

Finally, the server itself must occasionally move to more opportune locations.
In our algorithm, this depends on the triggering request: if its adjusted level is dictated by its distance from the server, we say that the service is \emph{primary}.
In this case, the server is moved to the request at the end of the service; otherwise, the server returns to its initial position.

\textbf{Analysis.}
The optimal solution for online service is harder to characterize than in network design.
Optimal services in network design can be charged independently (where the requests served are an ``intersecting set''), while the tour of the optimal server in online service is not easily partitioned.
This calls for a novel type of analysis that we introduce.

In our analysis, we construct space-time \emph{cylinders}, where each cylinder is associated with a shape in the metric space (e.g., a ball) and a time interval.
We then show two properties: first, that the optimal solution incurs enough cost inside each cylinder; second, that the cylinders are disjoint (either temporally or spatially).
This yields a lower bound on the cost of the optimal solution; as each cylinder is associated with a service in the algorithm, this connects the cost of the algorithm with that of the optimal solution.
The aggressive upgrading of requests and services is dictated by this analysis: upgrading requests forces $\opt$ to incur high cost inside each cylinder, while upgrading services implies disjointness of the cylinders.
We first use cylinders in a simple way, such that the cylinders' associated shapes are balls in the metric space.
Then, to show our final result, we perforate those balls by removing balls of much-smaller radius; we then show that the charge to the optimum is maintained, while achieving a greater degree of cylinder disjointness (and thus a better competitive ratio).

\textbf{Online service with delay.}
The algorithm for online service with delay is similar to the algorithm for deadlines.
In deadlines, a service is started when the deadline of a request expires; in delay, a service is started when the total delay for a set of pending requests becomes large.
(In fact, we consider the \emph{residual} delay for this condition, which is the amount of delay that exceeds investment from past services.)
Specifically, when requests of adjusted level at most $\ell$ gather a total delay of $2^\ell$, we say that level $\ell$ has become critical and trigger a service.
This service uses a \emph{prize-collecting} algorithm for Steiner tree to choose whether to serve requests or invest in them, offsetting their future delay.
This use of a prize-collecting approximation algorithm is similar to that in~\cite{DBLP:conf/focs/AzarT20}.

The salient difference between the deadline and delay case lies in moving the server upon a primary service (where ``primary'' is defined somewhat similarly to deadlines).
Considering deadlines as a specific case of delay, an expired deadline is equivalent to infinite delay incurred in a very concentrated neighborhood (i.e., a single point).
Analogously, for the case of delay, in a primary service we attempt to identify a small-radius ball within which a constant fraction of the residual delay exists.
If such a ball is found, the server would move to its center at the end of the service.
Otherwise, the delay is well-spread, and the server would remain stationary.
The intuition here is that when the delay is well-spread, the optimal solution must also make significant movements to avoid incurring large delay.

\subsection{Related Work}
\textbf{Multiple servers.}
Online service with delay has also been considered when the algorithm has $k>1$ servers.
In the first paper of Azar et al.~\cite{DBLP:conf/stoc/AzarGGP17}, an $O(k \cdot \textrm{poly}\log(\npoint))$-competitive randomized algorithm was given for this problem.
As the algorithm only used randomization in the initial embedding stage, its dependence on $k$ is linear (as online service with delay is a generalization of the $k$-server problem, which has an $\Omega(k)$-competitiveness lower bound for deterministic algorithms).
For uniform metric spaces, a better use of randomization was done in~\cite{DBLP:journals/siamcomp/GuptaKP22}, achieving an $O(\log \npoint \log k)$-competitive algorithm.
For online service with deadlines on a general metric space, an $O(\textrm{poly}\log (\ar\npoint))$-competitive randomized algorithm was presented by Gupta et al.~\cite{DBLP:conf/focs/00010P21}.

\textbf{Network design with deadlines/delay.}
A set of problems related to online service is network design with deadlines/delay.
In such problems, connectivity requests with deadlines or delay arrive over time, and must be served by transmitting a subgraph that provides the desired connectivity.
A notable example is Steiner tree, in which requests demand connecting a terminal to some root node.
It can be seen that Steiner tree with deadlines/delay is a special case of online service with deadlines/delay: the reduction involves forcing the server to remain at the root node through a stream of requests with immediate deadline (or very high delay cost).
The special case of Steiner tree in which the metric space is itself a tree is called multilevel aggregation, and has received much attention~\cite{DBLP:conf/esa/BienkowskiBBCDF16,DBLP:conf/soda/BuchbinderFNT17,DBLP:conf/focs/AzarT19}; this is true also of its special cases, TCP acknowledgement (e.g., \cite{TCPAck_DBLP:conf/stoc/DoolyGS98,TCPAck_DBLP:journals/algorithmica/KarlinKR03,TCPAck_DBLP:conf/esa/BuchbinderJN07}) and joint replenishment (\cite{JointRep_DBLP:conf/soda/BuchbinderKLMS08,JointRep_DBLP:journals/algorithmica/BritoKV12,JointRep_DBLP:conf/soda/BienkowskiBCJNS14}).
Multilevel aggregation itself has also yielded algorithms for online service with delay (see~\cite{DBLP:conf/focs/AzarT19}).
A general framework for network design problems was introduced in~\cite{DBLP:conf/focs/AzarT20}; some techniques introduced in~\cite{DBLP:conf/focs/AzarT20} are used in this paper for online service.
These techniques were also used by Chen et al.~\cite{DBLP:conf/icalp/ChenKU22} for a generalization of joint replenishment.

\textbf{Classic online $k$-server.}
In $k$-server, the classic variant of online service with deadlines/delay, requests arrive over a sequence rather than over time to be served by one of $k$ servers.
(Here, unlike in service with deadlines/delay, the case of a single server is trivial.)
Deterministically, the best competitiveness bound for this problem is $\Theta(k)$~\cite{DBLP:journals/jal/ManasseMS90,DBLP:journals/jacm/KoutsoupiasP95}, where determining the exact constant is an open problem.
With randomization, poly-logarithmic competitive ratios have been achieved relatively recently~\cite{DBLP:conf/focs/BansalBMN11,DBLP:conf/stoc/BubeckCLLM18}.

    \section{Preliminaries}
    \label{sec:Prelim}

In online service with deadlines/delay, we are given a metric space of $\npoint$ points.
We represent this metric space as a weighted, simple graph $\ms$ of $\npoint$ nodes, such that the distance $\dist{u}{v}$ between two points in the metric space is the weight of the shortest path between the nodes in the graph.
Each request $\req$ in the input request set $\reqs$ arrives at time $\rlt{\req}$; slightly abusing notation, we also use $\req$ to denote the point in $\ms$ on which the request exists.
A server exists in the metric space, such that moving the server to a pending request $\req$ serves the request (the server movements are immediate, and do not require time).

In the deadline case, each request $\req \in \reqs$ has deadline $\dlt{\req}$, and must be served in the interval $\I{\rlt{\req}}{\dlt{\req}}$; we assume WLOG that the deadlines of all requests are distinct (this can be enforced by arbitrary tie breaking by the algorithm).
The goal is to minimize the total movement of the server during the course of the algorithm, while still serving all requests by their deadline.

In the delay case, each request $\req \in \reqs$ has a nondecreasing delay function $\ylt{\req}{t}$, defined for every time $t\ge \rlt{\req}$, such that the total delay cost that pending request $\req$ accrues by time $t$ is $\ylt{\req}{t}$.

Without loss of generality, we assume that $\ylt{\req}{\rlt{\req}} = 0$, and that delay rises continuously.
(Indeed, the former assumption translates to an additive constant to every solution, while the latter can again be enforced by the algorithm.)
For ease of presentation, and in keeping with some previous work, we also assume that the delay of every request tends to infinity as time advances; that is, that every request must be served eventually.
(We remark that our algorithm can also be seen to work without this assumption.)

For every number $x$, we define $\prp{x} := \max\pc{x,0}$.
Given a point $v \in \ms$ and a radius $r$, we define $\ball{v}{r}$ to be the set of nodes $u \in \ms$ such that $\dist{v}{u} \le r$.

    \section{Online Service with Deadlines}
    \label{sec:OSD}
    In this section, we consider online service with deadlines, and prove the following theorem.

\begin{theorem}
    \label{thm:OSD_Competitiveness}
    There exists an $O(\log \npoint)$-competitive deterministic algorithm for online service with deadlines.
\end{theorem}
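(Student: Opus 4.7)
The plan is to describe a deterministic online algorithm that maintains, for each pending request $\req$, an integer \emph{level} $\cls{\req}$ which starts small and grows monotonically over time, and to combine it with the distance from the current server position $s$ to form the \emph{adjusted level} of $\req$, of roughly the form $\max\pc{\cls{\req}, \log_2 \dist{s}{\req}}$. Whenever the deadline of some request $\req$ expires, the algorithm opens a service of level $\ell^{\star}$ exceeding $\req$'s adjusted level by a fixed additive gap; this service collects all pending requests whose adjusted level is at most $\ell^{\star}$, which are precisely the pending requests inside the \emph{service ball} $\ball{s}{2^{\ell^{\star}}}$. The service is called \emph{primary} if the triggering request's adjusted level was determined by its distance from the server, in which case the server is moved to $\req$ at the end of the service; otherwise the service is non-primary and the server returns to its starting location. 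After the service, every request still lying inside the service ball has its level raised to a value strictly above $\ell^{\star}$, so that it cannot participate in any comparable-scale service until either significant time passes or the server relocates.

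For the analysis, I would charge the algorithm's cost against $\opt$ via space-time \emph{cylinders}. To each service of level $\ell^{\star}$, center $s$ and radius $r = 2^{\ell^{\star}}$, I associate a cylinder whose spatial shape is $\ball{s}{r}$ and whose time interval has length $\Theta(r)$ ending at the service time. Using the level-growth invariants and the additive gap between the triggering request's adjusted level and the service level, I would argue that $\opt$ must incur cost $\Omega(r)$ inside each such cylinder: either it serves the deadlined request inside the cylinder at a comparable movement cost, or it has already visited the cylinder's spatial domain within the cylinder's time window. Since the algorithm's cost for the service is $O(r)$, summing over all services yields $\alg \le O(1) \cdot \sum_{\mathrm{services}} \opt\text{-cost inside cylinder}$, and the competitive ratio is governed by the maximum space-time multiplicity of the cylinders.

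The main obstacle is bounding this overlap. With solid balls, a large cylinder can completely contain many smaller cylinders, which only gives $O(\log^2 \npoint)$. To obtain $O(\log \npoint)$ I would replace each cylinder by a \emph{perforated} cylinder, obtained from the original by deleting a concentric ball of much smaller radius (roughly the radius of the next scale down). Two claims must then be verified: (i) $\opt$ still pays $\Omega(r)$ inside the perforated region, because the additive gap between the triggering request's adjusted level and the service level forces the triggering request, and hence $\opt$'s corresponding movement, to lie strictly outside the removed inner ball; and (ii) within any single scale, perforated cylinders are pairwise disjoint in space-time, thanks to the aggressive upgrade of request levels after a service together with the primary-service server jump, which together prevent two services at the same scale from having overlapping perforated domains. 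Once (i) and (ii) are in place, the overall multiplicity reduces to the number of scales, which is $O(\log \npoint)$, and the theorem follows by summation.

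The delicate technical point is claim (i): one must track carefully how levels and server positions evolve to make sure that the event triggering each small, inner cylinder is fully accounted for inside \emph{its own} perforated cylinder, and is not simply lost when the inner ball is subtracted from an enclosing larger cylinder. I expect this to be the step that requires the most care, while the algorithmic description and the disjointness claim (ii) should follow directly from the level-upgrade and server-movement rules described above.
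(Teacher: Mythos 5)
Your algorithmic sketch matches the paper's \cref{alg:OSD} closely (levels, adjusted levels, additive gap, primary vs.\ non-primary services, post-service level upgrades, server moving to the triggering request). But the analysis sketch has several gaps, and the most important one --- the perforation --- is constructed in a way that would not work.

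\textbf{The perforation is the wrong one.} You propose deleting a single concentric ball of ``roughly the radius of the next scale down,'' i.e.\ replacing $\ball{s}{r}$ by an annulus $\ball{s}{r}\setminus\ball{s}{r/2}$. This does not give cross-scale disjointness. A smaller cylinder at scale $j<i$ has a ball $\ball{s'}{r'}$ whose center $s'$ can be anywhere in the metric space; if $s'$ lies near the boundary of $\ball{s}{r}$, then $\ball{s'}{r'}$ overlaps the annulus, so the two perforated cylinders still intersect and you are back to $O(\log\ar)$. What the paper does is remove, from $\ball{s}{r}$, a ball of radius $r/\rho$ around \emph{every} point of the metric space, with $\rho=\Theta(\npoint^2)$ (\cref{sec:Perf}). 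Then a cylinder at a scale smaller by $\Theta(\log\rho)=\Theta(\log\npoint)$ fits inside one of the deleted balls centered at \emph{its own} center, giving disjointness between cylinders whose scales are $\Omega(\log\npoint)$ apart, hence a partition into $O(\log\npoint)$ disjoint families (\cref{prop:OSD_PerforationDisjointness}). A note on phrasing: the number of scales is $O(\log\ar)$, not $O(\log\npoint)$; the perforation is what brings the multiplicity down to $O(\log\npoint)$.

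\textbf{The surviving-charge argument is also different.} You argue the charge survives because ``the triggering request \ldots lies strictly outside the removed inner ball.'' With a concentric hole of radius $\approx r/2$ and a primary trigger at distance $\approx r/2$ from the center, this is exactly on the boundary and too fragile; and for certified services the charge is not a single request's movement but a Steiner tree connecting all of $\reqs_\serv$, so locating one request says nothing. The paper's argument (\cref{prop:OSD_PerforationIntersectionDifference}) is quantitative: each of the at most $\npoint^2$ edges of $\opt$ loses at most $2r/\rho$ to the deleted balls, so the total loss is $\le 2r\npoint^2/\rho$, which is a small fraction of $r$ once $\rho=\Theta(\npoint^2)$. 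That is what allows you to subtract the hole and keep $\Omega(r)$.

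\textbf{Missing pieces in the pre-perforation charge.} Your claim that $\opt$ pays $\Omega(r)$ inside every cylinder is not true in general: $\opt$ may already be sitting at the triggering request when its deadline expires, incurring essentially zero cost. The paper handles this with a potential-function argument (\cref{prop:OSD_BoundingPrimaryByFar}) that removes ``near'' primary services from consideration, leaving only $\pservsf$, the ones where $\opt$'s server is far from $\creq_\serv$ at service time; only these get cylinders. Separately, for \emph{certified} services the charge is established through a Steiner-tree lower bound (\cref{prop:OSD_BallIntersection,prop:OSD_CertifiedCylinderIntersection}), not through $\opt$ visiting the triggering request; your sketch does not distinguish the two types of service, but the paper needs both arguments, with different cylinder time intervals (the request's release-to-deadline interval for primary, and a $\ptime{\serv}$-to-$\ftime{\serv}$ window for certified). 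Your stated ``time interval of length $\Theta(r)$'' does not correspond to either; a spatial radius imposes no time-length constraint in this problem.

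So the approach is the right family (level/budget algorithm, space-time cylinders, perforation), but the concrete perforation you propose fails the disjointness you claim, and the lower-bound charges (potential function, Steiner-tree argument for certified services) need to be filled in.
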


\subsection{The Algorithm}

\textbf{Steiner tree.}
Our algorithm contains a component which produces an approximate solution to the (offline) Steiner tree problem.
In this problem, one is given a set of terminal nodes in a graph, and must output a minimum-cost subtree spanning those terminals.
A classic result for approximating offline Steiner tree~\cite{DBLP:journals/acta/KouMB81} shows that there exists a 2-approximation for this problem; we denote this approximation algorithm by $\stree$, such that $\stree(U)$ denotes the output of the algorithm on the graph $\ms$ given the set of terminals $U$.
Slightly abusing notation, we also use $\stree(U)$ to denote the \emph{cost} of the approximate solution.
Similarly, we use $\stree^*(U)$ to denote some optimal solution for Steiner tree on terminals $U$ (or its cost).

\textbf{Algorithm's description.}
We now describe the behavior of the algorithm for online service with deadlines.
The algorithm is divided into \emph{services}, which are instantaneous events in which the algorithm decides to move its server to serve requests.
For every pending request $\req$, the algorithm maintains a level $\level{\req}$, which limits the set of services for which the request can be eligible (initially, $\level{\req}=-\infty$).
In addition, with $a$ the current location of the server, we define the \emph{adjusted level} of a request $\req$ to be
\[
    \alevel{\req} := \max\pc{\level{\req},\ceil{ \log \dist{a}{\req}}}
\]

Upon deadline of request $\req$, the algorithm starts a new service $\serv$.
The service $\serv$ also has a level $\level{\serv}$, which is larger by a constant from the adjusted level of the triggering request $\req$.
The level of $\serv$ determines which pending requests are considered for service by $\serv$; specifically, a request $\req'$ is eligible for service only if $\alevel{\req'} \le \level{\serv}$.
This means that $\serv$ restricts itself to requests that are both of level at most $\level{\serv}$, and are within the ball $\ball{a}{2^{\level{\serv}}}$, where $a$ is the current location of the server.

Once the eligible requests have been identified, the algorithm attempts to solve them by order of increasing deadlines, subject to a budget of $\Theta(2^{\level{\serv}})$.
This makes use of a Steiner tree approximation component, to design an efficient path through the chosen requests.
The algorithm then traverses this path, serving the chosen requests and finishing at its starting position $a$.
For the remaining, unserved eligible requests, their level is raised to \emph{above} the level of the service; specifically, to level $\level{\serv}+1$.

Finally, note that for the triggering request $\req$,  $\alevel{\req}$ is dictated by either $\level{\req}$ or $\dist{\req}{a}$.
If it is dictated by the latter, the service is called a \emph{primary} service, and the service would move the server from $a$ to $\req$.
Otherwise, the server would remain at $a$ at the end of the service.
The pseudocode description of the algorithm is given in \cref{alg:OSD}.

\begin{algorithm}
    \caption{Online Service with Deadlines}
    \label{alg:OSD}
    \EFn{\UponRequest{$\req$}}{
        set $\level{\req} \gets -\infty$.
    }
    \EFn{\UponDeadline{$\req$}}{
        start a new service, denoted by $\serv$.

        let $a$ be the current location of the server.


        \lIfElse{$\alevel{\req} \neq  \level{\req}$}{say $\serv$ is primary}{$\serv$ is not primary.}\label{line:OSD_SetPrimary}

        set service level $\level{\serv} \gets \alevel{\req} + 3$.

        let $\reqs'$ be the set of currently pending requests.

        let $\elig_{\serv} \gets \cset{\req' \in \reqs'}{\alevel{\req'} \le \level{\serv}}$.
        \label{line:OSD_DefineEligible}

        let $\reqs_{\serv} \gets \pc{\req}, S\gets \emptyset$.

        \For{$\req' \in \elig_{\serv}$ by order of increasing deadline}{
            add $\req'$ to $\reqs_{\serv}$.

            let $S \gets \stree(\reqs_{\serv})$.

            \If{$\cost{S} \ge 4\cdot 2^{\level{\serv}}$}{
                \Break from the loop. \label{line:OSD_TreeReachedBudget}
            }
        }

        perform DFS tour of $S$, serving $\reqs_{\serv}$.

        \ForEach{$\req' \in \elig_{\serv} \setminus \reqs_{\serv}$}{
            set $\level{\req'} \gets \level{\serv} + 1$.\label{line:OSD_LevelUpgrade}
        }

        \lIf{$\serv$ is primary}{move the server to $\req$.}
    }
\end{algorithm}

\subsection{Analysis}

Our goal now is to analyze \cref{alg:OSD} and prove \cref{thm:OSD_Competitiveness}.
Recall that we denote by $\ar$ the aspect ratio of the metric space, i.e., the ratio between the largest and smallest pairwise distances.
For ease of exposition, we first prove the following weaker theorem;

\begin{theorem}
    [weaker version of \cref{thm:OSD_Competitiveness}]
    \label{thm:OSD_WeakCompetitiveness}
    There exists an $O(\log (\npoint\ar))$-competitive deterministic algorithm for online service with deadlines.
\end{theorem}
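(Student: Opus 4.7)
The plan is to analyze \cref{alg:OSD} via a space-time cylinder charging argument; the offsets $+3$ (service level vs.\ triggering-request adjusted level) and $+1$ (upgrade level vs.\ service level) in the algorithm will supply the geometric slack that the charging needs. I would first bound the cost of each service $\serv$ of level $\ell := \level{\serv}$ by $O(2^\ell)$: the loop breaks in \cref{line:OSD_TreeReachedBudget} the moment $\cost{\stree(\reqs_{\serv})}$ exceeds $4\cdot 2^\ell$, and since every eligible request lies within distance $2^\ell$ of the server the final tree has cost at most $4\cdot 2^\ell + 2\cdot 2^\ell$; the DFS tour at most doubles this; and if $\serv$ is primary the move from $a$ to $\req$ costs at most $\dist{a}{\req} \le 2^{\alevel{\req}} = O(2^\ell)$.

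Next, to each service $\serv$ I would associate a cylinder $\cyl_\serv := B_\serv \times I_\serv$, where $B_\serv = \ball{a_\serv}{c\cdot 2^\ell}$ for a sufficiently large constant $c$ (so that $B_\serv$ contains $\elig_{\serv}$ and a buffer around the triggering request), and $I_\serv := (t'_\serv, t_\serv]$, with $t'_\serv$ being the largest time of any previous service $\serv'$ satisfying $\level{\serv'} \ge \ell$ and $B_{\serv'} \cap B_\serv \neq \emptyset$ (or $-\infty$ if no such $\serv'$ exists).

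The heart of the proof is to establish two claims. First, $\opt$'s server travels distance $\Omega(2^\ell)$ inside $\cyl_\serv$. When $\serv$ is primary, $\dist{a_\serv}{\req} = \Theta(2^\ell)$ and $\req$ must be served by $\opt$ no later than $t_\serv$; the definition of $t'_\serv$ ensures the relevant portion of $\opt$'s tour lies inside $B_\serv$ during $I_\serv$. When $\serv$ is non-primary and the loop broke on the budget, the 2-approximation guarantee yields $\stree^*(\reqs_{\serv}) = \Omega(2^\ell)$ with all terminals inside $B_\serv$ and all deadlines inside $I_\serv$, forcing $\opt$ to traverse a tour of length $\Omega(2^\ell)$ within $\cyl_\serv$. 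The remaining case (non-primary with $\reqs_{\serv} = \elig_{\serv}$) is handled by tracing each high-level request back through the chain of past services that upgraded its level via \cref{line:OSD_LevelUpgrade} and attributing cost to $\cyl_\serv$ using the resulting overlapping earlier cylinders. Second, cylinders of the same level $\ell$ are pairwise disjoint in space-time: if two such cylinders satisfy $B_{\serv_1} \cap B_{\serv_2} \neq \emptyset$ with $t_{\serv_1} < t_{\serv_2}$, then by definition $t'_{\serv_2} \ge t_{\serv_1}$, hence $I_{\serv_1} \cap I_{\serv_2} = \emptyset$.

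Summing $\opt$'s cost over cylinders at a fixed level gives $O(\opt)$, and summing over the $O(\log(\npoint \ar))$ relevant levels (the values $\ell$ for which $2^\ell$ can arise as a service level, bounded by the aspect ratio and the number of points) yields the claimed $\alg = O(\log(\npoint \ar))\cdot \opt$. I expect the main obstacle to be the $\opt$-charge in the non-primary case: the requests in $\reqs_{\serv}$ may have deadlines scattered throughout $I_\serv$ (not all at $t_\serv$), and one must prove $\opt$'s tour cost is truly confined to $B_\serv \times I_\serv$ rather than piggybacking on movement performed before $t'_\serv$. It is exactly here that the $+3$ offset (forcing $B_\serv$ to strictly contain $\elig_{\serv}$ with slack) and the $+1$ upgrade offset (controlling the range of levels of triggering requests in the chain) combine to make the charging go through.
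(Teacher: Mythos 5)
Your high-level plan---bound each service by $O(2^{\level{\serv}})$, build space-time cylinders, show $\Omega(2^{\level{\serv}})$ of $\opt$'s cost lands inside each cylinder, and show cylinders of the same level are disjoint---is the right skeleton and matches the paper's strategy. However, there are two concrete gaps.

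First, the primary-service charge does not go through as stated. You claim that since $\creq_{\serv}$ must be served by $\opt$ by $\stime{\serv}$, and since $\dist{a_{\serv}}{\creq_{\serv}} = \Theta(2^{\level{\serv}})$, $\opt$ traverses $\Omega(2^{\level{\serv}})$ inside $\cyl_{\serv}$. This fails when $\opt$'s server moved to $\creq_{\serv}$ (perhaps long before $t'_{\serv}$) and simply \emph{parks there}: then $\opt$ incurs essentially zero movement inside your cylinder during $I_{\serv}$, yet the service still cost the algorithm $\Theta(2^{\level{\serv}})$. The paper handles exactly this case with a potential-function argument ($\phi(t) = 4\dist{a(t)}{a^*(t)}$, \cref{prop:OSD_BoundingPrimaryByFar}): when the algorithm's server jumps to $\creq_{\serv}$ and $\opt$'s server is already nearby, the potential \emph{drops}, absorbing the move. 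This restricts the cylinder charge to the subset $\pservsf$ of primary services for which $\opt$'s server is provably \emph{far} (at least $2^{\level{\serv}-6}$) from $\creq_{\serv}$ at service time, which in turn forces $\opt$ to have traveled $\Omega(2^{\level{\serv}})$ inside the ball during $\pivl{\serv} = \I{\rlt{\creq_{\serv}}}{\dlt{\creq_{\serv}}}$. Relatedly, your interval $I_{\serv} = (t'_{\serv}, \stime{\serv}]$ does not contain $\rlt{\creq_{\serv}}$ in general, so even in the ``far'' case you cannot immediately conclude that $\opt$'s visit-and-departure from $\creq_{\serv}$ happened inside your cylinder.

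Second, your handling of non-primary services with $\reqs_{\serv} = \elig_{\serv}$ (budget not reached) by ``tracing each high-level request back through the chain of past services'' is hand-waving around a real difficulty, and the paper takes a structurally different route. Rather than constructing a cylinder for every service, the paper observes that a non-primary service's trigger must be a witness of an earlier service $\serv'$ (making $\serv'$ \emph{certified}), that each certified service is certified exactly once (\cref{prop:OSD_EveryServiceCertifiedOnce}), and that $\level{\serv} = \level{\serv'}+4$. This gives $\sum_{\servs \setminus \pservs} 2^{\level{\serv}} \le O(1)\cdot \sum_{\cservs} 2^{\level{\serv}}$ (\cref{lem:OSD_AlgorithmBoundedByPrimaryAndCertified}), so only \emph{certified} services need cylinders---and a certified service necessarily hit the budget in \cref{line:OSD_TreeReachedBudget}, which is exactly what the Steiner-ball argument (\cref{prop:OSD_BallIntersection,prop:OSD_CertifiedCylinderIntersection}) requires. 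Your chain-tracing idea would need to guarantee that the accumulated earlier cylinders do not get recharged by multiple services, which is exactly the issue the certification bijection resolves cleanly. I would recommend adopting the certification bookkeeping rather than trying to repair the chain argument directly.
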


After proving \cref{thm:OSD_WeakCompetitiveness}, we show how to strengthen some components in the analysis to obtain \cref{thm:OSD_Competitiveness}.

\subsubsection*{Basic Definitions and Properties.}

We denote by $\servs$ the set of services performed by the algorithm.

\begin{definition}[basic service definitions]
    Let $\serv \in \servs$ be a service.
    We define:
    \begin{itemize}
        \item The \emph{triggering request} of $\serv$, denoted $\creq_{\serv}$, to be the request whose deadline started $\serv$.
        \item The location $a_{\serv}$ to be the initial location of the server when $\serv$ is triggered.
        \item The service time $\stime{\serv}:=\dlt{\creq_{\serv}}$.
        \item The request set $\reqs_{\serv}$ to be the requests served by $\serv$ (i.e., the final value of the variable of that name in $\UponDeadline$).
        \item The request set $\elig_{\serv}$, as defined in \cref{line:OSD_DefineEligible}; these requests are called \emph{eligible for $\serv$}.
        \item The \emph{forwarding time} of $\serv$, denoted $\ftime{\serv}$, to be the maximum deadline of a request in $\reqs_{\serv}$.
        \item The \emph{cost} of $\serv$, denoted by $\cost{\serv}$, to be the total cost of moving the server in $\serv$.
        For a set of services $\servs'$, we define $\cost{\servs'} := \sum_{\serv \in \servs'} \cost{\serv}$.
    \end{itemize}
\end{definition}

We now define two subsets of services which are of particular focus: \emph{primary} services and \emph{certified} services.
(Note that a service can belong to both subsets.)
We later show that bounding the costs of these two subsets is enough to bound the total cost of the algorithm.

\begin{definition}[primary services]
    A service $\serv \in \servs$ is called \emph{primary} if it is set to be primary in \cref{line:OSD_SetPrimary}; that is, if $\alevel{\creq_{\serv}} \neq \level{\creq_{\serv}}$ at $\stime{\serv}$.
    We denote by $\pservs \subseteq \servs$ the set of primary services in the algorithm.
\end{definition}

\begin{definition}[witness requests and certified services]
    At a certain point in time, a request $\req$ is called a \emph{witness} for a service $\serv$ if its level $\level{\req}$ was last modified by $\serv$ (at \cref{line:OSD_LevelUpgrade}).

    Note that the triggering request of a non-primary service $\serv'$ is always a witness for an earlier service $\serv$; we say that $\serv'$ \emph{certifies} $\serv$, and call $\serv$ a \emph{certified} service.
    We denote by $\cservs\subseteq \servs$ the set of certified services in the algorithm.
\end{definition}

\begin{proposition}
    \label{prop:OSD_EveryServiceCertifiedOnce}
    Every certified service $\serv \in \cservs$ is certified by exactly one other service.
\end{proposition}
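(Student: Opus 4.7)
The plan is to argue by contradiction: suppose some certified service $\serv$ is certified by two distinct services $\serv_1 \neq \serv_2$, with $\stime{\serv_1} < \stime{\serv_2}$. From the definitions, both $\serv_1$ and $\serv_2$ are non-primary, and their triggering requests $\creq_{\serv_1}, \creq_{\serv_2}$ are witnesses for $\serv$ at times $\stime{\serv_1}$ and $\stime{\serv_2}$ respectively, meaning that $\serv$'s upgrade step in \cref{line:OSD_LevelUpgrade} last set $\level{\creq_{\serv_i}} = \level{\serv}+1$ for $i=1,2$. Since $\serv_i$ is non-primary, $\alevel{\creq_{\serv_i}} = \level{\creq_{\serv_i}} = \level{\serv}+1$ at $\stime{\serv_i}$, so the algorithm sets $\level{\serv_i} = \level{\serv}+4$.

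Next I would extract two geometric facts. First, since each $\creq_{\serv_i}$ was in $\elig_{\serv}$, its adjusted level at $\stime{\serv}$ (computed with server location $a_{\serv}$) was at most $\level{\serv}$; hence $\dist{a_{\serv}}{\creq_{\serv_i}} \le 2^{\level{\serv}}$ for $i=1,2$. Second, the fact that $\alevel{\creq_{\serv_1}} = \level{\serv}+1$ at $\stime{\serv_1}$ implies $\dist{a_{\serv_1}}{\creq_{\serv_1}} \le 2^{\level{\serv}+1}$. Combining these through the triangle inequality yields
\[
\dist{a_{\serv_1}}{a_{\serv}} \le \dist{a_{\serv_1}}{\creq_{\serv_1}} + \dist{\creq_{\serv_1}}{a_{\serv}} \le 3 \cdot 2^{\level{\serv}},
\]
and a second application gives $\dist{a_{\serv_1}}{\creq_{\serv_2}} \le 4 \cdot 2^{\level{\serv}}$.

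For the contradiction I would use the fact that $\creq_{\serv_2}$ remains a witness for $\serv$ at time $\stime{\serv_2}$, so no intermediate service modified its level; in particular $\serv_1$ did not upgrade it, nor did $\serv_1$ serve it (else it would not be pending at $\stime{\serv_2}$). Therefore $\creq_{\serv_2} \notin \elig_{\serv_1}$, which forces $\alevel{\creq_{\serv_2}} > \level{\serv_1} = \level{\serv}+4$ at time $\stime{\serv_1}$. Since $\level{\creq_{\serv_2}} = \level{\serv}+1$ is too small to account for this, the distance term must dominate, giving $\dist{a_{\serv_1}}{\creq_{\serv_2}} > 2^{\level{\serv}+4} = 16 \cdot 2^{\level{\serv}}$, which contradicts the upper bound of $4\cdot 2^{\level{\serv}}$ established above.

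The main subtlety I expect is the bookkeeping around \emph{which} server position is used when evaluating $\alevel{\cdot}$ at different times: the adjusted level is computed with respect to the server's current location, and that location changes between $\stime{\serv}$ and $\stime{\serv_1}$. What makes the argument go through is precisely that $\creq_{\serv_2}$'s level was untouched in the interval $(\stime{\serv}, \stime{\serv_2})$, so only its distance component of the adjusted level changes, and the triangle inequality through $a_{\serv}$ cleanly bounds how far the server could have drifted. The constant gap of $3$ between $\level{\serv}$ and the adjusted level used to determine $\level{\serv_i}$ is what provides the factor-$16$ slack that defeats the triangle-inequality bound of $4$.
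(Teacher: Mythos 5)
Your proposal is correct and follows essentially the same argument as the paper: you establish the same triangle-inequality bound $\dist{a_{\serv_1}}{\creq_{\serv_2}} \le 4\cdot 2^{\level{\serv}}$ by routing through $a_{\serv}$, and the contradiction you obtain (ineligibility would force $\dist{a_{\serv_1}}{\creq_{\serv_2}} > 2^{\level{\serv}+4}$) is just the contrapositive framing of the paper's conclusion that $\creq_{\serv_2}\in\elig_{\serv_1}$ and hence is served or upgraded by $\serv_1$, so it cannot remain a witness for $\serv$.
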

\begin{proof}
    Suppose, for contradiction that $\serv$ is certified by two services, $\serv_1,\serv_2$, and assume WLOG that $\stime{\serv_1} < \stime{\serv_2}$.
    It must be that the triggering requests $\creq_{\serv_1}, \creq_{\serv_2}$ were witnesses for $\serv$ at $\stime{\serv_1},\stime{\serv_2}$, respectively.
    Thus, these requests were also of level $\level{\serv} + 1$, which implies $\level{\serv_1} = \level{\serv_2} =\level{\serv}+4$.

    We claim that after $\serv_1$, there remain no witness requests for $\serv$, in contradiction to $\creq_{\serv_2}$ being such a witness.
    Indeed, consider the state immediately before $\serv_1$: all witness requests for $\serv$ at that time have level $\level{\serv}+1$, and exist within the ball $\ball{a_\serv}{2^{\level{\serv}}}$ (as they were eligible for $\serv$).
    Both $\creq_{\serv_1},\creq_{\serv_2}$ are witness requests for $\serv$ at that time, and thus:
    Thus,
    \begin{align*}
        \dist{a_{\serv_1}}{\creq_{\serv_2}} &\le \dist{a_{\serv_1}}{\creq_{\serv_1}} + \dist{\creq_{\serv_1}}{a_{\serv}}+ \dist{a_{\serv}}{\creq_{\serv_2}} \\
        &\le  2^{\level{\serv_1} -3} + 2^{\level{\serv}} + 2^{\level{\serv}} \le 2^{\level{\serv_1}}.
    \end{align*}
    Therefore, $\creq_{\serv_2} \in \elig_{\serv_1}$; note that all requests in $\elig_{\serv_1}$ are either served in $\serv_1$ or become witnesses for $\serv_1$.
\end{proof}

\begin{proposition}
    \label{prop:OSD_ServiceCost}
    The cost of a service $\serv$ is at most $O(1) \cdot 2^{\level{\serv}}$.
\end{proposition}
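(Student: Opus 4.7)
The plan is to split the cost of $\serv$ into three pieces: (a) moving the server from $a_\serv$ to the first vertex of the DFS tour on $S$; (b) the DFS tour of $S$ itself; and (c) the final move (either back to $a_\serv$ or to $\creq_\serv$). Every eligible request $\req' \in \elig_\serv$ satisfies $\alevel{\req'} \le \level{\serv}$, so $\ceil{\log \dist{a_\serv}{\req'}} \le \level{\serv}$, i.e., $\dist{a_\serv}{\req'} \le 2^{\level{\serv}}$. Since $\reqs_\serv \subseteq \elig_\serv \cup \{\creq_\serv\}$ and $\creq_\serv$ itself is eligible, the same bound holds for every point of $S$. Thus (a) and (c) each contribute at most $O(2^{\level{\serv}})$.

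The only nontrivial step is bounding $\cost{S}$ in (b). I would distinguish two cases depending on whether the \textbf{for} loop ever executed \textbf{break} on \cref{line:OSD_TreeReachedBudget}. If it did not, then the last check passed with $\cost{S} < 4 \cdot 2^{\level{\serv}}$, and we are done. Otherwise, let $\req^*$ be the request added in the iteration that triggered the break, and let $\reqs_\serv'$ denote the state of $\reqs_\serv$ immediately before $\req^*$ was added, so that $\reqs_\serv = \reqs_\serv' \cup \{\req^*\}$. Since the break did not fire in the previous iteration, $\stree(\reqs_\serv') < 4 \cdot 2^{\level{\serv}}$.

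Now, both $\req^*$ and any fixed request in $\reqs_\serv'$ lie in $\ball{a_\serv}{2^{\level{\serv}}}$, so by the triangle inequality one can extend an optimal Steiner tree on $\reqs_\serv'$ by attaching $\req^*$ at an extra cost of at most $2 \cdot 2^{\level{\serv}}$, giving
\[
    \stree^*(\reqs_\serv) \;\le\; \stree^*(\reqs_\serv') + 2 \cdot 2^{\level{\serv}} \;\le\; \stree(\reqs_\serv') + 2 \cdot 2^{\level{\serv}} \;<\; 6 \cdot 2^{\level{\serv}}.
\]
Since $\stree$ is a $2$-approximation, $\cost{S} = \stree(\reqs_\serv) \le 2\stree^*(\reqs_\serv) < 12 \cdot 2^{\level{\serv}}$. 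The DFS tour therefore costs at most $2 \cdot 12 \cdot 2^{\level{\serv}} = O(2^{\level{\serv}})$.

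Summing (a), (b), and (c) yields $\cost{\serv} = O(1) \cdot 2^{\level{\serv}}$, as required. The only mildly delicate point is the one-iteration overshoot of the budget, which is handled by the $2$-approximation argument above; the rest of the proof is a direct application of the eligibility bound $\dist{a_\serv}{\req'} \le 2^{\level{\serv}}$ together with the triangle inequality.
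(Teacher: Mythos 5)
Your proof is correct and follows essentially the same route as the paper's: bound the Steiner-tree cost by observing that the penultimate iteration was under the budget of $4\cdot 2^{\level{\serv}}$, attach the last request at $O(2^{\level{\serv}})$ extra cost using eligibility and the triangle inequality, invoke the $2$-approximation, and add $O(2^{\level{\serv}})$ for the server's travel to/from the tree. The one small difference is that you bound $\stree^*(\reqs_\serv)$ directly (attaching $\req^*$ to some existing terminal at cost $\le 2\cdot 2^{\level{\serv}}$ via $a_\serv$) while the paper phrases it in terms of $\reqs_\serv\cup\{a_\serv\}$; your version is a touch more careful about what the algorithm actually computes, at the price of a slightly larger constant — either way the $O(1)\cdot 2^{\level{\serv}}$ bound goes through.
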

\begin{proof}
    First, observe the cost of traversing the Steiner tree solution for the set of requests $\reqs_{\serv}$ as formed in the algorithm.
    Note that a possible Steiner tree solution for connecting $\reqs_{\serv} \cup \pc{a_{\serv}}$ would be to use the Steiner tree solution calculated in the penultimate iteration of the loop to connect all requests except $\req'$, where $\req'$ is the final request added to $\reqs_{\serv}$, then connect $\req'$ to $a_{\serv}$ directly.
    The cost of this solution is at most $4\cdot 2^{\level{\serv}}$ (from the condition of the loop) plus $\dist{a_{\serv}}{\req'}$ (which is at most $2^{\level{\serv}}$ since $\req' \in \elig_{\serv}$).
    Overall, the cost of this solution is at most $5\cdot 2^{\level{\serv}}$; thus, the cost of the Steiner tree chosen by the algorithm is at most $10\cdot 2^{\level{\serv}}$, as it uses a 2-approximation.
    The cost of traversing this tree from $\req$ is thus at most $20\cdot 2^{\level{\serv}}$.

    Second, the server possibly moves from its initial position $a_{\serv}$ to $\req$ (if $\serv$ is primary).
    The cost of this is at most $2^{\level{\serv}-1}$.
    Overall, the cost of a service $\serv$ is at most $O(1)\cdot 2^{\level{\serv}}$.
\end{proof}

\begin{lemma}
    \label{lem:OSD_AlgorithmBoundedByPrimaryAndCertified}
    $\alg \le O(1) \cdot \pr{\sum_{\serv \in \pservs} 2^{\level{\serv}} + \sum_{\serv \in \cservs} 2^{\level{\serv}}}$
\end{lemma}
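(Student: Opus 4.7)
The plan is to combine Proposition \ref{prop:OSD_ServiceCost} — which already bounds $\cost{\serv}$ by $O(1)\cdot 2^{\level{\serv}}$ for every service — with a bijection between non-primary services and certified services that preserves level up to an additive constant. Once this bijection is established, splitting $\servs = \pservs \cup (\servs \setminus \pservs)$, summing the per-service bounds, and re-expressing the sum over $\servs\setminus\pservs$ as a sum over $\cservs$ will yield the lemma.

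The key structural claim is that every non-primary service $\serv'$ certifies some earlier service. By definition of non-primary, $\alevel{\creq_{\serv'}} = \level{\creq_{\serv'}}$ at time $\stime{\serv'}$; since \UponRequest initializes $\level{\creq_{\serv'}} \gets -\infty$ and the only location at which request levels are modified is \cref{line:OSD_LevelUpgrade}, some earlier service $\serv$ must have upgraded $\creq_{\serv'}$'s level to $\level{\serv}+1$, and this value has not been changed since (otherwise $\creq_{\serv'}$ would be a witness for a different service, consistent with the same argument). Hence $\creq_{\serv'}$ is a witness for $\serv$ at $\stime{\serv'}$, and so $\serv'$ certifies $\serv$. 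Combined with Proposition \ref{prop:OSD_EveryServiceCertifiedOnce}, this gives a bijection $\phi\colon \servs \setminus \pservs \to \cservs$.

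The next step is to track levels across this bijection. For a non-primary $\serv'$ certifying $\serv = \phi(\serv')$, we have $\alevel{\creq_{\serv'}} = \level{\creq_{\serv'}} = \level{\serv}+1$, hence $\level{\serv'} = \alevel{\creq_{\serv'}} + 3 = \level{\serv} + 4$, so $2^{\level{\serv'}} = 16\cdot 2^{\level{\serv}}$. Putting everything together,
\begin{align*}
    \alg \;=\; \sum_{\serv \in \servs} \cost{\serv}
    &\;\le\; O(1)\cdot \sum_{\serv \in \servs} 2^{\level{\serv}} \\
    &\;=\; O(1)\cdot \pr{\sum_{\serv \in \pservs} 2^{\level{\serv}} + \sum_{\serv' \in \servs\setminus\pservs} 2^{\level{\serv'}}} \\
    &\;=\; O(1)\cdot \pr{\sum_{\serv \in \pservs} 2^{\level{\serv}} + 16 \sum_{\serv \in \cservs} 2^{\level{\serv}}},
\end{align*}
and absorbing the factor of $16$ into the constant yields the lemma.

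The main subtlety is the bijection step — in particular, verifying that the triggering request of any non-primary service must already have had its level upgraded by an earlier service, which relies on the fact that \cref{line:OSD_LevelUpgrade} is the only place request levels change and that \UponRequest initializes them to $-\infty$. Once this is in place, the remaining work is routine bookkeeping on the additive constant $3$ in the rule $\level{\serv} = \alevel{\creq_\serv} + 3$ and on the $O(1)$ factor from Proposition \ref{prop:OSD_ServiceCost}.
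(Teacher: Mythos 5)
Your proof is correct and takes essentially the same approach as the paper: bound each service's cost by $O(1)\cdot 2^{\level{\serv}}$ via Proposition~\ref{prop:OSD_ServiceCost}, and then use the injection from non-primary services to the services they certify (with the fixed level gap of $4$, and injectivity from Proposition~\ref{prop:OSD_EveryServiceCertifiedOnce}) to reindex the sum over $\servs\setminus\pservs$ as a sum over $\cservs$. The only difference is cosmetic: you spell out the reasoning that every non-primary service's triggering request must have been level-upgraded by an earlier service, which the paper treats as immediate from the definitions.
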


\begin{proof}
    Using \cref{prop:OSD_ServiceCost}, we have that $\alg \le O(1) \cdot \sum_{\serv \in \servs} 2^{\level{\serv}} $.
    Consider any non-primary service $\serv \in \servs \backslash \pservs$.
    Since the service is non-primary, it certifies another service $\serv' \in \cservs$, such that $\level{\serv'} = \level{\serv} - 4$.
    Thus, we have that $2^{\level{\serv}} = 16\cdot 2^{\level{\serv'}}$; moreover, \cref{prop:OSD_EveryServiceCertifiedOnce} implies that services are only certified once, and thus $\sum_{\serv \in \servs\backslash \pservs} 2^{\level{\serv}} \le 16 \cdot \sum_{\serv' \in \cservs} 2^{\level{\serv'}}$.
    Overall, we have that
    \[
        \alg \le O(1) \cdot \sum_{\serv \in \servs} 2^{\level{\serv}} \le O(1) \cdot \pr{\sum_{\serv \in \pservs} 2^{\level{\serv}} + \sum_{\serv \in \cservs} 2^{\level{\serv}}}
    \]
\end{proof}

\subsubsection*{Charging Cylinders}

\Cref{lem:OSD_AlgorithmBoundedByPrimaryAndCertified} bounded the cost of the algorithm by the sum of two terms which correspond to primary and certified services.
It remains to charge those terms to the optimal solution.
To this end, we describe a method for charging costs to the optimal solution.

\textbf{Charging balls.}
Recall that $\ball{v}{r}$ denotes the ball of radius $r$ centered at some point $v \in \ms$.
Overloading notation, we use this terminology not only as a set of nodes, but also as a set of edges and ``parts'' of edges that exist within the ball; an informal visualization is given in \cref{subfig:OSD_BallIntersection}.
More formally, $\ball{v}{r}$ contains all edges where both endpoints are in $\ball{v}{r}$.
In addition, when an edge $e$ of weight $w_e$ has exactly one endpoint $u$ in $\ball{v}{r}$, the part of $e$ that belongs to $\ball{v}{r}$ is the segment of weight $r-\dist{v}{u}$ closest to $u$.
It is easy to see that this definition preserves desirable properties for edges in a ball; in particular, note that the edges and parts of edges in $\ball{v_1}{r_1}$ and in $\ball{v_2}{r_2}$ are disjoint if $\dist{v_1}{v_2} > r_1+r_2$.


\begin{figure}
    \centering
    \subfloat[][]{
        \includegraphics[width=0.45\columnwidth]{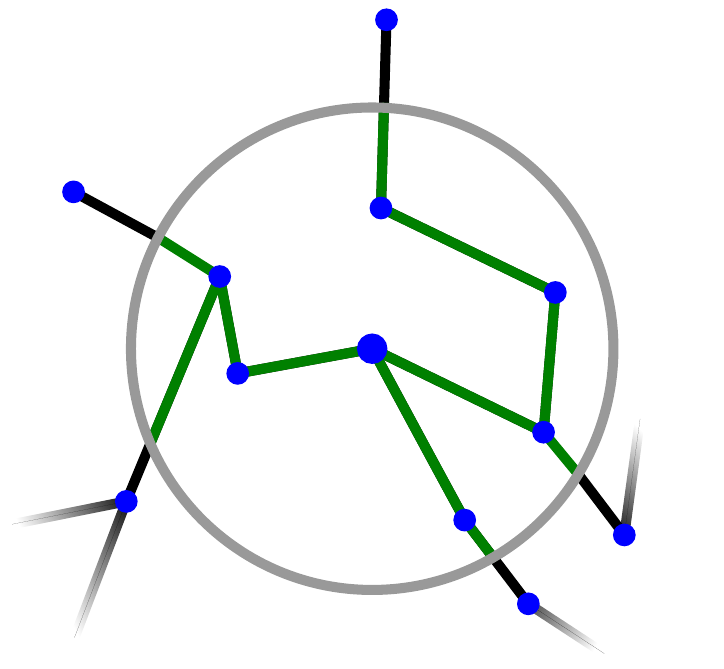}
        \label{subfig:OSD_BallIntersection}
    }
    \hfill
    \subfloat[][]{
        \includegraphics[width=0.45\columnwidth]{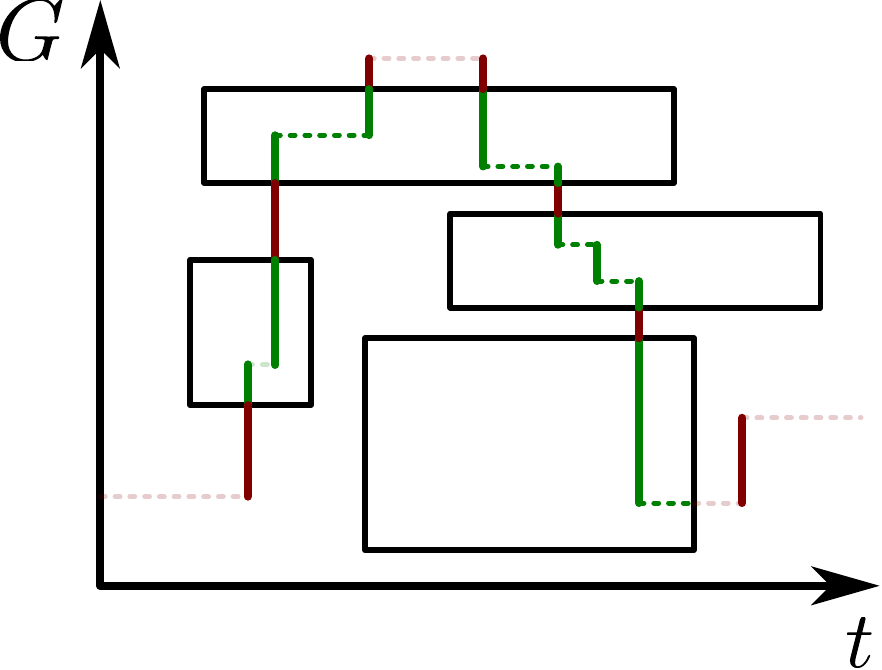}
        \label{subfig:OSD_CylinderIntersection}
    }
    \caption{Visualization of Intersections}
    \label{fig:OSD_Intersections}
\end{figure}

For the sake of charging costs, we create \emph{cylinders}.

\begin{definition}[cylinders]
    A cylinder is an ordered pair $(B,I)$ where $B$ is some shape in the metric space $\ms$, and $I$ is a time interval.
\end{definition}

As hinted by the word ``cylinder'', we later choose $B$ to be a ball (or a perforated ball, defined later).
The movement of the optimal solution inside the cylinder (i.e., during $I$ and inside $B$) is then charged to.

\begin{definition}[shape/cylinder charging]
    We use the following definitions to partition the costs of the optimal solution:
    \begin{enumerate}
        \item Given a subgraph $G' \subseteq \ms$ and a shape $B$, we denote by $\ccap{G'}{B}$ the total weight of edges (or parts of edges) in $G'$ that belong to $B$.
        \item Given a cylinder $\cyl=(B,I)$, define $\ccap{\opt}{\cyl}:=\ccap{G^*_I}{B}$, where $G^*_I$ is the subgraph of edges traversed by $\opt$ during $I$.
    \end{enumerate}
\end{definition}

For a set $\cyls$ of cylinders, define $\ccap{\opt}{\cyls} := \sum_{\cyl \in \cyls} \ccap{\opt}{\cyl}$ for ease of notation.
We now define disjointness for cylinders; a disjoint set of cylinders can charge to the optimal solution simultaneously, as they each charge to different movements of the server.

\begin{definition}[disjoint cylinders]
    A pair of cylinders $(B_1,I_1)$, $(B_2,I_2)$ are called disjoint if either $B_1$, $B_2$ are disjoint, or $I_1$, $I_2$ are disjoint.
\end{definition}

A set of cylinders whose metric shape is a ball can be seen in \cref{subfig:OSD_CylinderIntersection}.
Here, for the sake of visualization, we chose the metric space $\ms$ to be a line.
The cylinders thus appear as rectangles in the time-space plane.
The tour of the optimal server over time appears as a line, in which the dotted segments represent the passage of time and the solid segments show movement through space.
Only the length of solid segments inside a cylinder could be counted towards the intersection.
Thus, since the cylinders in the figure are disjoint, the total charged amount does not exceed the total moving cost of the optimal solution.
This is stated in \cref{obs:OSD_DisjointImpliesCharge}.

\begin{observation}
    \label{obs:OSD_DisjointImpliesCharge}
    Let $\cyls$ be a set of disjoint cylinders.
    Then
    \[
        \ccap{\opt}{\cyls} \le \opt
    \]
\end{observation}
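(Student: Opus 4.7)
The plan is to decompose the cost of $\opt$ into its individual server movements and then to argue that the disjointness of the cylinders ensures each unit of movement is charged at most once across all of $\cyls$. Concretely, since the optimal server moves instantaneously in this model, $\opt$ can be written as $\sum_j \cost{P_j}$, where $j$ indexes the jumps made by the optimal server, $t_j$ is the time of the $j$-th jump, and $P_j$ is the path traversed during that jump (so that $\cost{P_j}$ is its length, and $G^*_I$ is obtained by taking the union of $P_j$ over all $j$ with $t_j \in I$).

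With this decomposition in hand, I would unfold the definitions and swap the order of summation:
\[
    \ccap{\opt}{\cyls} = \sum_{(B,I) \in \cyls} \ccap{G^*_I}{B} = \sum_j \sum_{(B,I) \in \cyls \,:\, t_j \in I} \ccap{P_j}{B}.
\]
The key step is to bound the inner sum by $\cost{P_j}$ for each fixed $j$. Every cylinder $(B,I)$ contributing to this inner sum contains $t_j$ in its time interval, so any two such cylinders share the moment $t_j$ and are therefore not temporally disjoint; the disjointness hypothesis of the observation then forces their shapes $B$ to be pairwise spatially disjoint. By the paper's convention extending the ``belongs to'' relation to parts of edges, each infinitesimal piece of an edge belongs to at most one member of a pairwise-disjoint collection of shapes, and so the portions of $P_j$ counted across these cylinders form a disjoint collection whose total length is at most $\cost{P_j}$.

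Summing over $j$ then yields $\ccap{\opt}{\cyls} \le \sum_j \cost{P_j} = \opt$, which is exactly the desired inequality. The main obstacle I anticipate is the careful verification that ``spatially disjoint shapes'' indeed induce disjoint edge-segments on any path $P_j$: the text only explicitly makes this observation for balls satisfying $\dist{v_1}{v_2} > r_1 + r_2$, whereas later the shapes will be more general (perforated balls). However, the paper's piecewise definition of ``edges and parts of edges in $B$'' is designed precisely so that this additivity transfers, so once the precise notion of disjointness for the relevant shapes is fixed this reduces to a routine check.
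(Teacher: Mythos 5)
The paper does not give a formal proof of this observation; it is stated as an ``Observation'' and justified only by the informal picture in \cref{subfig:OSD_CylinderIntersection} (disjoint rectangles in the time-space plane cannot double-count the optimal server's trajectory). Your decomposition of $\opt$ into jumps $P_j$, swapping the order of summation, and using disjointness at each fixed time $t_j$ to force pairwise spatial disjointness of the relevant shapes, is precisely the natural formalization of that picture, so this is the same approach made rigorous rather than a genuinely different route. One small imprecision: your middle equality $\ccap{G^*_I}{B} = \sum_{j: t_j \in I} \ccap{P_j}{B}$ should be a $\le$, since $G^*_I$ is defined as the \emph{subgraph} (not multigraph) of edges traversed during $I$, and distinct jumps $P_j$ may traverse the same edge; fortunately the inequality points the right way and the rest of the argument is unaffected. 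Your closing caveat about verifying that spatial disjointness of shapes yields disjointness of edge-parts is the right thing to flag — the paper asserts it explicitly only for balls with $\dist{v_1}{v_2} > r_1 + r_2$ — but it holds by construction for the shapes actually used (balls and, later, perforated balls, which are subsets of balls).
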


With \cref{obs:OSD_DisjointImpliesCharge}, the way to use cylinders becomes clear: we want to construct a set of cylinders such that (a) their intersection with $\opt$ is large, and (b) they are disjoint (or can be partitioned into a few disjoint subsets).

\subsubsection*{Bounding Primary Services}

In this subsection, we focus on bounding the cost of primary services.
For every service $\serv$, define $a^*_{\serv}$ to be the final location of the optimum's server at $\stime{\serv}$.
%
Define $\pservsf \subseteq \pservs$ to be the primary services $\serv$ such that $\dist{a^*_{\serv}}{\creq_\serv} \ge 2^{\level{\serv} - 6}$.
\Cref{prop:OSD_BoundingPrimaryByFar} shows that to bound the cost of primary services, it is enough to bound the cost of $\pservsf$.

\begin{proposition}
    \label{prop:OSD_BoundingPrimaryByFar}
    $\sum_{\serv \in \pservs} 2^{\level{\serv}} \le O(1)\cdot \opt + O(1) \cdot \sum_{\serv \in \pservsf} 2^{\level{\serv}}$.
\end{proposition}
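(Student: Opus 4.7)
The plan is to chain all primary services in time order and apply the triangle inequality to $\opt$'s server trajectory. Let $\serv_1, \serv_2, \ldots, \serv_k$ denote the primary services sorted by $\stime{\cdot}$. A crucial structural observation is that non-primary services leave the algorithm's server in place, while primary services relocate it to the triggering request; hence for every $i \ge 1$ I have $a_{\serv_{i+1}} = \creq_{\serv_i}$. Since $\serv_{i+1}$ is primary, its adjusted level is dictated by the distance $\dist{a_{\serv_{i+1}}}{\creq_{\serv_{i+1}}}$, giving the key inequality
\[
    \dist{\creq_{\serv_i}}{\creq_{\serv_{i+1}}} \;=\; \dist{a_{\serv_{i+1}}}{\creq_{\serv_{i+1}}} \;\ge\; 2^{\level{\serv_{i+1}} - 4}.
\]

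Writing $\mu_\serv := \dist{a^*_\serv}{\creq_\serv}$ for $\opt$'s offset from the triggering request (so that $\mu_\serv < 2^{\level{\serv}-6}$ exactly when $\serv \in \pservs \setminus \pservsf$), the triangle inequality yields $\dist{a^*_{\serv_i}}{a^*_{\serv_{i+1}}} \ge 2^{\level{\serv_{i+1}}-4} - \mu_{\serv_i} - \mu_{\serv_{i+1}}$. The time intervals $(\stime{\serv_i}, \stime{\serv_{i+1}}]$ are pairwise disjoint, and $\opt$'s movement in each dominates $\dist{a^*_{\serv_i}}{a^*_{\serv_{i+1}}}$, so summing produces
\[
    \opt \;\ge\; \tfrac{1}{16}\sum_{i \ge 2} 2^{\level{\serv_i}} \;-\; 2 \sum_{i} \mu_{\serv_i}.
\]
I then split the $\mu$-sum by membership in $\pservsf$. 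For $\serv \in \pservs \setminus \pservsf$ the bound $\mu_\serv < 2^{\level{\serv}-6}$ makes the contribution at most $\tfrac{1}{64} \sum_{\pservs \setminus \pservsf} 2^{\level{\serv}}$, which is absorbed into the left-hand side after rearrangement; the single missing boundary term $2^{\level{\serv_1}}$ is handled separately (absorbed into the $\pservsf$ term when $\serv_1 \in \pservsf$, and bounded analogously otherwise).

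The main obstacle is controlling the remaining contribution $\sum_{\serv \in \pservsf} \mu_\serv$ within the budget $O(\opt) + O\!\left(\sum_{\serv \in \pservsf} 2^{\level{\serv}}\right)$, since a priori $\mu_\serv$ for $\serv \in \pservsf$ is not bounded in terms of $2^{\level{\serv}}$. To resolve this, I would split each such $\mu_\serv$ into a \emph{moderate} component of size $O(2^{\level{\serv}})$, absorbed into the $\pservsf$ term on the right-hand side, and an \emph{excess} component. Since $\opt$ must serve $\creq_\serv$ at some time in $[\rlt{\creq_\serv}, \stime{\serv}]$ and then reach $a^*_\serv$ by $\stime{\serv}$, any excess of $\mu_\serv$ beyond $O(2^{\level{\serv}})$ forces $\opt$ to make a correspondingly long movement in a dedicated time window. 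Choosing these windows disjointly across different $\pservsf$ services -- for instance via a cylinder-style argument with small balls around each $\creq_\serv$ (analogous to the cylinder machinery developed later in the analysis) -- charges the excess cleanly to $\opt$ without double counting. Combining this charge with the telescoping inequality above yields the desired bound.
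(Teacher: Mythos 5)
The high-level idea (telescoping $\opt$'s server trajectory across primary services) is in the same spirit as the paper's potential-function argument, but your pairing of terms in the triangle inequality leaves a residual that you cannot bound, and you correctly identify this as an obstacle --- the issue is that your proposed fix does not work.

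Concretely, you bound $\dist{a^*_{\serv_i}}{a^*_{\serv_{i+1}}} \ge \dist{\creq_{\serv_i}}{\creq_{\serv_{i+1}}} - \mu_{\serv_i} - \mu_{\serv_{i+1}}$, pairing $\mu_{\serv_i}$ against $\mu_{\serv_{i+1}}$. This leaves $\sum_{\serv \in \pservsf}\mu_\serv$ dangling, and $\mu_\serv = \dist{a^*_\serv}{\creq_\serv}$ for $\serv \in \pservsf$ has no a~priori bound in terms of $2^{\level{\serv}}$. Your ``cylinder-style'' rescue cannot close this: a cylinder whose ball has radius $O(2^{\level{\serv}})$ captures at most $O(2^{\level{\serv}})$ of $\opt$'s movement, so it cannot account for an excess of $\mu_\serv$ beyond $O(2^{\level{\serv}})$; and the windows $[\rlt{\creq_\serv},\stime{\serv}]$ you would charge to are not disjoint across services, so the ``chosen disjointly'' step has no justification. (Even if you restrict to equal-level services, as in \cref{prop:OSD_PrimaryCylindersClassDisjoint}, you would lose a logarithmic factor, but the real problem is the cylinder's bounded capacity.)

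The fix, and the paper's actual route, is to pair each $\mu_{\serv}$ against $\nu_\serv := \dist{a_\serv}{a^*_\serv}$ rather than against the neighboring $\mu$. Since $a_{\serv_{i+1}} = \creq_{\serv_i}$, the reverse triangle inequality gives $\dist{a^*_{\serv_i}}{a^*_{\serv_{i+1}}} \ge \nu_{\serv_{i+1}} - \mu_{\serv_i}$, so the telescope controls $\sum_i(\nu_{\serv_i} - \mu_{\serv_i})$ rather than a raw $\mu$-sum; and $|\nu_\serv - \mu_\serv| \le \dist{a_\serv}{\creq_\serv} = O(2^{\level{\serv}})$ is \emph{always} bounded by the algorithm's own movement. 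This is exactly what the potential $\phi(t) = 4\dist{a(t)}{a^*(t)}$ encodes in continuous time: its change across a primary service is $\Delta_\serv = 4(\mu_\serv - \nu_\serv)$, which satisfies $\Delta_\serv \le 4\dist{a_\serv}{\creq_\serv}$ for every primary $\serv$ and $\dist{a_\serv}{\creq_\serv} + \Delta_\serv \le 0$ when $\serv \notin \pservsf$. That sign structure is what lets the $\pservs\setminus\pservsf$ terms drop out and the $\pservsf$ terms be absorbed, with no unbounded residual and no extra charging mechanism needed. As written, your proof has a genuine gap at the $\sum_{\serv\in\pservsf}\mu_\serv$ step.
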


\begin{proof}
    Define the potential function $\phi(t) := 4 \dist{a(t)}{a^*(t)}$, where $a(t),a^*(t)$ are the server locations of the algorithm and the optimum at $t$, respectively.
    Note that the potential function equals $0$ at the beginning of the input, and can only take on positive values.
    For a service $\serv \in \pservs$, define $\Delta_{\serv}$ to be the increase in potential function by the service $\serv$ through the movement of the algorithm's server in $\serv$.
    Note that:
    \begin{enumerate}
        \item The only server movements in the algorithm are in primary services (other services return the server to its previous location).
        \item Increases in potential due to movements in $\opt$ sum to at most $4 \cdot \opt$.
    \end{enumerate}
    Therefore, we have the following:
    \begin{equation}
        \label{eq:OSD_PotentialArgument}
        \sum_{\serv \in \pservs} \dist{\creq_{\serv}}{a_{\serv}} \le 4 \opt + \sum_{\serv \in \pservs} \pr{\dist{\creq_{\serv}}{a_{\serv}} + \Delta_{\serv}}
    \end{equation}
    Consider a service $\serv \in \pservs\backslash\pservsf$, such that  $\dist{a^*(\stime{\serv})}{\creq_\serv} < 2^{\level{\serv} - 6}$.
    In addition, note that the fact that $\serv$ is primary implies that $\level{\serv} = \ceil{\log \dist{a_{\serv}}{\creq_{\serv}}} + 3$, and thus $\dist{a_{\serv}}{\creq_{\serv}} \ge 2^{\level{\serv} - 4}$; we have
    \begin{align*}
        \Delta_{\serv} &= 4\cdot(\dist{a^*_\serv}{\creq_\serv} - \dist{a^*_\serv}{a_{\serv}})\\
        &\le 4\cdot\pr{\dist{a^*_\serv}{\creq_\serv} -\dist{\creq_{\serv}}{a_{\serv}} + \dist{a^*_\serv}{\creq_{\serv}}} \\
        &\le 4 \cdot (-2^{\level{\serv} - 5})=-2^{\level{\serv}-3}
    \end{align*}
    Observing that $\dist{a_{\serv}}{\creq_{\serv}} \le 2^{\level{\req}-3}$, we have $\dist{a_{\serv}}{\creq_{\serv}} + \Delta_{\serv} \le 0$ for every $\serv \in \pservs\backslash\pservsf$.
    Moreover, note that for every $\serv \in \pservs$, it holds that $\Delta_{\serv} \le 4\dist{a_{\serv}}{\creq_{\serv}}$, and thus $\dist{a_{\serv}}{\creq_{\serv}} + \Delta_{\serv} \le 5\dist{a_{\serv}}{\creq_{\serv}}$.

    Combining all observations, we get
    \begin{align*}
        \sum_{\serv \in \pservs} 2^{\level{\serv}} &\le 16\sum_{\serv \in \pservs} \dist{\creq_{\serv}}{a_{\serv}} \\
        &\le 64\cdot \opt + 16 \sum_{\serv \in \pservs} \pr{\dist{\creq_{\serv}}{a_{\serv}} + \Delta_{\serv}} \\
        &\le 64\cdot \opt + 16 \sum_{\serv \in \pservsf} \pr{\dist{\creq_{\serv}}{a_{\serv}} + \Delta_{\serv}} \\
        &\le 64\cdot \opt + 80 \sum_{\serv \in \pservsf} \dist{\creq_{\serv}}{a_{\serv}} \\
        &\le 64\cdot \opt + 10 \sum_{\serv \in \pservsf} 2^{\level{\serv}} \\
    \end{align*}
    where the first inequality is from $\dist{\creq_{\serv}}{a_{\serv}} \ge {2^{\level{\serv}-4}}$, the second inequality is through \cref{eq:OSD_PotentialArgument}, the third inequality is from the fact that for every $\serv \in \pservs\setminus\pservsf$ we have $\dist{\creq_{\serv}}{a_{\serv}} + \Delta_{\serv} \le 0$, the fourth inequality is through $\dist{\creq_{\serv}}{a_{\serv}} + \Delta_{\serv} \le 5\dist{\creq_{\serv}}{a_{\serv}}$, and the final inequality is due to $\dist{\creq_{\serv}}{a_{\serv}} \le {2^{\level{\serv}-3}}$.
\end{proof}

To finish bounding the cost of primary services, it is thus enough to prove the following lemma.

\begin{lemma}
    \label{lem:OSD_BoundingFarPrimary}
    $\sum_{\serv \in \pservsf} 2^{\level{\serv}} \le O(\log \ar) \cdot \opt$.
\end{lemma}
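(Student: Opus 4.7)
The plan is to associate to each $\serv \in \pservsf$ a space-time cylinder $\cyl_\serv$ that (i) captures at least $\Omega(2^{\level{\serv}})$ of the optimum's movement, and (ii) is disjoint from every other cylinder of the same service level. Combined with \cref{obs:OSD_DisjointImpliesCharge} applied level-by-level, and the fact that only $O(\log \ar)$ distinct values of $\level{\serv}$ are possible, this yields the claim.

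For $\serv \in \pservsf$ with $k := \level{\serv}$, set $\cyl_\serv := \pr{\ball{\creq_\serv}{2^{k-7}}, \pb{\rlt{\creq_\serv}, \stime{\serv}}}$. Since $\opt$ must serve $\creq_\serv$ by its deadline, the optimum server is at $\creq_\serv$ at some time $t^\star \in \pb{\rlt{\creq_\serv}, \stime{\serv}}$. By definition of $\pservsf$, at time $\stime{\serv}$ the optimum server lies at distance at least $2^{k-6}$ from $\creq_\serv$, and hence strictly outside $\ball{\creq_\serv}{2^{k-7}}$. The optimum trajectory therefore exits the ball during the time interval of $\cyl_\serv$, incurring at least the radius $2^{k-7}$ of movement inside, so $\ccap{\opt}{\cyl_\serv} \ge 2^{k-7}$.

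The main step is disjointness within a fixed level $k$. Take $\serv_1, \serv_2 \in \pservsf$ with $\level{\serv_1} = \level{\serv_2} = k$ and $\stime{\serv_1} < \stime{\serv_2}$. If $\rlt{\creq_{\serv_2}} > \stime{\serv_1}$ then the two time intervals are disjoint. Otherwise $\creq_{\serv_2}$ is pending at $\stime{\serv_1}$, and I claim $\creq_{\serv_2} \notin \elig_{\serv_1}$: otherwise $\serv_1$ would either serve $\creq_{\serv_2}$ (precluding it from triggering $\serv_2$ later) or, by \cref{line:OSD_LevelUpgrade}, raise $\level{\creq_{\serv_2}}$ to $k+1$; monotonicity of levels would then force $\alevel{\creq_{\serv_2}} \ge k+1$ at $\stime{\serv_2}$, contradicting the fact that $\serv_2$ is primary with $\level{\serv_2}=k$, which requires $\alevel{\creq_{\serv_2}} = k-3$. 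Non-eligibility gives $\alevel{\creq_{\serv_2}} > k$ at $\stime{\serv_1}$; since $\level{\creq_{\serv_2}} \le k-3$ at $\stime{\serv_2}$ and hence also at $\stime{\serv_1}$ by monotonicity, the distance component must dominate, so $\dist{a_{\serv_1}}{\creq_{\serv_2}} > 2^k$. Together with $\dist{a_{\serv_1}}{\creq_{\serv_1}} \le 2^{k-3}$ (from the primality of $\serv_1$), the triangle inequality yields $\dist{\creq_{\serv_1}}{\creq_{\serv_2}} > 2^k - 2^{k-3} > 2 \cdot 2^{k-7}$, so the two balls are disjoint.

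Applying \cref{obs:OSD_DisjointImpliesCharge} within each level then gives $\sum_{\serv \in \pservsf,\, \level{\serv}=k} 2^{k-7} \le \opt$, and since primary services satisfy $\dist{a_\serv}{\creq_\serv} \in (2^{k-4}, 2^{k-3}]$, the level $k$ ranges over only $O(\log \ar)$ integers. Summing over those levels yields $\sum_{\serv \in \pservsf} 2^{\level{\serv}} \le O(\log \ar) \cdot \opt$. I expect the disjointness analysis to be the delicate step: it crucially exploits the specific $+3$ gap between a service's level and its triggering request's adjusted level, and the strict upgrade to $\level{\serv}+1$, to preclude two primary services of the same level from producing overlapping cylinders.
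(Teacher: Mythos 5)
Your proof is correct and follows essentially the same cylinder-charging approach as the paper's: you define a space-time cylinder per service, show $\Omega(2^{\level{\serv}})$ intersection with $\opt$ (via the fact that the optimum visits $\creq_\serv$ inside the ball yet lies well outside it at $\stime{\serv}$), establish within-level disjointness via the same key eligibility-contradiction (if $\creq_{\serv_2} \in \elig_{\serv_1}$, it would be served or upgraded past level $\level{\serv_2}-3$), and sum over the $O(\log \ar)$ possible levels. The only cosmetic difference is that you center the ball at $\creq_\serv$ with radius $2^{\level{\serv}-7}$, whereas the paper's primary cylinder is centered at $a_\serv$ with radius $2^{\level{\serv}-2}$ and the containment $\ball{\creq_\serv}{2^{\level{\serv}-6}} \subseteq \ball{a_\serv}{2^{\level{\serv}-2}}$ is invoked; both choices yield the same bound.
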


\begin{definition}[primary interals and cylinders]
    For every service $\serv \in \pservsf$, we define:
    \begin{enumerate}
        \item The primary time interval $\pivl{\serv} := \I{\rlt{\creq_{\serv}}}{\dlt{\creq_{\serv}}}$.
        \item The primary cylinder $\pcyl{\serv} := \pr{\ball{a_{\serv}}{2^{\level{\serv}-2}}, \pivl{\serv}}$.
    \end{enumerate}
    We also define $\pcyls$ to be the set of all primary cylinders of services from $\pservsf$.
    In addition, for every $i$ we define $\pcyls_i$ to be the set of primary cylinders of level-$i$ services from $\pservsf$.
\end{definition}

\begin{proposition}
    \label{prop:OSD_PrimaryCylinderIntersection}
    For every $\serv \in \pservsf$, it holds that
    \[
        \ccap{\opt}{\pcyl{\serv}} \ge 2^{\level{\serv}-6}
    \]
\end{proposition}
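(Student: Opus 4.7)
The plan is to locate enough of $\opt$'s trajectory inside the ball $\ball{a_{\serv}}{2^{\level{\serv}-2}}$ during the interval $\pivl{\serv}$, by exploiting two facts: $\opt$ must serve $\creq_{\serv}$ before its deadline, and by the definition of $\pservsf$ the optimum's server ends up at distance at least $2^{\level{\serv}-6}$ from $\creq_{\serv}$ at time $\stime{\serv}$.

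First I would pin down the geometry. Since $\serv$ is primary, $\alevel{\creq_{\serv}} = \ceil{\log \dist{a_{\serv}}{\creq_{\serv}}}$, and because the service level is set to $\alevel{\creq_{\serv}} + 3$, we get $\dist{a_{\serv}}{\creq_{\serv}} \le 2^{\level{\serv}-3}$. So $\creq_{\serv}$ lies strictly inside $\ball{a_{\serv}}{2^{\level{\serv}-2}}$, with slack at least $2^{\level{\serv}-2} - 2^{\level{\serv}-3} = 2^{\level{\serv}-3}$ from the boundary. Next, because $\opt$ must serve $\creq_{\serv}$ within $\I{\rlt{\creq_{\serv}}}{\dlt{\creq_{\serv}}} = \pivl{\serv}$, there is a time $t_0 \in \pivl{\serv}$ at which $\opt$'s server is at $\creq_{\serv}$. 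I then focus on the portion $W$ of $\opt$'s movement between $t_0$ and $\stime{\serv}$, which starts at $\creq_{\serv}$ and ends at $a^*_{\serv}$, and whose total length in the graph is at least $\dist{\creq_{\serv}}{a^*_{\serv}}$ by the triangle inequality.

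The argument then splits on whether $W$ ever exits $\ball{a_{\serv}}{2^{\level{\serv}-2}}$. If it does, then the sub-walk from $\creq_{\serv}$ up to the first crossing of the ball boundary is entirely inside the ball; its length is at least the graph distance from $\creq_{\serv}$ to a point outside the ball, which is at least $2^{\level{\serv}-2} - \dist{a_{\serv}}{\creq_{\serv}} \ge 2^{\level{\serv}-3}$. Otherwise $W$ stays inside the ball throughout, so its full length contributes to $\ccap{\opt}{\pcyl{\serv}}$, and this length is at least $\dist{\creq_{\serv}}{a^*_{\serv}} \ge 2^{\level{\serv}-6}$ by the definition of $\pservsf$. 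Either way $\ccap{\opt}{\pcyl{\serv}} \ge 2^{\level{\serv}-6}$.

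The only non-routine point is the partial-edge bookkeeping at the moment the server crosses the ball's boundary. This is handled by the paper's definition of $\ccap{\cdot}{\cdot}$, which is set up so that a walk segment of graph-length $\ell$ lying inside a ball contributes exactly $\ell$ to the intersection (an edge with one endpoint at distance $d < r$ from the center contributes its length-$(r-d)$ portion closest to that endpoint). Thus the geometric picture transfers directly to the weighted-graph setting, and assembling the two cases yields the desired bound. I do not expect any serious obstacle here; the main delicate step is simply asserting that $W$ has a well-defined ``first exit point'' from the ball, which follows because $W$ starts inside the ball and any edge of $W$ leaving the ball is cut by the boundary at a unique point of graph-distance $2^{\level{\serv}-2}$ from $a_{\serv}$.
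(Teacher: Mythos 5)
Your proof is correct and uses essentially the same key facts as the paper's: $\opt$ is at $\creq_\serv$ at some time in $\pivl{\serv}$, and is at distance at least $2^{\level{\serv}-6}$ from it at $\stime{\serv}$. The paper avoids your case split by counting inside the smaller ball $\ball{\creq_\serv}{2^{\level{\serv}-6}} \subseteq \ball{a_\serv}{2^{\level{\serv}-2}}$, which the optimum's server starts at the center of and necessarily exits, yielding the radius $2^{\level{\serv}-6}$ in one step.
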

\begin{proof}
    Since $\serv \in \pservsf$, we know that the server of the optimal solution was at $\creq_{\serv}$ somewhere during $\pivl{\serv}$, but was outside $\ball{\creq_{\serv}}{2^{\level{\serv}-6}}$ at $\stime{\serv}$; thus, the optimal solution incurred a cost of at least $2^{\level{\serv}-6}$ inside $\ball{\creq_{\serv}}{2^{\level{\serv}-6}}$ during $\pivl{\serv}$.
    Observing that $\ball{\creq_{\serv}}{2^{\level{\serv}-6}} \subseteq \ball{a_{\serv}}{2^{\level{\serv}-2}}$ implies that $\ccap{\opt}{\pcyl{\serv}} \ge 2^{\level{\serv}-6} $.
\end{proof}

\begin{proposition}
    \label{prop:OSD_PrimaryCylindersClassDisjoint}
    For every $i$, $\pcyls_i$ is a set of disjoint cylinders.
\end{proposition}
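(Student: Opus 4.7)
The plan is to proceed by contradiction: suppose two level-$i$ services $\serv_1,\serv_2 \in \pservsf$ have primary cylinders $\pcyl{\serv_1}, \pcyl{\serv_2}$ that are not disjoint, and WLOG that $\stime{\serv_1} < \stime{\serv_2}$. Non-disjointness of the balls gives $\dist{a_{\serv_1}}{a_{\serv_2}} \le 2\cdot 2^{i-2} = 2^{i-1}$; non-disjointness of the time intervals, together with $\stime{\serv_1} < \stime{\serv_2}$, forces $\rlt{\creq_{\serv_2}} < \stime{\serv_1}$, so $\creq_{\serv_2}$ has already arrived and is still pending at the moment $\serv_1$ is triggered.

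Next, I would extract two quantitative facts from the primary property of $\serv_2$. Being primary means $\alevel{\creq_{\serv_2}} \neq \level{\creq_{\serv_2}}$ at $\stime{\serv_2}$, so the adjusted level is determined by the distance; together with $\level{\serv_2} = \alevel{\creq_{\serv_2}} + 3 = i$ (all quantities evaluated at $\stime{\serv_2}$) this yields $\dist{a_{\serv_2}}{\creq_{\serv_2}} \le 2^{i-3}$ and $\level{\creq_{\serv_2}} \le i-4$ at $\stime{\serv_2}$. Since levels never decrease, $\level{\creq_{\serv_2}} \le i-4$ at $\stime{\serv_1}$ as well.

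Chaining these bounds through the triangle inequality,
\[
  \dist{a_{\serv_1}}{\creq_{\serv_2}} \le \dist{a_{\serv_1}}{a_{\serv_2}} + \dist{a_{\serv_2}}{\creq_{\serv_2}} \le 2^{i-1} + 2^{i-3} < 2^i,
\]
so $\ceil{\log \dist{a_{\serv_1}}{\creq_{\serv_2}}} \le i$, and combined with the level bound this gives $\alevel{\creq_{\serv_2}} \le i = \level{\serv_1}$ at $\stime{\serv_1}$. Hence $\creq_{\serv_2}$ is eligible for $\serv_1$. From the algorithm, an eligible request is either served by $\serv_1$---impossible, since $\creq_{\serv_2}$ must remain pending until $\stime{\serv_2}$ in order to trigger $\serv_2$---or has its level raised to $\level{\serv_1}+1 = i+1$, which contradicts $\level{\creq_{\serv_2}} \le i-4$ at $\stime{\serv_2}$. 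The only real subtlety is bookkeeping which anchor and which time the adjusted level is evaluated at; the constant-$3$ gap built into the definition $\level{\serv} = \alevel{\creq_\serv}+3$ is precisely the slack that makes the triangle inequality close.
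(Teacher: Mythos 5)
Your proof is correct and follows essentially the same path as the paper's: assume non-disjointness, deduce from the overlap that $\creq_{\serv_2}$ was released before and pending at $\stime{\serv_1}$, use the triangle inequality with the cylinder radii to show $\dist{a_{\serv_1}}{\creq_{\serv_2}} \le 2^{i}$, conclude $\creq_{\serv_2}\in\elig_{\serv_1}$, and derive a contradiction (either served, or level raised to $i+1$, conflicting with $\serv_2$ being a primary level-$i$ service). You simply make the quantitative bookkeeping ($\level{\creq_{\serv_2}}\le i-4$ at $\stime{\serv_2}$, hence also at $\stime{\serv_1}$) a bit more explicit than the paper does.
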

\begin{proof}
    Assuming otherwise, there exist $i$ and two level-$i$ services $\serv_1, \serv_2 \in \pservsf$ such that $\pcyl{\serv_1}, \pcyl{\serv_2}\in \pcyls_i$ are not disjoint.
    This implies that $\pivl{\serv_1}\cap \pivl{\serv_2} \neq \emptyset$; hence, WLOG, assume that $\stime{\serv_1} \in \pivl{\serv_2} = \I{\rlt{\creq_{\serv_2}}}{\dlt{\creq_{\serv_2}}}$.
    Thus, $\creq_{\serv_2}$ was pending during $\serv_1$, and moreover had level at most $ \level{\serv_2}$.
    But since $\pcyl{\serv_1}, \pcyl{\serv_2}$ are not disjoint, we have $\dist{a_{\serv_1}}{a_{\serv_2}} \le 2^{\level{\serv_1} -1}$, but this implies that
    \[
        \dist{a_{\serv_1}}{\creq_{\serv_2}} \le \dist{a_{\serv_1}}{a_{\serv_2}} + \dist{a_{\serv_2}}{\creq_{\serv_2}} \le 2^{\level{\serv_1} -1} + 2^{\level{\serv_1} -3} \le 2^{\level{\serv_1}}
    \]
    Thus, $\creq_{\serv_2} \in \elig_{\serv_1}$.
    But requests in $\elig_{\serv_1}$ are either served by $\serv_1$ or have their level increased to $\level{\serv_1}+1$, in contradiction to $\serv_2$ being primary.
\end{proof}

\begin{proof}
    [Proof of \cref{lem:OSD_BoundingFarPrimary}]
    The following holds:
    \begin{align*}
        \sum_{\serv \in \pservsf} 2^{\level{\serv}} &\le O(1)\cdot \sum_{\text{level }i} \ccap{\opt}{\pcyls_i} \\
        &\le O(1)\cdot \sum_{\text{level }i} \opt \\
        & O(\log \ar) \cdot \opt
    \end{align*}
    where the first inequality is due to \cref{prop:OSD_PrimaryCylinderIntersection}, the second inequality is due to \cref{prop:OSD_PrimaryCylindersClassDisjoint}, and the third inequality is due to the fact that there are only $O(\log \ar)$ possible classes for primary services.
\end{proof}

\subsubsection*{Bounding Certified Services}

In this subsection, we focus on bounding the cost of certified services; specifically, we prove the following lemma.

\begin{lemma}
    \label{lem:OSD_BoundingCertified}
    $\sum_{\serv \in \cservs} 2^{\level{\serv}} \le O(\log (\ar\npoint)) \cdot \opt$.
\end{lemma}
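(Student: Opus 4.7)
The plan is to mirror the primary cylinder analysis, introducing \emph{certified cylinders} and charging each one against $\opt$. For each $\serv \in \cservs$, I would define $\ccyl{\serv} := (\ball{a_\serv}{c\cdot 2^{\level{\serv}}}, \civl{\serv})$ for a constant $c$ and a time interval $\civl{\serv}$ chosen to capture when $\opt$ must serve the requests in $\reqs_\serv$; a natural starting choice is $[\stime{\serv},\ftime{\serv}]$. I would then partition these cylinders by the level of $\serv$ into families $\ccyls_i$, analogous to the primary case.

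For the per-cylinder intersection bound, the key observation is that any certified service must have broken out of the loop in \cref{alg:OSD}: otherwise $\elig_\serv = \reqs_\serv$ and no request is upgraded at \cref{line:OSD_LevelUpgrade}, so nothing could later certify $\serv$. Hence $\stree(\reqs_\serv) \ge 4\cdot 2^{\level{\serv}}$, and the 2-approximation property yields $\stree^*(\reqs_\serv) \ge 2\cdot 2^{\level{\serv}}$. Because $\reqs_\serv \subseteq \ball{a_\serv}{2^{\level{\serv}}}$, any subgraph of $\ms$ that visits all of $\reqs_\serv$ must carry weight $\Omega(2^{\level{\serv}})$ inside $\ball{a_\serv}{c\cdot 2^{\level{\serv}}}$. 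Applying this to the subgraph of edges traversed by $\opt$ during $\civl{\serv}$ gives $\ccap{\opt}{\ccyl{\serv}} = \Omega(2^{\level{\serv}})$, provided $\civl{\serv}$ is chosen long enough to contain $\opt$'s visits to $\reqs_\serv$.

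For the disjointness, I expect to partition the certified cylinders into $O(\log(\ar\npoint))$ pairwise-disjoint classes. A factor of $O(\log \ar)$ comes from the number of distinct feasible levels $i$, exactly as in \cref{prop:OSD_PrimaryCylindersClassDisjoint}. Within a fixed level $i$, the clean spatial-temporal collision argument from the primary case no longer applies, since there is no contradiction in two level-$i$ certified services overlapping. I would further split each $\ccyls_i$ into $O(\log \npoint)$ subclasses via a packing argument on the metric space, for instance by showing that only $O(\log \npoint)$ level-$i$ cylinders can mutually overlap at any spacetime point, or by greedily extracting disjoint cylinders using the $\npoint$ possible server locations. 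Combining the intersection lower bound with \cref{obs:OSD_DisjointImpliesCharge} then yields the lemma.

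The main obstacle is designing $\civl{\serv}$ so that both requirements hold simultaneously: it must be wide enough, possibly extending backwards in time, to contain $\opt$'s visits to $\reqs_\serv$, since $\opt$ may serve any request anywhere in its release-to-deadline window, potentially well before $\stime{\serv}$; yet structured enough that level-$i$ cylinders still partition into few disjoint subclasses. Balancing these two goals is the crux of the certified-services analysis.
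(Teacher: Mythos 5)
Your skeleton matches the paper's: you correctly identify that a certified service must have hit the budget break, so $\stree^*(\reqs_\serv) = \Omega(2^{\level{\serv}})$, and you correctly invoke a Steiner-tree ball-restriction argument (the paper's \cref{prop:OSD_BallIntersection}) to turn this into a lower bound on $\opt$'s movement inside a slightly larger ball. You also correctly flag the design of $\civl{\serv}$ as the crux. But your plan for the two remaining pieces — defining $\civl{\serv}$ and establishing disjointness — is where the proposal breaks down, and in a way that would not recover the stated bound.

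First, you claim that ``within a fixed level $i$, the clean spatial-temporal collision argument from the primary case no longer applies,'' and propose to compensate by further splitting each $\ccyls_i$ into $O(\log\npoint)$ subclasses. That would yield $O(\log\ar \cdot \log\npoint)$ disjoint families, hence an $O(\log\ar \cdot \log\npoint)$ bound — strictly weaker than the lemma's $O(\log(\ar\npoint)) = O(\log\ar + \log\npoint)$. In fact the paper does prove that same-level certified cylinders are pairwise disjoint (\cref{prop:OSD_CertifiedCylindersDisjoint}), so no extra splitting is needed. The mechanism you are missing is \cref{prop:OSD_CertifierBetweenCertified}: if $\serv_1,\serv_2$ are same-level certified services whose centers are within $6\cdot 2^{\level{\serv}}$ and $\stime{\serv_1}<\stime{\serv_2}$, then the (higher-level) service that certifies $\serv_1$ must occur strictly between $\ftime{\serv_1}$ and $\stime{\serv_2}$. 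The argument is a witness-exhaustion one: a witness of $\serv_1$ triggering before $\ftime{\serv_1}$ is impossible because $\serv_1$ served all eligible requests with deadlines $\le\ftime{\serv_1}$; a witness triggering after $\stime{\serv_2}$ is impossible because whichever of the two certifiers comes first would consume the other's witnesses.

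Second, once you have this between-ness property, the correct $\civl{\serv}$ falls out: set $\ptime{\serv}$ to be $\ftime{\serv'}$ for the most recent same-level certified $\serv'$ with $\ftime{\serv'}\le\stime{\serv}$ and $\dist{a_{\serv'}}{a_\serv}\le 6\cdot 2^{\level{\serv}}$ (or $-\infty$ if none), and take $\civl{\serv}=\I{\ptime{\serv}}{\ftime{\serv}}$. The between-ness proposition immediately gives that consecutive same-level, spatially-close intervals are disjoint. It also drives the intersection bound you want but cannot yet justify: one must show (\cref{prop:OSD_RequestInCertifiedCylinder}) that each $\req\in\reqs_\serv$ has $\I{\rlt{\req}}{\dlt{\req}}\subseteq\civl{\serv}$, i.e.\ extending the interval backward to $\ptime{\serv}$ captures all of $\opt$'s opportunities to serve $\reqs_\serv$. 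The proof of that containment again uses the intervening higher-level service: any request released before $\ptime{\serv}$ and still pending at $\stime{\serv}$ would have been swept up and up-leveled by the certifier of $\serv'$, contradicting its level at $\stime{\serv}$. Without this single structural proposition, both your $\civl{\serv}$ and your disjointness class-count remain unresolved, and the fallback you sketch gives only the weaker multiplicative bound.
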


\begin{definition}[$\ptime{\serv}$ and $\civl{\serv}$]
    Let $\serv \in \cservs$ be a certified service.
    Let $\serv' \in\cservs$ be the certified service with maximum $\ftime{\serv'}$ subject to $\level{\serv'} = \level{\serv}$, $\ftime{\serv'} \le \stime{\serv}$ and $\dist{a_{\serv'}}{a_{\serv}} \le 6\cdot 2^{\level{\serv}}$.
    We define:
    \begin{enumerate}
        \item The time $\ptime{\serv} := \ftime{\serv'}$ if $\serv'$ exists (otherwise, define $\ptime{\serv} = -\infty$).
        \item The time interval $\civl{\serv} := \I{\ptime{\serv}}{\ftime{\serv}}$; note that $\stime{\serv}\in\civl{\serv}$.
    \end{enumerate}
\end{definition}

\begin{proposition}
    \label{prop:OSD_CertifierBetweenCertified}
    Let $\serv_1, \serv_2 \in \cservs$ be such that $\level{\serv_1} = \level{\serv_2} = \ell$ and $\dist{a_{\serv_1}}{a_{\serv_2}} \le 6\cdot 2^{\ell}$.
    Assuming WLOG that $\stime{\serv_1} < \stime{\serv_2}$, and letting $\serv$ be the level-$(\level{\serv_1}+4)$ service that made $\serv_1$ certified, it holds that $\stime{\serv}\in\I{\ftime{\serv_1}}{\stime{\serv_2}}$.
    (In particular, $\ftime{\serv_1} < \stime{\serv_2}$.)
\end{proposition}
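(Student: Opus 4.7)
The plan is to prove the two-sided bound $\ftime{\serv_1} < \stime{\serv} \le \stime{\serv_2}$ in two separate steps.

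For the lower bound $\stime{\serv} > \ftime{\serv_1}$, I would unpack the fact that $\serv$ certifies $\serv_1$: its triggering request $\creq_\serv$ must be a witness for $\serv_1$ at $\stime{\serv}$, meaning $\creq_\serv$ was upgraded at \cref{line:OSD_LevelUpgrade} during $\serv_1$ rather than served, so $\creq_\serv \in \elig_{\serv_1} \setminus \reqs_{\serv_1}$. The inner loop of $\serv_1$ iterates over $\elig_{\serv_1}$ in strictly increasing deadline order and breaks only once the Steiner-tree budget has been exceeded; hence every request left outside $\reqs_{\serv_1}$ has deadline strictly larger than every deadline inside $\reqs_{\serv_1}$. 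In particular $\dlt{\creq_\serv} > \ftime{\serv_1}$, giving $\stime{\serv} = \dlt{\creq_\serv} > \ftime{\serv_1}$.

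For the upper bound $\stime{\serv} \le \stime{\serv_2}$, I would argue by contradiction. Suppose $\stime{\serv} > \stime{\serv_2}$. Because $\creq_\serv$ is a witness for $\serv_1$ throughout $(\stime{\serv_1},\stime{\serv}]$, its level remains $\ell+1$ and no service in that interval may have $\creq_\serv$ in its eligible set (otherwise its level would be altered at \cref{line:OSD_LevelUpgrade} or it would be served, destroying the witness relation). Combining $\creq_\serv \in \elig_{\serv_1}$ (so $\dist{a_{\serv_1}}{\creq_\serv} \le 2^\ell$) with the distance hypothesis, the triangle inequality yields $\dist{a_{\serv_2}}{\creq_\serv} \le 6\cdot 2^\ell + 2^\ell = 7\cdot 2^\ell$. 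The contradiction should then come from the fact that the server travels from (near) $a_{\serv_1}$ to $a_{\serv_2}$ during the interval $(\stime{\serv_1},\stime{\serv_2}]$: this movement is effected only through primary services, and at least one such primary service must, by the level-distance relationship governing its eligibility ball, have $\creq_\serv$ in its eligible set, contradicting the witnessing invariant.

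The main obstacle is the upper-bound step. The direct attempt ---claiming $\creq_\serv \in \elig_{\serv_2}$ directly--- fails, because $\creq_\serv$'s level $\ell+1$ strictly exceeds $\level{\serv_2}=\ell$, so the contradiction must route through the primary services that moved the server. Making this rigorous requires tracking those primary services carefully, much in the spirit of the triangle-inequality argument used in \cref{prop:OSD_EveryServiceCertifiedOnce}, but now extended across a chain of intermediate services rather than a single one, and aligning the constants so that some intermediate primary service inevitably catches $\creq_\serv$ in its eligibility ball.
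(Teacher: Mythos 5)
Your lower-bound step ($\stime{\serv} > \ftime{\serv_1}$) is correct and matches the paper's: the triggering request $\creq_\serv$ is a witness for $\serv_1$, hence lies in $\elig_{\serv_1} \setminus \reqs_{\serv_1}$, and since $\serv_1$ serves its eligible requests by increasing deadline up to $\ftime{\serv_1}$, the deadline of $\creq_\serv$ (which equals $\stime{\serv}$) must exceed $\ftime{\serv_1}$.

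The upper-bound step has a genuine gap, and moreover the route you sketch is unlikely to close. You want some primary service in $(\stime{\serv_1}, \stime{\serv_2}]$ to catch $\creq_\serv$ in its eligibility ball, since the server moves from near $a_{\serv_1}$ to $a_{\serv_2}$. But eligibility is gated by the \emph{adjusted level}, which requires $\level{\creq_\serv} \le \level{\serv_p}$ in addition to a distance bound. Since $\level{\creq_\serv} = \ell+1$ as long as it remains a witness for $\serv_1$, any primary service $\serv_p$ with $\level{\serv_p} \le \ell$ can \emph{never} have $\creq_\serv$ eligible, no matter how close the server comes. The server can in principle cover the distance $6 \cdot 2^\ell$ via a long chain of low-level primary services, each of radius far below $2^\ell$, always sliding past $\creq_\serv$ without catching it. No triangle-inequality bookkeeping over the chain will force one of those services to have level $\ge \ell+1$, so the contradiction does not materialize.

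The paper's argument sidesteps the movement chain entirely by exploiting that $\serv_2$ is itself certified: let $\serv'$ be the service certifying $\serv_2$. Both $\serv$ and $\serv'$ have level $\ell+4$ (high enough that level is not an obstacle), and both are non-primary so their triggering requests are within $2^{\ell+1}$ of their respective server locations. Then one compares $\stime{\serv}$ and $\stime{\serv'}$: if $\stime{\serv} < \stime{\serv'}$, the triangle inequality (using $\dist{a_{\serv_1}}{a_{\serv_2}} \le 6 \cdot 2^\ell$) shows every witness for $\serv_2$ at $\stime{\serv}$ lies within $10 \cdot 2^\ell < 2^{\ell+4}$ of $a_\serv$ and has level $\ell+1 < \level{\serv}$, hence is in $\elig_\serv$ and ceases to be a witness for $\serv_2$ --- contradicting the later existence of $\serv'$. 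Symmetrically if $\stime{\serv'} < \stime{\serv}$, then $\serv'$ sweeps up $\creq_\serv$ and $\serv$ cannot be triggered. This pairing of the two certifiers is the missing idea; without it the upper bound does not go through.
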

\begin{proof}
    The two possible cases which contradict our proposition are that $\stime{\serv} \le \ftime{\serv_1}$ or that $\stime{\serv} >\stime{\serv_2}$.
    If $\stime{\serv} \le \ftime{\serv_1}$, consider the triggering request $\creq_{\serv}$: this request was a witness for $\serv_1$, and thus in $\elig_{\serv_1}$; however, $\serv_1$ chose which requests from $\elig_{\serv_1}$ to serve according to earliest deadline, and managed to serve all requests in $\elig_{\serv_1}$ of deadline $\le \ftime{\serv_1}$ (by definition).
    The existence of $\creq_{\serv_1}$ as a pending request at $\stime{\serv}$ is therefore a contradiction.

    Otherwise, if $\stime{\serv} >\stime{\serv_2}$, consider the service $\serv'$ which certified $\serv_2$; it must also be the case that $\stime{\serv'} > \stime{\serv_2}$.
    Suppose that $\stime{\serv} < \stime{\serv'}$.
    In this case, observe that all witnesses for $\serv_2$ at $\stime{\serv}$ are in $\elig_{\serv}$; therefore, these witnesses would no longer be witnesses for $\serv_2$ after $\serv$, in contradiction to one of them triggering $\serv'$ and certifying $\serv_2$.
    Similarly, if $\stime{\serv} >\stime{\serv'}$, the service $\serv'$ would leave no witnesses for $\serv_1$ to trigger $\serv$.
    We thus again reached a contradiction.
\end{proof}

\begin{definition}[certified cylinders]
    For a certified service $\serv \in \cservs$, define the certified cylinder $\ccyl{\serv} := (\ball{\creq_{\serv}}{3\cdot 2^{\level{\req}}}, \civl{\serv})$.
    Define $\ccyls$ to be the set of all certified cylinders; in addition, for every $i$ define $\ccyls_i$ to be the set of certified cylinders formed from level-$i$ services.
\end{definition}

\begin{proposition}
    \label{prop:OSD_CertifiedCylindersDisjoint}
    For every $i$, the set $\ccyls_i$ is a set of disjoint cylinders.
\end{proposition}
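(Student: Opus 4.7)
Fix two certified services $\serv_1, \serv_2 \in \cservs$ at level $i$, WLOG with $\stime{\serv_1} < \stime{\serv_2}$. I would split the argument along the same threshold that appears in the definition of $\ptime$: namely, whether $\dist{a_{\serv_1}}{a_{\serv_2}} \le 6\cdot 2^{i}$ or $\dist{a_{\serv_1}}{a_{\serv_2}} > 6 \cdot 2^{i}$. In the first case, I would use a structural/time argument to conclude that $\civl{\serv_1}$ and $\civl{\serv_2}$ are disjoint; in the second, a metric argument to conclude that the balls are disjoint.

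\textbf{Case A: $\dist{a_{\serv_1}}{a_{\serv_2}} \le 6\cdot 2^{i}$.} This is the conceptual heart of the proof, and it is where the preceding \Cref{prop:OSD_CertifierBetweenCertified} delivers exactly what I need. Applied to the pair $\serv_1,\serv_2$ (same level $i$, $\stime{\serv_1}<\stime{\serv_2}$, close in space), it yields $\ftime{\serv_1} < \stime{\serv_2}$. I would then observe that $\serv_1$ is itself a valid candidate for the service $\serv'$ appearing in the definition of $\ptime{\serv_2}$: it has the correct level, satisfies $\ftime{\serv_1} \le \stime{\serv_2}$, and satisfies $\dist{a_{\serv_1}}{a_{\serv_2}} \le 6\cdot 2^{i}$. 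Since $\ptime{\serv_2}$ is defined as the \emph{maximum} such $\ftime{\serv'}$, this gives $\ptime{\serv_2} \ge \ftime{\serv_1}$. Thus $\civl{\serv_1} = \I{\ptime{\serv_1}}{\ftime{\serv_1}}$ and $\civl{\serv_2} = \I{\ptime{\serv_2}}{\ftime{\serv_2}}$ are disjoint half-open intervals (since any $x \in \civl{\serv_2}$ satisfies $x > \ptime{\serv_2} \ge \ftime{\serv_1}$).

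\textbf{Case B: $\dist{a_{\serv_1}}{a_{\serv_2}} > 6\cdot 2^{i}$.} Here I would show the balls are disjoint by triangle inequality. The key fact is that for every certified service $\serv$, the triggering request $\creq_\serv$ satisfies $\alevel{\creq_\serv} = \level{\serv} - 3$, and hence by the definition of adjusted level, $\dist{a_\serv}{\creq_\serv} \le 2^{\level{\serv}-3} = 2^{i-3}$. Combining with the triangle inequality,
\[
    \dist{\creq_{\serv_1}}{\creq_{\serv_2}} \ge \dist{a_{\serv_1}}{a_{\serv_2}} - \dist{a_{\serv_1}}{\creq_{\serv_1}} - \dist{a_{\serv_2}}{\creq_{\serv_2}} > 6 \cdot 2^{i} - 2\cdot 2^{i-3},
\]
which comfortably exceeds $6 \cdot 2^{i}$ once one absorbs the $2^{i-2}$ slack into the constants (either by noting that $\dist{a_\serv}{\creq_\serv}$ is exponentially smaller than $2^{i}$, or by using a slightly strict inequality in the $\ptime$ threshold). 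Since the ball radii sum to $6\cdot 2^{i}$, the balls $\ball{\creq_{\serv_1}}{3\cdot 2^{i}}$ and $\ball{\creq_{\serv_2}}{3\cdot 2^{i}}$ are disjoint.

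\textbf{Anticipated obstacle.} Case A is essentially a clean bookkeeping consequence of the earlier proposition and the chosen definition of $\ptime$; I do not expect substantive difficulty there. The only real subtlety is in Case B, where one must track the constants carefully: the radius $3\cdot 2^{i}$ and the $\ptime$-threshold $6\cdot 2^{i}$ are set up so that the $2^{i-3}$ discrepancy between triggering-request location and server location is dominated, but one should double-check that the strict/non-strict inequalities line up. This is the place where the constants in the definition of $\ccyl{\serv}$ and $\ptime{\serv}$ are genuinely constrained by the proof.
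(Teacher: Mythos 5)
Your Case A is exactly the paper's argument: apply \cref{prop:OSD_CertifierBetweenCertified} to get $\ftime{\serv_1} < \stime{\serv_2}$, observe $\serv_1$ is a valid candidate in the definition of $\ptime{\serv_2}$ so $\ptime{\serv_2} \ge \ftime{\serv_1}$, hence the half-open intervals are disjoint. That part is correct.

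Case B, however, contains an arithmetic error that you flag as a subtlety but then wave away incorrectly. You derive $\dist{\creq_{\serv_1}}{\creq_{\serv_2}} > 6\cdot 2^i - 2\cdot 2^{i-3} = 5.75\cdot 2^i$ and claim this ``comfortably exceeds $6\cdot 2^i$''; it does not --- it is strictly \emph{less} than $6\cdot 2^i$. There is no slack to absorb: you need $\dist{\creq_{\serv_1}}{\creq_{\serv_2}} > 6\cdot 2^i$ for two radius-$3\cdot 2^i$ balls to be disjoint, and your triangle-inequality lower bound falls short by $2^{i-2}$. Also, $\dist{a_\serv}{\creq_\serv}\le 2^{i-3}$ is a fixed constant fraction of $2^i$, not ``exponentially smaller,'' so that phrase does not rescue the estimate either.

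The real resolution is that the center in the definition of $\ccyl{\serv}$ should be read as $a_\serv$, not $\creq_\serv$. This is forced by internal consistency with the rest of the analysis: the proof of \cref{prop:OSD_CertifiedCylinderIntersection} applies \cref{prop:OSD_BallIntersection} with $\reqs_\serv \subseteq \ball{a_\serv}{2^{\level{\serv}}}$ and the $3$-times-larger ball $\ball{a_\serv}{3\cdot 2^{\level{\serv}}}$, and the paper's own proof of this disjointness proposition explicitly checks $\dist{a_{\serv_1}}{a_{\serv_2}} > 6\cdot 2^i$ and declares spatial disjointness immediately. With balls centered at $a_\serv$, Case B has no constant to chase: the sum of the radii is $6\cdot 2^i$ and the centers are more than $6\cdot 2^i$ apart, full stop. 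Your attempt to make the argument work with $\creq_\serv$-centered balls was a reasonable reading of the stated definition, but it leads to a genuine failure of the constants rather than a resolvable slack.
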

\begin{proof}
    Consider any two cylinders $\ccyl{\serv_1}, \ccyl{\serv_2} \in \ccyls_i$.
    If it holds that $\dist{a_{\serv_1}}{a_{\serv_2}} > 6\cdot 2^{i}$, then the cylinders are spatially disjoint and we are done.
    Thus, assume that $\dist{a_{\serv_1}}{a_{\serv_2}} \le 6\cdot 2^{i}$, and WLOG assume that $\stime{\serv_1} < \stime{\serv_2}$.
    \Cref{prop:OSD_CertifierBetweenCertified} implies that $\ftime{\serv_1} < \stime{\serv_2}$; from the definition of $\ptime{\serv_2}$, it is thus also the case that $\ptime{\serv_2} \ge \ftime{\serv_1}$.
    Thus, the intervals $\civl{\serv_1},\civl{\serv_2}$ are disjoint, and thus $\ccyl{\serv_1}, \ccyl{\serv_2}$ are disjoint.
\end{proof}

Having defined the certified cylinders and shown a disjointness property in \cref{prop:OSD_CertifiedCylindersDisjoint}, we want to show that the optimal solution has a large intersection with these cylinders.
We show this by claiming that the release-to-deadline intervals for requests in $\elig_{\serv}$ are contained in $\civl{\serv}$, and thus must be served by the optimal solution in this interval.
Therefore, a Steiner tree spanning $\elig_{\reqs}$ can be charged to the optimal solution in this time interval.
However, one must still claim that enough of this cost takes place within the ball defining the cylinder.
This is possible as the requests of $\elig_{\serv}$ are in $\ball{a_{\serv}}{2^{\level{\serv}}}$, while the radius of $\ccyl{\serv}$ is $3\cdot 2^{\level{\serv}}$.

\begin{figure}
    \centering
    \subfloat[][]{
        \includegraphics[width=0.45\columnwidth]{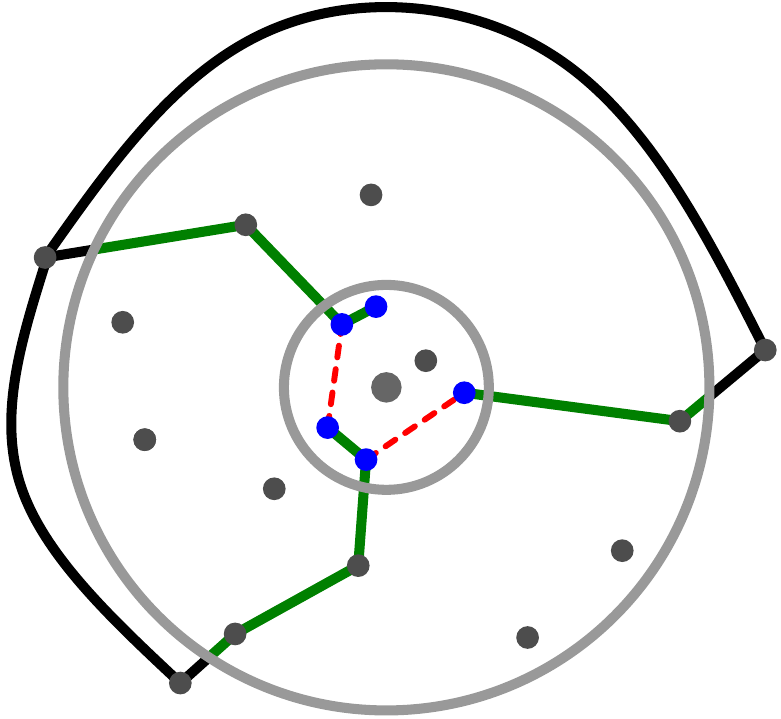}
        \label{subfig:OSD_SteinerBall}
    }
    \hfill
    \subfloat[][]{
        \includegraphics[width=0.45\columnwidth]{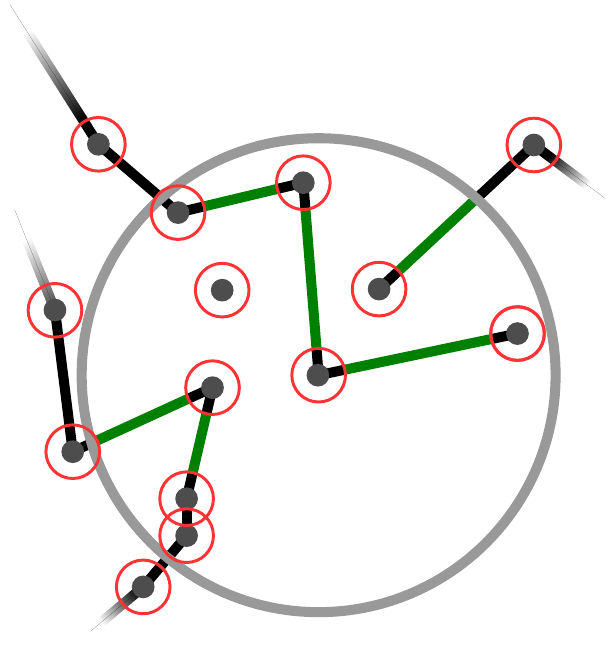}
        \label{subfig:OSD_Perforation}
    }
    \caption{Visualizations from the analysis of \cref{alg:OSD}.}
    \label{fig:OSD_SteinerBall}
\end{figure}

\begin{proposition}
    \label{prop:OSD_BallIntersection}
    Consider a set of points $V \subseteq \ball{\rho}{r}$, and let $G'$ be a subgraph of $\ms$ that connects $V$.
    Then it holds that
    \[
        \stree^*(V) \le 2 \cdot \ccap{G'} {\ball{\rho}{3r}}
    \]
\end{proposition}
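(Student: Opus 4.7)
I will exhibit a connected subgraph of $\ms$ containing $V$ whose total weight is at most $2 \cdot \ccap{G'}{\ball{\rho}{3r}}$; this dominates $\stree^*(V)$. Without loss of generality, replace $G'$ by the connected component containing $V$. Split
$$ \ccap{G'}{\ball{\rho}{3r}} = \cost{H} + P, $$
where $H$ is the subgraph of edges of $G'$ with both endpoints in $\ball{\rho}{3r}$, and $P$ is the total inside-portion weight of the \emph{crossing} edges (edges of $G'$ with exactly one endpoint in $\ball{\rho}{3r}$). Let $C_1, \dots, C_k$ be the connected components of $H$ that contain at least one point of $V$; every $v \in V$ belongs to some $C_i$, since $V \subseteq \ball{\rho}{3r}$.

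\textbf{Construction.} For each $C_i$, pick a representative $v_i \in V \cap C_i$ and a spanning tree $T_i$ of $C_i$. Augment $\bigcup_i T_i$ with the shortest $\ms$-path from each $v_i$ to $\rho$. The resulting subgraph is connected (all $T_i$'s meet at $\rho$) and contains $V$, with total weight at most $\cost{H} + k r$. The case $k = 1$ is trivial: no star is needed, and $\cost{H} \le \ccap{G'}{\ball{\rho}{3r}}$ already suffices. So assume $k \ge 2$, where it remains to show $k r \le \tfrac{1}{2} \ccap{G'}{\ball{\rho}{3r}}$, which then yields a total of at most $\tfrac{3}{2} \ccap{G'}{\ball{\rho}{3r}} \le 2 \ccap{G'}{\ball{\rho}{3r}}$.

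\textbf{Bounding $k$.} For $k \ge 2$, the connectedness of $G'$ forces each $C_i$ to be incident to at least one crossing edge, since otherwise $C_i$ would be isolated in $G'$. For each $C_i$, fix such a crossing edge with inside endpoint $u_i$; by definition it contributes inside weight $w_i = 3r - \dist{\rho}{u_i}$ to $P$. The crux of the proof is the inequality $\cost{C_i} + w_i \ge 2r$. Summing this over $i$ and using disjointness of the $C_i$'s and of the chosen crossing edges gives $2 k r \le \cost{H} + P = \ccap{G'}{\ball{\rho}{3r}}$, as required.

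\textbf{The main obstacle.} Proving $\cost{C_i} + w_i \ge 2r$ is the only nontrivial step. I will split into two sub-cases: if $\dist{\rho}{u_i} \le r$ then $w_i \ge 2r$ immediately; if $\dist{\rho}{u_i} > r$, then any $C_i$-path from $v_i$ to $u_i$ has length at least $\dist{v_i}{u_i} \ge \dist{\rho}{u_i} - \dist{\rho}{v_i} \ge \dist{\rho}{u_i} - r$, and combining with $w_i = 3r - \dist{\rho}{u_i}$ telescopes to exactly $2r$. This step crucially exploits the gap between the radius $r$ enclosing $V$ and the radius $3r$ defining the cylinder, which is where the factor $3$ in the statement enters; neither $\cost{C_i}$ nor $w_i$ alone admits a clean lower bound of $2r$.
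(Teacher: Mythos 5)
Your proof is correct and follows essentially the same route as the paper's: both partition $V$ by the connected components of $G'$ inside $\ball{\rho}{3r}$, show that each such component carries at least $2r$ of weight inside $\ball{\rho}{3r}$ (because it must reach from $\ball{\rho}{r}$ out to the boundary of $\ball{\rho}{3r}$), and augment to a Steiner tree for $V$ at an additional cost bounded by $\ccap{G'}{\ball{\rho}{3r}}$. Where the paper works with components of $G' \cap \ball{\rho}{3r}$ directly and adds $k-1$ direct edges between $V$-points of different components, you split the in-ball weight into whole-edge ($H$) and crossing-edge ($P$) parts, prove $\cost{C_i}+w_i \ge 2r$ via a case split on $\dist{\rho}{u_i}$, and hub through $\rho$; these are bookkeeping variations on the same argument.
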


\begin{proof}
    Consider the edges in $G'$ contained in $\ball{\rho}{3r}$.
    If those edges connect all points in $V$, then they form a valid solution for Steiner tree on $V$, and thus we are done.
    Otherwise, these edges partition $V$ into connected components $V_1, \cdots, V_k$.
    Since $V$ are connected in $G'$, these connected components must be connected somewhere outside $\ball{\rho}{r}$; in particular, connected component contains a path which exits $\ball{\rho}{3r}$; since the connected component contains a point in $V \subseteq \ball{\rho}{r}$, the total weight of this component is at least $2r$.

    We can convert $G\cap \ball{\rho}{3r}$ into a Steiner tree solution for $V$ by connecting at most $k-1$ pairs of points from $V$ directly, such that the points of each pair belong to different components $V_i, V_j$.
    Since the diameter of $V$ is at most $2r$, and the cost of each connected component is at least $2r$, this modification at most doubles the cost of $G'\cap \ball{\rho}{3r}$.
    This completes the proof.
\end{proof}

A visual description of the proof of \cref{prop:OSD_BallIntersection} is given in \cref{subfig:OSD_SteinerBall}.
Here, the terminals $V$ (in blue) are connected by the subgraph $G'$ such that the restriction to $\ball{\rho}{3r}$ creates three connected components.
As the terminals are all in $\ball{\rho}{r}$, adding the two red, dashed edges would augment this restriction to a Steiner tree solution for $V$, where each edge costs at most the diameter of the smaller ball (i.e., $2r$).

\begin{proposition}
    \label{prop:OSD_RequestInCertifiedCylinder}
    For every certified service $\serv \in \cservs$ and request $\req \in \reqs_{\serv}$, it holds that $\I{\rlt{\req}}{\dlt{\req}} \subseteq \I{\ptime{\serv}}{\ftime{\serv}}$.
\end{proposition}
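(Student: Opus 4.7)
The plan is to verify the two containments separately. The upper bound $\dlt{\req} \le \ftime{\serv}$ is immediate since $\ftime{\serv}$ is by definition the maximum deadline of a request in $\reqs_{\serv}$ and $\req \in \reqs_{\serv}$. For the lower bound $\rlt{\req} \ge \ptime{\serv}$, the case $\ptime{\serv} = -\infty$ is vacuous, so let $\serv' \in \cservs$ be the certified service defining $\ptime{\serv} = \ftime{\serv'}$, with $\ell := \level{\serv'} = \level{\serv}$ and $\dist{a_{\serv'}}{a_{\serv}} \le 6 \cdot 2^{\ell}$. I will argue by contradiction, assuming $\rlt{\req} < \ftime{\serv'}$ and producing an intermediate service that contradicts either $\req$'s pending status at $\stime{\serv}$ or the invariant $\level{\req} \le \ell$ at $\stime{\serv}$.

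The key tool is \cref{prop:OSD_CertifierBetweenCertified}, applied to $\serv', \serv \in \cservs$: it produces the (unique, by \cref{prop:OSD_EveryServiceCertifiedOnce}) certifier $\serv^\dagger$ of $\serv'$, of level $\ell + 4$ and with $\stime{\serv^\dagger} \in \I{\ftime{\serv'}}{\stime{\serv}}$. The next step is to show $\req \in \elig_{\serv^\dagger}$. Since $\rlt{\req} < \ftime{\serv'} < \stime{\serv^\dagger} < \stime{\serv}$, the request $\req$ is released before $\stime{\serv^\dagger}$ and, being pending at $\stime{\serv}$, is still pending at $\stime{\serv^\dagger}$. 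Since $\req \in \reqs_{\serv} \subseteq \elig_{\serv}$ we have $\level{\req} \le \ell$ at $\stime{\serv}$, and by monotonicity of levels also at $\stime{\serv^\dagger}$, so $\level{\req} \le \ell + 4 = \level{\serv^\dagger}$. For the distance bound I will chain triangle inequalities through $a_{\serv'}$: the non-primary nature of $\serv^\dagger$ gives $\dist{a_{\serv^\dagger}}{\creq_{\serv^\dagger}} \le 2^{\ell+1}$, the witness status of $\creq_{\serv^\dagger}$ for $\serv'$ gives $\dist{a_{\serv'}}{\creq_{\serv^\dagger}} \le 2^{\ell}$, and the eligibility $\req \in \elig_{\serv}$ combined with the hypothesis on $\serv'$ give $\dist{a_{\serv'}}{\req} \le \dist{a_{\serv'}}{a_{\serv}} + \dist{a_{\serv}}{\req} \le 7 \cdot 2^{\ell}$, so summing yields $\dist{a_{\serv^\dagger}}{\req} \le 10 \cdot 2^{\ell} \le 2^{\ell+4}$, completing the eligibility check.

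Once $\req \in \elig_{\serv^\dagger}$ is established, the algorithm's behavior on eligible requests produces the contradiction: either $\req$ is served in $\serv^\dagger$, contradicting that $\req$ remains pending at $\stime{\serv} > \stime{\serv^\dagger}$; or $\level{\req}$ is raised to $\ell + 5$ at the end of $\serv^\dagger$, contradicting $\level{\req} \le \ell$ at the strictly later time $\stime{\serv}$. The main obstacle is conceptual rather than computational, namely recognizing that \cref{prop:OSD_CertifierBetweenCertified} is exactly the mechanism that forces a sufficiently high-level service into the window $\I{\ftime{\serv'}}{\stime{\serv}}$; once that intermediate service is in hand, the four-step gap between $\ell$ and $\ell+4$ leaves comfortable slack in the triangle bounds for the eligibility check to go through mechanically.
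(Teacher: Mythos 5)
Your proof is correct and follows essentially the same route as the paper's: invoke \cref{prop:OSD_CertifierBetweenCertified} to obtain the level-$(\ell+4)$ certifier $\serv''$ landing in $\I{\ftime{\serv'}}{\stime{\serv}}$, then chain the same four triangle-inequality terms (through $a_{\serv'}$ and $\creq_{\serv''}$) to get $\dist{\req}{a_{\serv''}} \le 10\cdot 2^{\ell} \le 2^{\ell+4}$, forcing $\req \in \elig_{\serv''}$ and hence a level jump to $\ell+5$, contradicting $\req \in \elig_{\serv}$. The only cosmetic difference is that you explicitly dispose of the "$\req$ served by $\serv''$" alternative, which the paper leaves implicit.
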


\begin{proof}
    Define $\ell = \level{\req}$.
    By definition, it holds that $\dlt{\req}\le \ftime{\serv}$.
    It remains to show that $\rlt{\req} \ge \ptime{\serv}$.
    If $\ptime{\serv} = -\infty$, we are done; otherwise, by the definition of $\ptime{\serv}$, there exists a certified service $\serv'$ such that $\level{\serv'} = \ell$, $\ptime{\serv} = \ftime{\serv'}$ and $\dist{a_{\serv'}}{a_{\serv}} \le 6\cdot 2^{\level{\serv}}$.
    Applying \cref{prop:OSD_CertifierBetweenCertified}, there exists a level-$(\ell+4)$ service $\serv''$ in $\I{\ftime{\serv'}}{\stime{\serv}}$, such that $\dist{\creq_{\serv''}}{a_{\serv'}} \le 2^{\ell}$.

    Now, observe that
    \begin{align}
        \dist{\req}{a_{\serv''}} &\le \dist{\req}{a_{\serv}} + \dist{a_{\serv}}{a_{\serv'}}+\dist{a_{\serv'}}{\creq_{\serv''}} + \dist{\creq_{\serv''}}{a_{\serv''}}\nonumber \\
        \label{eq:OSD_RequestCloseToCertifier}&\le  2^{\level{\serv}} + 6\cdot 2^{\level{\serv}} + 2^{\level{\serv}} + 2^{\level{\serv}+1}  = 10\cdot 2^{\level{\serv}} \le 2^{\level{\serv''}}
    \end{align}

    Assume for contradiction that $\rlt{\req} < \ptime{\serv}$, which implies that $\rlt{\req} < \ftime{\serv'}$.
    Since $\req \in \reqs_{\serv}$, we have that $\req$ was pending during $\I{\ftime{\serv'}}{\stime{\serv}}$.
    Thus, $\req$ was pending during $\serv''$; combining with \cref{eq:OSD_RequestCloseToCertifier}, it must be that $\req \in \elig_{\serv''}$.
    But then it must be that $\serv''$ raised the level of $\req$ to $\level{\serv''} + 1 = \level{\serv} + 5$, in contradiction to $\req \in \reqs_{\serv} \subseteq \elig_{\serv}$.
\end{proof}

\begin{proposition}
    \label{prop:OSD_CertifiedCylinderIntersection}
    For every certified cylinder $\ccyl{\serv}$, it holds that $ 2^{\level{\serv}-1} \le \ccap{\opt}{\ccyl{\serv}} $.
\end{proposition}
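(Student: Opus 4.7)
The plan is to establish that $\opt$ is forced to serve every request in $\reqs_\serv$ within the time window $\civl{\serv}$ and then convert this into a lower bound on $\opt$'s cost inside the ball of $\ccyl{\serv}$ via the Steiner-tree comparison in \cref{prop:OSD_BallIntersection}.

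First, I would invoke \cref{prop:OSD_RequestInCertifiedCylinder}: every $\req\in\reqs_\serv$ satisfies $[\rlt{\req},\dlt{\req}]\subseteq\civl{\serv}$, so $\opt$ must visit each such $\req$ during $\civl{\serv}$. Thus the subgraph $G^*_{\civl{\serv}}$ of edges traversed by $\opt$ during $\civl{\serv}$ is connected (because $\opt$'s trajectory is continuous in time) and contains all of $\reqs_\serv$. Since $\reqs_\serv\subseteq\elig_\serv\subseteq\ball{a_\serv}{2^{\level{\serv}}}$, and the triggering request satisfies $\dist{a_\serv}{\creq_\serv}\le 2^{\level{\serv}-3}$ (because $\alevel{\creq_\serv}=\level{\serv}-3$ in both the primary and non-primary cases), one can apply \cref{prop:OSD_BallIntersection} with $V=\reqs_\serv$ and an appropriate center and radius of order $2^{\level{\serv}}$ to conclude $\stree^*(\reqs_\serv)\le 2\cdot\ccap{\opt}{\ccyl{\serv}}$, since the inflated ball produced by the lemma fits inside the cylinder's ball.

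Second, I would lower-bound $\stree^*(\reqs_\serv)$ via the algorithm's loop. Since $\serv$ is certified it possesses a witness request (uniquely so, by \cref{prop:OSD_EveryServiceCertifiedOnce}), and witnesses are exactly the requests upgraded in \cref{line:OSD_LevelUpgrade}, i.e., members of $\elig_\serv\setminus\reqs_\serv$. Hence $\elig_\serv\setminus\reqs_\serv\neq\emptyset$, so the loop in \cref{alg:OSD} did not exhaust $\elig_\serv$ and must have exited via the budget check in \cref{line:OSD_TreeReachedBudget}, giving $\stree(\reqs_\serv)\ge 4\cdot 2^{\level{\serv}}$. The 2-approximation guarantee for $\stree$ then yields $\stree^*(\reqs_\serv)\ge 2\cdot 2^{\level{\serv}}$; combining with the first step gives $\ccap{\opt}{\ccyl{\serv}}\ge 2^{\level{\serv}}\ge 2^{\level{\serv}-1}$, as desired.

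The main obstacle lies in the first step, where one must track constants carefully so that the inflated ball $\ball{\rho}{3r}$ supplied by \cref{prop:OSD_BallIntersection} actually fits inside the cylinder's ball of radius $3\cdot 2^{\level{\serv}}$. The slack from $\dist{a_\serv}{\creq_\serv}\le 2^{\level{\serv}-3}$ is what makes this possible, and it is precisely why the factor $3$ appears in the radius of $\ccyl{\serv}$. The rest of the argument is a fairly direct combination of the time-containment property of \cref{prop:OSD_RequestInCertifiedCylinder} with the algorithmic invariant that any certified service terminates its Steiner-tree loop by hitting the budget threshold.
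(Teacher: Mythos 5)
Your proof follows the paper's argument closely: both invoke \cref{prop:OSD_RequestInCertifiedCylinder} to show that the subgraph $G^*$ traversed by $\opt$ during $\civl{\serv}$ connects $\reqs_\serv$, use \cref{prop:OSD_BallIntersection} (with center $a_\serv$ and radius $2^{\level{\serv}}$) to get $\stree^*(\reqs_\serv)\le 2\ccap{G^*}{\ball{a_\serv}{3\cdot 2^{\level{\serv}}}}$, and use the fact that a certified service must have broken the loop at \cref{line:OSD_TreeReachedBudget} (since it leaves a witness in $\elig_\serv\setminus\reqs_\serv$) to lower-bound the Steiner-tree cost. Your derivation in the second step is actually slightly cleaner than the paper's: you apply the $2$-approximation directly to $\stree(\reqs_\serv)\ge 4\cdot 2^{\level{\serv}}$ to get $\stree^*(\reqs_\serv)\ge 2\cdot 2^{\level{\serv}}$, whereas the paper takes a small detour through $\stree^*(\reqs_\serv\cup\{a_\serv\})$ and the triangle-inequality bound $\stree^*(\reqs_\serv\cup\{a_\serv\})\le\stree^*(\reqs_\serv)+2^{\level{\serv}}$, yielding only $\stree^*(\reqs_\serv)\ge 2^{\level{\serv}}$. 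One small correction: your closing remark that the factor $3$ in the radius of $\ccyl{\serv}$ exists to absorb the offset $\dist{a_\serv}{\creq_\serv}\le 2^{\level{\serv}-3}$ is a misattribution; that factor is simply the inflation constant from \cref{prop:OSD_BallIntersection}, and the offset plays no role once the cylinder's ball is read as centered at $a_\serv$ (which is how the paper's proof of this proposition and of \cref{prop:OSD_CertifiedCylindersDisjoint} both use it, even though the definition of $\ccyl{\serv}$ writes the center as $\creq_\serv$ — an internal inconsistency in the paper that your hand-waving about ``an appropriate center'' quietly papers over rather than resolves).
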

\begin{proof}
    Consider the union of edges traversed by the optimum during $\I{\ptime{\serv}}{\ftime{\serv}}$, and denote it by $G^*$;
    note that $\ccap{\opt}{\ccyl{\serv}} = \ccap{G^*}{\ball{a_\serv}{3\cdot 2^{\level{\serv}}}}$.
    From \cref{prop:OSD_RequestInCertifiedCylinder}, $G^*$ must connect $\reqs_{\serv}$; Noting that $\reqs_\serv \subseteq \ball{a_\serv}{2^{\level{\serv}}}$, \cref{prop:OSD_BallIntersection} implies
    $ \stree^*(\reqs_{\serv}) \le 2\cdot  \ccap{G^*}{\ball{\creq_\serv}{3\cdot 2^{\level{\serv}}}}$.
    Now, note that since $\serv$ is a certified service, \cref{line:OSD_TreeReachedBudget} was reached, and thus the approximate solution of the algorithm for $\stree(\reqs_{\serv} \cup \pc{a_{\serv}})$ had cost at least $4\cdot 2^{\level{\serv}}$.
    Since the Steiner-tree algorithm is a 2-approximation, we have $\stree^*(\reqs_{\serv} \cup \pc{a_{\serv}}) \ge 2\cdot 2^{\level{\serv}}$.
    Finally, since $a_{\serv}$ can be connected directly to any request in $\reqs_{\serv}$ at cost at most $2^{\level{\serv}}$, we have $\stree^*(\reqs_{\serv} \cup \pc{a_{\serv}}) \le \stree^*(\reqs_{\serv}) + 2^{\level{\serv}}$, which therefore implies $\stree^*(\reqs_{\serv}) \ge 2^{\level{\serv}}$.
    Combining, we have that $2^{\level{\serv}-1} \le \ccap{\opt}{\ccyl{\serv}}$.
\end{proof}

\begin{proof}
    [Proof of \cref{lem:OSD_BoundingCertified}]
    The following holds:
    \begin{align*}
        \sum_{\serv \in \cservs} 2^{\level{\serv}} &\le O(1) \cdot \sum_{\text{level }i} \ccap{\ccyls_i}{\opt} \\
        &\le O(1)\cdot \sum_{\text{level }i} \opt \\
        &\le O(\log (\ar\npoint))\cdot \opt
    \end{align*}
    where the first inequality is due to \cref{prop:OSD_CertifiedCylinderIntersection}, the second inequality is due to \cref{prop:OSD_CertifiedCylindersDisjoint}, and the final inequality is due to the fact that the number of possible classes for services is $O(\log (\ar\npoint))$.
\end{proof}

\begin{proof}
    [Proof of \cref{thm:OSD_WeakCompetitiveness}]
    We have
    \begin{align*}
        \alg &\le O(1) \cdot \pr{\sum_{\serv \in \cservs} 2^{\level{\serv}} +\sum_{\serv \in \pservsf} 2^{\level{\serv}} + \opt  } \\
        &\le O(\log (\ar\npoint)) \cdot \opt
    \end{align*}
    where the first inequality is due to \cref{lem:OSD_AlgorithmBoundedByPrimaryAndCertified,prop:OSD_BoundingPrimaryByFar}, and the second inequality is due to  \cref{lem:OSD_BoundingFarPrimary,lem:OSD_BoundingCertified}.
\end{proof}

\subsubsection*{Improved Analysis through Perforated Cylinders}

We now alter our cylinder construction to obtain \cref{thm:OSD_Competitiveness}.
We do so by changing the shape of cylinders in the metric space from a ball to a \emph{perforated ball}.
A perforated ball used in our proofs is formed from a ball of radius $r$ by removing balls of radius $\frac{r}{c\cdot \npoint^2}$ around every point in the metric space, for some constant $c > 1$.
Since the radii of removed balls are small, we claim that the intersection of cylinders using these new, perforated balls with $\opt$ is only smaller by a constant factor from the original intersection.
However, the perforation ensures that cylinders with an $\Omega(\log \npoint)$ gap do not intersect, yielding increased disjointness and better competitiveness.
An informal visualization of perforated balls appears in \cref{subfig:OSD_Perforation}, which shows the intersection of a subgraph with such a ball.
A formal description, as well as the proofs of \cref{lem:OSD_BoundingFarPrimaryImproved,lem:OSD_BoundingCertifiedImproved} appear in \cref{sec:Perf}.

\begin{lemma}
    [improved \cref{lem:OSD_BoundingFarPrimary}]
    \label{lem:OSD_BoundingFarPrimaryImproved}
    \[
        \sum_{\serv \in \pservsf} 2^{\level{\serv}} \le O(\log \npoint) \cdot \opt.
    \]
\end{lemma}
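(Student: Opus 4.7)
The plan is to redo the proof of \cref{lem:OSD_BoundingFarPrimary} with \emph{perforated} primary cylinders in place of the original ones. For each $\serv \in \pservsf$, define the perforated primary cylinder to have time interval $\pivl{\serv}$ and spatial shape $\ball{a_\serv}{2^{\level{\serv}-2}} \setminus \bigcup_{u \in \ms} \ball{u}{2^{\level{\serv}-2}/(c\npoint^2)}$ for a sufficiently large constant $c$. This modification is intended to leave the per-cylinder charge to $\opt$ asymptotically unchanged, while improving the level-count factor in the final summation from $O(\log \ar)$ to $O(\log \npoint)$.

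The first step is a perforated analogue of \cref{prop:OSD_PrimaryCylinderIntersection}: the optimum still incurs cost $\Omega(2^{\level{\serv}})$ inside each perforated primary cylinder. The original argument already contributes weight at least $2^{\level{\serv}-6}$ to $G^*_{\pivl{\serv}} \cap \ball{a_\serv}{2^{\level{\serv}-2}}$ via the movement the optimum must make from $\creq_\serv$ out to distance at least $2^{\level{\serv}-6}$. The new ingredient is an upper bound on the loss due to the perforation $G^*_{\pivl{\serv}} \cap \bigcup_u \ball{u}{\rho}$ where $\rho = 2^{\level{\serv}-2}/(c\npoint^2)$: since $G^*_{\pivl{\serv}}$ is a subgraph of $\ms$, a per-node accounting (each of the $\npoint$ small balls absorbs only $O(\npoint \rho)$ of the subgraph's length, from the edges it intersects) gives total loss $O(\npoint^2 \rho) = O(2^{\level{\serv}}/c)$. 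Choosing $c$ large enough makes this loss a strict fraction of the original charge and preserves an $\Omega(2^{\level{\serv}})$ bound.

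The second step is a cross-level disjointness that strengthens \cref{prop:OSD_PrimaryCylindersClassDisjoint}. Pick $\Delta = \Theta(\log \npoint)$ so that $2^i \le 2^j/(c\npoint^2)$ whenever $j - i \ge \Delta$, and consider $\serv_1, \serv_2 \in \pservsf$ with $\level{\serv_1} \le \level{\serv_2} - \Delta$. Then the entire spatial shape of the level-$\level{\serv_1}$ cylinder is contained in $\ball{a_{\serv_1}}{2^{\level{\serv_2}-2}/(c\npoint^2)}$, and since $a_{\serv_1}$ is a node of $\ms$ this region is excised from the spatial shape of the level-$\level{\serv_2}$ cylinder by the perforation. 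Hence the two perforated shapes are disjoint, and combined with \cref{prop:OSD_PrimaryCylindersClassDisjoint} the full collection of perforated primary cylinders partitions into $O(\Delta) = O(\log \npoint)$ disjoint subfamilies, indexed by $\level{\serv} \bmod \Delta$.

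The proof then concludes exactly as in \cref{lem:OSD_BoundingFarPrimary}: the perforated intersection bound converts $\sum_{\serv \in \pservsf} 2^{\level{\serv}}$ into a sum of $\opt$-charges over the cylinders, and \cref{obs:OSD_DisjointImpliesCharge} applied to each of the $O(\log \npoint)$ disjoint subfamilies bounds each by $\opt$, yielding $\sum_{\serv \in \pservsf} 2^{\level{\serv}} \le O(\log \npoint) \cdot \opt$. I expect the first step to be the main obstacle: the per-node accounting hides a subtle dependence on $\rho$ being small enough relative to the graph's edge structure (a naive ``$2\rho$ per edge per node'' triple-counts badly), and the final choice of $c$ must simultaneously make the perforation-loss bound strict and the level-gap disjointness work out, so its analysis must be extracted carefully in the deferred section on perforated shapes.
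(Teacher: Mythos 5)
Your proposal is correct and follows essentially the same approach as the paper: perforate each primary ball by removing $\ball{u}{r/(c\npoint^2)}$ around every node $u$, bound the resulting loss in $\opt$-charge, and observe that cylinders whose levels differ by $\Omega(\log \npoint)$ become spatially disjoint so that the family partitions into $O(\log \npoint)$ disjoint subfamilies. The accounting subtlety you flag is resolved by doing the loss estimate per edge rather than per small ball (each edge of $G'$ loses at most $2 \cdot r/(c\npoint^2)$, namely the segments near its two endpoints, and $G'$ has at most $\npoint^2$ edges), which is exactly the route the paper takes in \cref{prop:OSD_PerforationIntersectionDifference}.
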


\begin{lemma}
    [improved \cref{lem:OSD_BoundingCertified}]
    \label{lem:OSD_BoundingCertifiedImproved}
    \[
        \sum_{\serv \in \cservs} 2^{\level{\serv}} \le O(\log \npoint) \cdot \opt.
    \]
\end{lemma}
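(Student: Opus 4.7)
The plan is to mirror the proof of \cref{lem:OSD_BoundingCertified}, but with a \emph{perforated} version of the certified cylinders. Fix a sufficiently large constant $c > 2$, and for every $\serv \in \cservs$ define
\[
    \ccyl{\serv}^{+} := \Bigl(\ball{\creq_\serv}{3\cdot 2^{\level{\serv}}} \setminus \bigcup_{u \in V(\ms)} \ball{u}{2^{\level{\serv}}/(c\npoint^{2})},\; \civl{\serv}\Bigr).
\]
The proof then decomposes into two steps. In step~1 I show that each perforated cylinder still absorbs $\Omega(2^{\level{\serv}})$ worth of $\opt$; in step~2 I show that the perforation forces a much finer disjoint partition of the $\ccyl{\serv}^{+}$'s, with only $O(\log \npoint)$ classes rather than the $O(\log(\npoint\ar))$ of the unperforated analysis.

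For step~1, let $G^{*}$ be the set of edges traversed by $\opt$ during $\civl{\serv}$; by construction $\ccap{\opt}{\ccyl{\serv}^{+}}$ equals the weight of $G^{*}$ in the perforated ball. \Cref{prop:OSD_CertifiedCylinderIntersection} already yields $\ccap{G^{*}}{\ball{\creq_\serv}{3\cdot 2^{\level{\serv}}}} \ge 2^{\level{\serv}-1}$, so it suffices to bound how much weight is removed by the perforation holes. The key observation, which I expect to be the main technical ingredient, is an edge-local estimate: on any single edge $e = (u,v)$ of $\ms$ with $\rho := 2^{\level{\serv}}/(c\npoint^{2})$, the union $\bigcup_{y \in V(\ms)} \ball{y}{\rho}$ intersects $e$ only in $(\ball{u}{\rho} \cup \ball{v}{\rho}) \cap e$, since any point on $e$ reaches a node $y \notin \{u,v\}$ only by exiting $e$ through $u$ or $v$; hence $\ball{y}{\rho}\cap e \subseteq (\ball{u}{\rho}\cup\ball{v}{\rho})\cap e$. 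Consequently each edge of $\ms$ loses at most $2\rho$ to perforation, and summing over the at most $\binom{\npoint}{2}$ edges of $G^{*}$ gives a total loss of at most $\rho\,\npoint^{2} = 2^{\level{\serv}}/c$. Choosing $c$ large enough (say $c = 8$) therefore yields $\ccap{\opt}{\ccyl{\serv}^{+}} = \Omega(2^{\level{\serv}})$.

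For step~2, same-level disjointness is inherited from \cref{prop:OSD_CertifiedCylindersDisjoint} since $\ccyl{\serv}^{+} \subseteq \ccyl{\serv}$. For cross-level disjointness, suppose $\serv_1, \serv_2 \in \cservs$ satisfy $\level{\serv_1} - \level{\serv_2} \ge 2 \log \npoint + \log(3c)$. Then $2^{\level{\serv_1}}/(c\npoint^{2}) \ge 3 \cdot 2^{\level{\serv_2}}$, so the hole $\ball{\creq_{\serv_2}}{2^{\level{\serv_1}}/(c\npoint^{2})}$ excised from $\ccyl{\serv_1}^{+}$ already contains $\ball{\creq_{\serv_2}}{3\cdot 2^{\level{\serv_2}}}$ in its entirety, and in particular every (partial) edge of $\ccyl{\serv_2}^{+}$; hence $\ccyl{\serv_1}^{+}$ and $\ccyl{\serv_2}^{+}$ are spatially disjoint. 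Grouping $\cservs$ by the residue of $\level{\serv}$ modulo $\lceil 2\log \npoint + \log(3c)\rceil$ produces $O(\log \npoint)$ classes, within each of which any two cylinders are either at the same level (disjoint by the inherited property) or at levels differing by at least the modulus (disjoint by the hole argument). Combining step~1 with \cref{obs:OSD_DisjointImpliesCharge} then gives $\sum_{\serv \in \cC} 2^{\level{\serv}} \le O(1)\cdot \opt$ within each class $\cC$, and summing over the $O(\log\npoint)$ classes proves the lemma. The main obstacle is the edge-local perforation estimate in step~1; once it is in hand, step~2 and the final assembly follow essentially by construction.
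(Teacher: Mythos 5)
Your proof is correct and mirrors the paper's approach exactly: you perforate each certified cylinder by removing a ball of radius $\Theta(2^{\level{\serv}}/\npoint^2)$ around every vertex, bound the lost charge edge-by-edge (the paper's \cref{prop:OSD_PerforationIntersectionDifference}), and use the holes to make cylinders at levels differing by $\Omega(\log\npoint)$ spatially disjoint, then group by level residue modulo $\Theta(\log\npoint)$ (the paper's \cref{prop:OSD_PerforationDisjointness}). The only cosmetic difference is that you carry these steps out inline rather than invoking the paper's reusable perforation lemmas from \cref{sec:Perf}.
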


\begin{proof}
    [Proof of \cref{thm:OSD_Competitiveness}]
    The proof of the theorem is identical to that of \cref{thm:OSD_WeakCompetitiveness}, except that \cref{lem:OSD_BoundingFarPrimary,lem:OSD_BoundingCertified} are replaced with \cref{lem:OSD_BoundingFarPrimaryImproved,lem:OSD_BoundingCertifiedImproved}.
\end{proof}

    \section{Conclusions and Future Directions}
    \label{sec:Conc}
    In this paper, we introduced the first deterministic algorithms for online service with deadlines/delay for general metric spaces (of subpolynomial competitiveness).
Our algorithms also improve upon the best known randomized algorithms for the problem.
While superconstant lower bounds for this problem are not yet known, there is evidence that the $O(\log \npoint)$ competitiveness shown in this paper is tight.
This is suggested by the fact that previous to our work, $O(\log \npoint)$-competitiveness was the best known bound for both multilevel aggregation (see~\cite{DBLP:conf/focs/AzarT20}) and online service with delay on an equidistant line (see~\cite{DBLP:conf/sirocco/BienkowskiKS18}), which are both special cases of the problem considered in this paper.
Introducing a superconstant lower bound for this problem would thus be a great future direction.
Also note that currently, no separation is known between deterministic and randomized algorithms: for service with deadlines/delay, as well as for the mentioned special cases, the best known algorithms are deterministic, and there exists no known superconstant lower bound even for deterministic algorithms.

In addition, we believe that our techniques could be extended to multiple servers, yielding a deterministic algorithm for $k$-server with delay.
This would require some combination of our techniques with existing deterministic techniques for $k$-server on general metric spaces, e.g., the work function algorithm~\cite{DBLP:journals/jacm/KoutsoupiasP95}.

    \appendix

    \section{Online Service with Delay}
    \label{sec:OSY}
    This section considers the online service with delay problem.
For this problem, we prove the following theorem.

\begin{theorem}
    \label{thm:OSY_Competitiveness}
    There exists a $O(\log \npoint)$-competitive deterministic algorithm for online service with delay.
\end{theorem}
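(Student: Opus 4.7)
The plan is to adapt both the algorithm and the cylinder-based analysis of Section~\ref{sec:OSD} by replacing deadline-driven triggers with delay-driven triggers and the Steiner-tree subroutine with a prize-collecting Steiner tree subroutine, following the blueprint sketched in the techniques overview. First I would design an algorithm that, for every level $\ell$, tracks the total \emph{residual} delay of pending requests whose adjusted level is at most $\ell$, where residual delay discounts the ``investment'' credited by past services; when this quantity reaches $2^{\ell}$ we declare level $\ell$ critical and trigger a service $\serv$ of level $\ell+O(1)$, with triggering request $\creq_{\serv}$ chosen (e.g.) as the request whose delay pushed the level over threshold. Within the service we run a prize-collecting Steiner-tree approximation on the eligible set $\elig_{\serv}$, where each request's penalty is a capped version of its residual delay, so that the service either serves a request (paying movement cost) or invests (offsetting its future delay). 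Unserved eligible requests have their level raised to $\level{\serv}+1$ exactly as in \cref{line:OSD_LevelUpgrade}. Primary services are defined as before from $\alevel{\creq_{\serv}}\neq \level{\creq_{\serv}}$; the only new ingredient is that upon a primary service we additionally scan for a small ball around some point containing a constant fraction of the service's residual delay, moving the server to that ball if one exists and otherwise leaving the server in place.

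Next I would transfer the structural machinery. The definitions of witness, certified service, $\pservs$, $\cservs$, $\ptime{\serv}$ and $\civl{\serv}$ carry over verbatim, and \cref{prop:OSD_EveryServiceCertifiedOnce,prop:OSD_ServiceCost,lem:OSD_AlgorithmBoundedByPrimaryAndCertified,prop:OSD_CertifierBetweenCertified,prop:OSD_CertifiedCylindersDisjoint} go through essentially unchanged, since their proofs only use the level-based eligibility/upgrading rules and the $\Theta(2^{\level{\serv}})$ movement budget of each service, both of which are preserved. This again reduces the task to bounding $\sum_{\serv\in\pservs} 2^{\level{\serv}}$ and $\sum_{\serv\in\cservs} 2^{\level{\serv}}$ against $\opt$, and by the potential argument of \cref{prop:OSD_BoundingPrimaryByFar} the first sum reduces to bounding $\pservsf$.

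For certified services I would reuse the certified cylinder $\ccyl{\serv}=(\ball{\creq_{\serv}}{3\cdot 2^{\level{\serv}}},\civl{\serv})$ and establish a delay-aware analogue of \cref{prop:OSD_CertifiedCylinderIntersection}: since a certified service exhausts its prize-collecting budget, the optimal prize-collecting solution on $\elig_{\serv}$ has value $\Omega(2^{\level{\serv}})$, and every served request has release-delay-window contained in $\civl{\serv}$ by the same witness argument as in \cref{prop:OSD_RequestInCertifiedCylinder}; hence $\opt$ must either pay, inside the ball, a Steiner cost charged via \cref{prop:OSD_BallIntersection}, or pay the delay of any request it left unserved, and the sum of these two charges is $\Omega(2^{\level{\serv}})$. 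Combined with the disjointness from \cref{prop:OSD_CertifiedCylindersDisjoint} and the perforated-ball refinement of \cref{sec:Perf}, this yields an $O(\log\npoint)$ bound on $\sum_{\serv\in\cservs} 2^{\level{\serv}}$, where crucially the ``delay'' part of the charge is also disjoint across cylinders because residual delay at the time of $\serv$ has already netted out investments from earlier services sharing the cylinder's level.

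For primary services the argument branches on the concentrated-versus-spread dichotomy. If the service localized a small ball $B$ holding a constant fraction of the residual delay, I would build a perforated cylinder around $B$ during $\pivl{\serv}$ and argue, as in \cref{prop:OSD_PrimaryCylinderIntersection}, that $\opt$ must either enter $B$ (incurring movement inside the cylinder) or pay an $\Omega(2^{\level{\serv}})$ fraction of the same concentrated delay; otherwise the residual delay is spread over $\ball{a_{\serv}}{2^{\level{\serv}}}$, in which case $\opt$ cannot serve all these requests from a single point and must itself pay delay proportional to $2^{\level{\serv}}$ over $\pivl{\serv}$. In both cases the charge lives in a cylinder that is disjoint across same-level primary services by the same argument as \cref{prop:OSD_PrimaryCylindersClassDisjoint}, and the perforation of \cref{sec:Perf} again replaces $O(\log(\npoint\ar))$ by $O(\log \npoint)$. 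The main obstacle I anticipate is precisely this primary analysis: unlike a deadline, an accumulated delay can be discharged by $\opt$ in two fundamentally different ways (movement or delay payment), so the charging scheme must carefully choose a ball radius that simultaneously captures enough movement in the concentrated case and enough delay in the spread case, while ensuring that delay already charged to a certified cylinder is not double-counted by a primary cylinder at the same level. Assembling the two lemmas together as in \cref{thm:OSD_Competitiveness} then yields the claimed $O(\log\npoint)$ competitiveness.
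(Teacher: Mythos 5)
Your high-level plan matches the paper's: replace deadline triggers by a residual-delay criticality test per level, swap the Steiner-tree subroutine for prize-collecting Steiner tree, move the server in primary services to a small ball capturing a constant fraction of residual delay if one exists, and charge $\opt$ via cylinders that incorporate both movement and delay. However, there are three concrete places where your sketch either diverges from what works or leaves the hard part undone.

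First, your definition of ``primary'' is wrong for the delay setting and would break the analysis. You declare a single triggering request $\creq_{\serv}$ and call $\serv$ primary if $\alevel{\creq_{\serv}}\neq\level{\creq_{\serv}}$. In the delay algorithm there is no canonical single trigger; the paper's trigger set $\creqs_{\serv}$ is \emph{all} pending requests with $\alevel{\req}\le\level{\serv}-3$ and positive residual delay, and $\serv$ is primary iff \emph{every} $\req\in\creqs_{\serv}$ has $\level{\req}<\level{\serv}-4$. This formulation is not cosmetic: in the non-primary case it produces a witness $\req$ with $\level{\req}\in[\level{\serv}-4,\level{\serv}-3]$, which pins the certified service's level to the window $[\level{\serv}-5,\level{\serv}-4]$ (\cref{prop:OSY_CertificationLevelDifference}). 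That two-sided bound is exactly what makes the analogue of \cref{lem:OSD_AlgorithmBoundedByPrimaryAndCertified} go through. Under your single-request definition the level gap between a non-primary service and the service it certifies is only bounded below, not above, so $2^{\level{\serv}}$ cannot be charged to the certified service's $2^{\level{\serv'}}$ up to a constant.

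Second, you omit the delay-specific fact that residual delay at any level is bounded above (the paper's \cref{prop:OSY_NoSuperCritical}: $\trd_i\le 2^{i+1}$ always). This does not follow ``verbatim'' from the deadline proofs: it requires a separate argument that moving the server from $a$ to $a'$ cannot drop a request's adjusted level by more than one, which in turn rests on the distance bounds $2^{\level{\serv}-5}-2^{\level{\serv}-8}\le\dist{a}{a'}\le 2^{\level{\serv}-3}+2^{\level{\serv}-8}$ from \cref{prop:OSY_MovingDistanceBounds}. Without this cap, the $O(2^{\level{\serv}})$ bound on the cost of zeroing residual delay at \cref{line:OSY_ZeroResidualDelay} fails, and so does the analogue of \cref{prop:OSD_ServiceCost}.

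Finally, the obstacle you flag---preventing double-counting of delay between a primary cylinder and a certified cylinder at the same level---is not actually where the difficulty lies. The sums $\sum_{\serv\in\pservs}2^{\level{\serv}}$ and $\sum_{\serv\in\cservs}2^{\level{\serv}}$ are each bounded against $\opt$ \emph{separately}, so cross-family overlap is harmless. What does need care is non-double-counting \emph{within} one family: the paper defines $\pdchg{\serv}$ and $\cdchg{\serv}$ so that the investment counter $\ctr_{\req}$ is raised exactly when delay is charged, which is the mechanism that makes \cref{prop:OSY_DisjointImpliesChargePrimary,prop:OSY_DisjointImpliesChargeCertified} valid. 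The stationary-primary case (your ``spread delay'' branch) also needs a genuine case analysis on the location of $\opt$'s server at $\stime{\serv}$ (\cref{prop:OSY_PrimaryCylinderIntersectionStationary}), which you gesture at but do not develop.
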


\subsection{The Algorithm}

\textbf{Prize-Collecting Steiner tree.}
Similar to the algorithm for deadlines, the algorithm for delay also requires an approximation algorithm for Steiner tree.
However, we now consider the prize-collecting variant of the Steiner tree problem.
In this variant, we are given a set of terminals and a root node; in addition, each terminal has an associated \emph{penalty}.
A solution is a subgraph that connects some subset of the terminals to the root node.
The cost of that solution is the total weight of the subgraph, plus the penalties for terminals that were not connected to the root node.

There exists a 3-approximation for prize-collecting Steiner tree, due to Hajiaghayi et al.~\cite{Hajiaghayi2006}, which we denote $\pcstree$.
We use $\pcstree(U, \pi; r)$ to denote running the approximation over the input graph $\ms$, with terminal set $U$, penalty function $\pi$ (that maps from terminal to its penalty), and root node $r$.
As before, we use $\pcstree(U, \pi; r)$ to refer to the cost of the approximate solution (edge and penalty cost).
We also use $\pcstree^*(U,\pi;r)$ to refer to the optimal solution for the same input (and its cost).

\textbf{Algorithm's description.}
As in the deadline algorithm, every pending request $\req$ has a level $\level{\req}$ (which is initially $-\infty$), and we define the adjusted level of $\req$ to be $\alevel{\req} := \max\pc{\level{\req}, \ceil{\log \dist{\req}{a}}}$, where $a$ is the current location of the server.
In addition, for every request $\req$ the algorithm maintains an investment counter $\ctr_{\req}$ (which is initially $0$).
This counter is raised by a service to pay for (past or future) delay of a request.
We denote by $\level{\req}(t),\alevel{\req}(t), \ctr_{\req}(t)$ the values of $\level{\req},\alevel{\req},\ctr_{\req}$ at time $t$ (if a service takes place at $t$, this refers to the values immediately before the service).
We also define the \emph{residual delay} of $\req$ at $t$ to be $\rylt{\req}{t} := \prp{\ylt{\req}{t} - \ctr_{\req}(t)}$; intuitively, this is the amount of current delay which no service has paid for.

The following definition of a critical level is used to trigger services in the algorithm.

\begin{definition}[critical level]
    Fix any time $t$, and a level $\ell$.

    \begin{enumerate}
        \item Define $\trd_\ell(t)$ to be the total residual delay of requests $\req$ s.t. $\alevel{\req} \le \ell$.
        \item We say that $\ell$ is \emph{critical} if $\trd_\ell(t) \ge 2^{\ell}$.
    \end{enumerate}
\end{definition}

The pseudocode for the algorithm is given in \cref{alg:OSY}.
Whenever a level $\ell$ becomes critical, the function $\UponCritical$ is called, which initiates a service $\serv$;
the level of this service $\level{\serv}$ is set to be $\ell+3$.
The service identifies the triggering set of requests $\creqs_{\serv}$, which are the requests whose total residual delay became critical (that is, requests of adjusted level at most $\level{\serv}-3 = \ell$ with positive residual delay).
If there exists no triggering request of level at least $\level{\serv}-4$ (i.e., at least $\ell - 1$), the service is called primary.

For a primary service $\serv$, the algorithm attempts to identify a new location for placing the server after the service concludes.
If a constant fraction of the residual delay of $\creqs_{\serv}$ exists inside a small radius ball, the center of that ball will be the new location of the server.
Otherwise, the final location of the server will be its starting location.

A service $\serv$ identifies all requests eligible to the service, which are all requests with adjusted level at most $\level{\serv}$; these are denoted $\elig_{\serv}$.
The service then resets the residual delay of all requests in $\elig_{\serv}$.
(Note, in particular, that this resets the residual delay of all triggering requests in $\creqs_{\serv}$.)
Then, the service performs time forwarding: it observes the future delay of requests in $\elig_{\serv}$, and attempts to ``pay'' for those requests until a point in time furthest in the future.
``Paying'' for a request $\req$ until time $\tau$ can be done either through serving that request, or by increasing its investment counter to at least $\ylt{\req}{\tau}$.
Choosing between these options is done through calling $\pcstree$ on the eligible requests with a penalty function which represents the required increase to investment counters.
Specifically, the algorithm finds the first time $\tau$ in which the cost of $\pcstree$ exceeds $c\cdot 2^{\level{\serv}}$, for some constant $c$; the algorithm will serve requests and increase counters according to the $\pcstree$ solution for time $\tau$.

Finally, at the end of the service, eligible requests that are still pending are upgraded to a level higher than that of the service, and the server moves to its new final location (if applicable).

\begin{algorithm}
    \caption{Online Service with Delay}
    \label{alg:OSY}
    \EFn{\UponRequest{$\req$}}{
        set $\level{\req} \gets -\infty$, $\ctr_\req \gets 0$.
    }
    \EFn{\UponCritical{$\ell$}}{
        start a new service, denoted by $\serv$, and set $\level{\serv} \gets \ell +3$.

        let $t$ be the time, $a$ be the server's location, and $\reqs'$ be the set of pending requests.

        let $\creqs_{\serv} \gets \pc{\req' \in \reqs' \middle| \alevel{\req} \le \level{\serv} - 3 \wedge \rylt{\req}{t} > 0}$.
        \label{line:OSY_DefineTriggeringRequests}

        \lIfElse{for every $\req' \in \creqs_{\serv}$ we have $\level{\req'} < \level{\serv} - 4$}{say $\serv$ is primary}{$\serv$ is not primary.}
        \label{line:OSY_SetPrimary}

        \eIf{$\serv$ is primary \And there exists $a'\in\ms$ s.t. $\rylt{R}{t} > 2^{\level{\serv}-4}$ where $R := \creqs_{\serv} \cap \ball{a'}{2^{\level{\serv}-8}}$}{define $a'$ as mentioned.}{set $a' \gets \textsf{Null}$.}
        \label{line:OSY_FindNewLoc}

        set $\elig_{\serv} \gets \pc{\req' \in \reqs' \middle| \alevel{\req} \le \level{\serv}}$.
        \label{line:OSY_DefineEligible}

        \lForEach{$\req' \in \elig_{\serv}$}{set $\ctr_{\req'} \gets \max\pc{\ctr_{\req'}, \ylt{\req'}{t}}$.\label{line:OSY_ZeroResidualDelay}\quad \tcp*[h]{Zero residual delay.}}

        set $\reqs_{\serv} \gets \pc{\req}, S\gets \emptyset$.

        for every time $t' \ge t$, define the penalty function $\pen_{t'}:\elig_{\serv} \to \mathbb{R}^+\cup\pc{0}$ such that $\pen_{t'}(\req') = \prp{\ylt{\req'}{t'} - \ctr_{\req'}}$ for every $\req' \in \elig_{\serv}$.

        let $\tau \ge t$ be the first time in which $\pcstree(\elig_{\serv}, \pen_{\tau}; a) \ge 6\cdot 2^{\level{\serv}}$.
        \label{line:OSY_DefineForwardingTime}

        let $S$ to be the solution $\pcstree(\elig_{\serv}, \pen_{\tau}; a)$, and let $\reqs_{\serv}\subseteq \elig_{\serv}$ be the set of requests served by $S$.

        perform DFS tour of $S$, serving $\reqs_{\serv}$ and finishing at $a$.\label{line:OSY_TraverseTree}

        \ForEach{$\req' \in \elig_{\serv}\backslash \reqs_{\serv}$}{
            set $\ctr_{\req'} \gets \max\pc{\ctr_{\req'}, \ylt{\req'}{\tau}}$.\label{line:OSY_Invest}

            set $\level{\req'} \gets \level{\serv} + 1$.\label{line:OSY_LevelUpgrade}
        }

        \lIf{$a' \neq \textsf{Null}$}{move the server from $a$ to $a'$.}\label{line:OSY_MoveServer}
    }
\end{algorithm}

A level-$\ell$ service is triggered when level $\ell-3$ becomes critical, i.e., when requests of adjusted level at most $\ell-3$ gather large residual delay.
\Cref{fig:OSY_TriggeringResidualDelay} gives some intuition about how the residual delay of those triggering requests are distributed.
In \cref{fig:OSY_TriggeringResidualDelay}, the delay of the triggering requests of a service $\serv$ is shown as a heatmap inside the ball $\ball{a_{\serv}}{2^{\level{\serv}-3}}$ (more delay is a deeper shade of red).
\Cref{subfig:OSY_TriggeringResidualDelaySecondary} shows a pattern that can only belong to a nonprimary service: in primary services, the outer ring $\ball{a_{\serv}}{2^{\level{\serv} -3}} \setminus \ball{a_{\serv}}{2^{\level{\serv}-5}}$ must contain a constant fraction of the residual delay (as shown in the proof of \cref{prop:OSY_MovingDistanceBounds}).
In secondary services, the server's final location is the same as its initial location.
\Cref{subfig:OSY_TriggeringResidualDelayPrimaryStationary} shows a delay pattern which might belong to a primary service.
If this is the case, the server will again finish at its initial location, as delay is not concentrated in a low-radius ball ($a'$ is not defined in \cref{line:OSY_FindNewLoc}).
\Cref{subfig:OSY_TriggeringResidualDelayPrimaryMove} shows a delay pattern which is highly concentrated in a low-radius ball.
Thus, if this pattern belongs to a primary service, the server would move to the center of the low-radius ball at the end of the service.

\begin{figure}

    \centering
    \subfloat[][]{
        \label{subfig:OSY_TriggeringResidualDelaySecondary}
        \includegraphics[width=0.45\columnwidth]{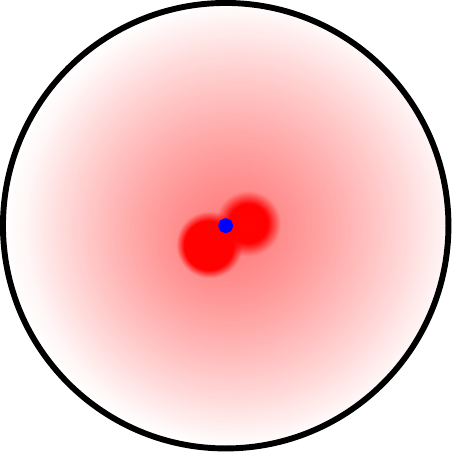}
    }
    \hfill
    \subfloat[][]{
        \label{subfig:OSY_TriggeringResidualDelayPrimaryStationary}
        \includegraphics[width=0.45\columnwidth]{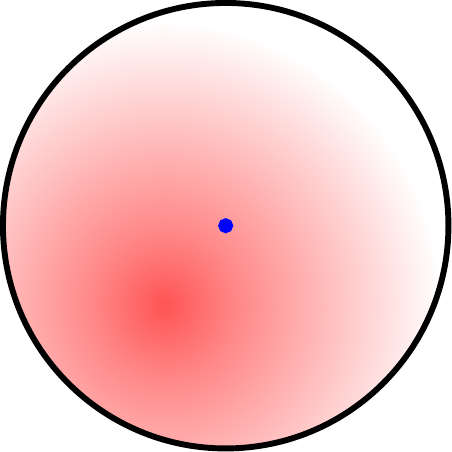}
    }
    \hfill
    \subfloat[][]{
        \label{subfig:OSY_TriggeringResidualDelayPrimaryMove}
        \includegraphics[width=0.45\columnwidth]{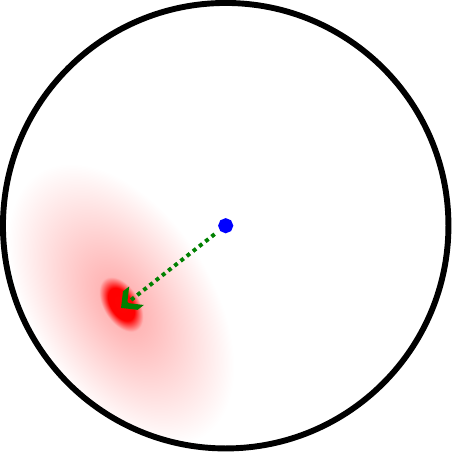}
    }
    \caption{Possible distributions for residual delay of triggering requests.}
    \label{fig:OSY_TriggeringResidualDelay}
\end{figure}

\subsection{Analysis}

We now focus on proving \cref{thm:OSY_Competitiveness}.

\begin{definition}[basic service definitions]
    Let $\serv$ be a service.
    We define:
    \begin{itemize}
        \item The \emph{triggering requests} $\creqs_{\serv}$ to be as defined in \cref{line:OSY_DefineTriggeringRequests}.
        \item The location $a_{\serv}$ to be the initial location of the server when $\serv$ is triggered.
        \item The term $\stime{\serv}$ to be the time of service $\serv$.
        \item The request set $\reqs_{\serv}$ to be the requests served by $\serv$ (i.e., the final value of the variable of that name in $\UponDeadline$).
        \item The request set $\elig_{\serv}$ as defined in \cref{line:OSY_DefineEligible}; these requests are called \emph{eligible for $\serv$}.
        \item The \emph{forwarding time} of $\serv$, denoted $\ftime{\serv}$, to be $\tau$ as defined in \cref{line:OSY_DefineForwardingTime}.
    \end{itemize}
\end{definition}

\begin{definition}[witness requests and certified services]
    \label{defn:OSY_WitnessAndCertified}
    At a certain point in time, a request $\req$ is called a \emph{witness} for a service $\serv$ if its level $\level{\req}$ was last assigned to by $\serv$ (at \cref{line:OSD_LevelUpgrade}).

    For every non-primary service $\serv'$, let $\req \in \creqs_{\serv'}$ be an arbitrary request such that $\level{\req}(\stime{\serv}) \ge \level{\serv'} -4$, and let $\serv$ be the service for which $\req$ is a witness.
    We say that $\serv'$ \emph{certifies} $\serv$, and call $\serv$ a \emph{certified} service.
    (Note that the chosen request $\req$ is unique, such that every service certifies at most one other service.)
\end{definition}

\begin{proposition}
    \label{prop:OSY_CertificationLevelDifference}
    If $\serv'$ certifies $\serv$, then $\level{\serv} + 4 \le \level{\serv'} \le \level{\serv} + 5$.
\end{proposition}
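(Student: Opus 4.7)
The plan is to extract both inequalities directly from the two conditions defining certification, using only the values written to $\level{\req}$ by $\serv$ at \cref{line:OSD_LevelUpgrade}.

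First, I would recall that, by \cref{defn:OSY_WitnessAndCertified}, $\req$ being a witness for $\serv$ at time $\stime{\serv'}$ means that the most recent modification of $\level{\req}$ was performed by $\serv$. Inspecting \cref{line:OSY_LevelUpgrade} of \cref{alg:OSY}, the only place where $\level{\req}$ is changed after initialization, we see that $\serv$ assigned $\level{\req} \gets \level{\serv} + 1$, and so $\level{\req}(\stime{\serv'}) = \level{\serv}+1$.

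Next, to get the lower bound $\level{\serv'} \ge \level{\serv}+4$, I would use that $\req \in \creqs_{\serv'}$. By the definition in \cref{line:OSY_DefineTriggeringRequests}, this requires $\alevel{\req}(\stime{\serv'}) \le \level{\serv'}-3$, and since $\alevel{\req} \ge \level{\req}$ always holds, we get $\level{\serv}+1 = \level{\req}(\stime{\serv'}) \le \level{\serv'}-3$, i.e. $\level{\serv'} \ge \level{\serv}+4$. For the upper bound $\level{\serv'} \le \level{\serv}+5$, the certification rule in \cref{defn:OSY_WitnessAndCertified} specifies that the witness $\req$ used to certify $\serv$ satisfies $\level{\req}(\stime{\serv'}) \ge \level{\serv'}-4$. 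Substituting $\level{\req}(\stime{\serv'}) = \level{\serv}+1$ yields $\level{\serv}+1 \ge \level{\serv'}-4$, i.e. $\level{\serv'} \le \level{\serv}+5$.

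There is essentially no obstacle beyond bookkeeping: the only subtlety is that the value of $\level{\req}$ at time $\stime{\serv'}$ is exactly what $\serv$ set it to, but this is immediate from the witness definition since no later service modified it. Combining the two bounds gives the claimed $\level{\serv}+4 \le \level{\serv'} \le \level{\serv}+5$.
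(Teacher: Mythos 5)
Your proof is correct and follows essentially the same route as the paper's: identify that the witness request has level $\level{\serv}+1$ at $\stime{\serv'}$, then sandwich this via the membership condition for $\creqs_{\serv'}$ (upper bound) and the certification condition (lower bound). You are slightly more explicit than the paper in noting that $\alevel{\req} \ge \level{\req}$ is needed to pass from the adjusted-level condition in \cref{line:OSY_DefineTriggeringRequests} to a bound on $\level{\req}$, which is a fine small clarification.
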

\begin{proof}
    Consider the request $\req \in \creqs_{\serv'}$ which was a witness for $\serv$. It holds that $\level{\req}(\stime{\serv'}) = \level{\serv} + 1 $.
    In addition, it holds that $\level{\req}(\stime{\serv'}) \le \level{\serv'} - 3$ (as $\req \in \creqs_{\serv'}$) and that $\level{\req}(\stime{\serv'}) \ge \level{\serv'} - 4$ (as $\req$ made $\serv'$ certify $\serv$).
\end{proof}

\begin{proposition}
    \label{prop:OSY_NoSuperCritical}
    Throughout the algorithm, for every level $i$ it holds that $\trd_{i} \le 2^{i+1}$ at every point in time.
\end{proposition}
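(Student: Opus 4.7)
The plan is to argue by contradiction, relying on the continuity of $\trd_i$ between services and on the tight critical-level triggering built into \cref{alg:OSY}. Suppose there is a first time $t$ at which $\trd_i(t) > 2^{i+1}$. Since each $\ylt{\req}{\cdot}$ is continuous by assumption and the counters $\ctr_{\req}$, the request levels $\level{\req}$, and the server location (which together determine $\alevel{\req}$) remain constant between services, $\trd_i(\cdot)$ is continuous in time between services. Hence the violation must arise either through (a) a continuous rise without an intervening service, or (b) a jump at a service occurring at $t$.

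For case (a), the continuous rise of $\trd_i$ to above $2^{i+1}$ must first cross $2^i$, at which moment level $i$ becomes critical and the algorithm invokes $\UponCritical(i)$, triggering a level-$(i+3)$ service. Every request with $\alevel \le i+3$ (in particular every request contributing to $\trd_i$) is then made eligible via \cref{line:OSY_DefineEligible} and has its residual delay reset via \cref{line:OSY_ZeroResidualDelay}. Moreover, the subsequent server move is bounded by $2^{(i+3)-3} + 2^{(i+3)-8} < 2^{i+1}$; by the triangle inequality this is too small to pull any non-eligible request (which must satisfy $\level{\req} > i+3$ or $\dist{a}{\req} > 2^{i+3}$) into a state with $\alevel(\text{after}) \le i$, since in either case $\alevel(\text{after}) \ge i+3 > i$. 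Thus $\trd_i$ is reset to $0$, contradicting the hypothesized continuous rise with no intervening service.

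For case (b), I would split by the level $\ell$ of the service at $t$. For $\ell \ge i+1$, the same eligibility-plus-move-bound argument shows $\trd_i(t^+) = 0$, contradicting the jump. For $\ell \le i$, the service does not zero $\trd_i$, and the delicate step is to bound the jump caused by the server's move. A geometric analysis of which requests can see their adjusted level drop from above $i$ to at most $i$ shows such a request must satisfy $\level{\req} \le i$, $\dist{a}{\req} > 2^i$, and $\dist{a'}{\req} \le 2^i$, so it lies in a thin annulus whose thickness is dictated by the move distance $2^{\ell-3} + 2^{\ell-8} < 2^{\ell-2} \le 2^{i-2}$. Combining this localization with the inductive hypothesis on $\trd_{i+1}$ restricted to the annulus bounds the total newly added residual delay by $2^i$, and together with $\trd_i(t^-) \le 2^i$ (otherwise level $i$ would already have been critical and the algorithm would have handled it earlier) this gives $\trd_i(t^+) \le 2^{i+1}$, a final contradiction.

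The main obstacle is the geometric bound in case (b) for $\ell \le i$: one needs a localized version of the inductive hypothesis to ensure the jump caused by server movement does not exceed $2^i$. Simultaneous critical events (where multiple calls to $\UponCritical$ occur in sequence) are absorbed by the observation that the algorithm invokes $\UponCritical(\ell)$ the instant level $\ell$ becomes critical, so that no level remains critical between services and the continuity argument above applies.
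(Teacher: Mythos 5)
Your plan correctly identifies the two ways the bound could be violated---continuous delay growth between services versus a discontinuous jump caused by a server move---and your handling of the continuous case and of a service at level $\ell \ge i+1$ is sound. The problem is exactly the case you flag as the ``main obstacle'': $\ell \le i$, where after bounding $\trd_i(t^-) \le 2^i$, you need to bound the residual delay of the requests whose adjusted level drops into $\{\ldots, i\}$ by $2^i$, and you propose doing this via ``the inductive hypothesis on $\trd_{i+1}$ restricted to the annulus.'' No such localized statement is available: the proposition bounds $\trd_{i+1}$ globally by $2^{i+2}$, and there is nothing forcing the residual delay carried by the annulus to be a constant fraction of that. Decomposing $\trd_i(t^+)$ into ``old'' ($\le 2^i$) plus ``new'' ($\le 2^i$) is the wrong cut, and the second inequality does not follow.

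The paper sidesteps this entirely with a single observation that makes the case split and the annulus unnecessary. Just before the move (call it $t^-$, i.e., after \cref{line:OSY_LevelUpgrade}), every pending request $\req$ has $\alevel{\req}(t^-) \ge \level{\serv}+1$, since eligible requests were either served or upgraded and non-eligible ones had $\alevel > \level{\serv}$ already. The move distance satisfies $\dist{a}{a'} \le 2^{\level{\serv}-3}+2^{\level{\serv}-8}$, so for any such $\req$, either $\alevel$ is level-determined (and unchanged), or $\dist{\req}{a} > 2^{\alevel{\req}(t^-)-1}$ and the triangle inequality gives $\dist{\req}{a'} > 2^{\alevel{\req}(t^-)-1} - 2^{\level{\serv}-2} \ge 2^{\alevel{\req}(t^-)-2}$, using $\alevel{\req}(t^-) \ge \level{\serv}+1$. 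Either way $\alevel{\req}(t^+) \ge \alevel{\req}(t^-)-1$: adjusted levels drop by at most one. Therefore every request counted in $\trd_i(t^+)$ was already counted in $\trd_{i+1}(t^-)$, so $\trd_i(t^+) \le \trd_{i+1}(t^-)$. Combined with $\trd_j(t^-) \le 2^j$ for all $j$ (zero for $j \le \level{\serv}$; at most $2^j$ for $j > \level{\serv}$ since only the maximal critical level triggers), this yields $\trd_i(t^+) \le 2^{i+1}$ in one step, without isolating ``newly added'' delay or appealing to any localized version of the claim. That ``drop by at most one'' lemma is the missing idea in your plan.
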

\begin{proof}
    Continuous delay growth cannot break this proposition, as the algorithm triggers a service whenever a level $i$ becomes critical (i.e., when $\trd_{i}$ reaches $2^{i}$), and this service then zeroes $\trd_{i}$.
    The only conceivable way for this to happen is upon moving the server from $a$ to $a'$ in \cref{line:OSY_MoveServer}; indeed, this changes the adjusted levels of requests, possibly increasing $\trd_{i}(t)$ past $2^{i}$.
    However, we claim that this cannot increase $\trd_{\ell}$ too much.

    The server moved from $a$ to $a'$ during some service $\serv$, triggered by some level $\ell$ becoming critical; it holds that $\level{\serv} = \ell + 3$.
    Let $t^{-}, t^{+}$ be the times immediately before and after \cref{line:OSY_MoveServer} in $\serv$.
    At $t^{-}$, there are no requests with adjusted level less than $\ell+4$, as ensured by $\cref{line:OSY_LevelUpgrade}$; in particular, $\trd_{i} = 0 $ for every $i \le \ell + 3$.
    Moreover, it holds that $\trd_{i} \le 2^i$ for every $i> \ell+3$, as only the maximal critical level triggers a service.
    Thus, $\trd_i(t^-) \le 2^{i}$ for all $i$.

    Consider any pending request $\req$ at $t^-$; as mentioned, $\alevel{\req}(t^-) \ge \ell + 4$.
    We claim that $\alevel{\req}(t^+) \ge  \alevel{\req}(t^-) -1 $; if this claim is correct, then for every $i$ we have $\trd_{i}(t^+) \le \trd_{i+1}(t^-) \le 2^{i+1}$, and the proof is complete.

    If $\alevel{\req}(t^-) = \level{\req}(t^-)$, the claim holds as levels do not decrease.
    Otherwise, we have that $\dist{\req}{a} > 2^{\alevel{\req}(t^-) -1} $.
    But then the triangle inequality implies that $\dist{\req}{a'} \ge \dist{\req}{a} - \dist{a}{a'} > 2^{\alevel{\req}(t^-) -1} - 2^{\ell+1} \ge 2^{\alevel{\req}(t^-)-2}$, where the final inequality uses $\alevel{\req}(t^-) \ge \ell+4$.
    This implies that $\alevel{\req}(t^+) \ge \alevel{\req}(t^-) - 1$, completing the proof.
\end{proof}

\begin{proposition}
    \label{prop:OSY_MovingDistanceBounds}
    During a service $\serv$, if the algorithm moves its server from $a$ to $a'$ in \cref{line:OSY_MoveServer}, then $2^{\level{\serv}-5} - 2^{\level{\serv}-8} \le \dist{a}{a'} \le 2^{\level{\serv}-3} + 2^{\level{\serv} - 8}$.
\end{proposition}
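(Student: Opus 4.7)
\textbf{Proof plan for \Cref{prop:OSY_MovingDistanceBounds}.}

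The plan is to get both bounds through the triangle inequality, once we exhibit a point in the set $R := \creqs_{\serv} \cap \ball{a'}{2^{\level{\serv}-8}}$ with the appropriate distance from $a$. First observe that the precondition for moving the server is that $\rylt{R}{t} > 2^{\level{\serv}-4} > 0$, so in particular $R$ is nonempty.

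For the \emph{upper bound}, pick any $q \in R$. Since $q \in \creqs_{\serv}$, by the definition of $\creqs_{\serv}$ on \cref{line:OSY_DefineTriggeringRequests} we have $\alevel{q} \le \level{\serv} - 3$, which unfolds (via the distance-based term of the adjusted level) into $\dist{a}{q} \le 2^{\level{\serv}-3}$. Combined with $\dist{q}{a'} \le 2^{\level{\serv}-8}$ (since $q \in \ball{a'}{2^{\level{\serv}-8}}$), the triangle inequality gives $\dist{a}{a'} \le 2^{\level{\serv}-3} + 2^{\level{\serv}-8}$.

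For the \emph{lower bound}, I would argue that $R$ cannot be fully contained in $\ball{a}{2^{\level{\serv}-5}}$. This is the step that uses primariness: because the algorithm moves the server only when the ``if'' branch of \cref{line:OSY_FindNewLoc} is entered, $\serv$ is primary, so every $\req' \in \creqs_{\serv}$ satisfies $\level{\req'} < \level{\serv} - 4$, i.e.\ $\level{\req'} \le \level{\serv}-5$. For any triggering request $\req'$ that also lies in $\ball{a}{2^{\level{\serv}-5}}$, the distance term in $\alevel{\req'}$ is at most $\level{\serv}-5$ as well, so $\alevel{\req'} \le \level{\serv}-5$. Hence the total residual delay of such requests is bounded by $\trd_{\level{\serv}-5}(t)$, which is at most $2^{\level{\serv}-4}$ by \cref{prop:OSY_NoSuperCritical}. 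If all of $R$ lay inside $\ball{a}{2^{\level{\serv}-5}}$, this would contradict $\rylt{R}{t} > 2^{\level{\serv}-4}$.

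Therefore, some $q \in R$ has $\dist{a}{q} > 2^{\level{\serv}-5}$, while still $\dist{q}{a'} \le 2^{\level{\serv}-8}$, and the triangle inequality yields $\dist{a}{a'} \ge \dist{a}{q} - \dist{q}{a'} > 2^{\level{\serv}-5} - 2^{\level{\serv}-8}$. The main subtlety, and the one step worth being careful about, is precisely the combination used in the lower bound: translating the hypothesis that $\serv$ is primary into a cap on $\level{\req'}$, then combining this with the small-ball radius to cap $\alevel{\req'}$, in order to legitimately invoke \cref{prop:OSY_NoSuperCritical} at level $\level{\serv}-5$. Everything else is just triangle-inequality bookkeeping.
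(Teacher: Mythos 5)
Your proof is correct and follows essentially the same approach as the paper's: both hinge on (a) the residual-delay cap $\trd_{\level{\serv}-5} \le 2^{\level{\serv}-4}$ from \cref{prop:OSY_NoSuperCritical}, combined with primariness to show much of $R$'s residual delay must lie outside $\ball{a}{2^{\level{\serv}-5}}$, for the lower bound, and (b) the fact that every $\req \in \creqs_\serv$ satisfies $\dist{a}{\req} \le 2^{\level{\serv}-3}$, for the upper bound. The only cosmetic difference is that you exhibit an explicit point $q \in R$ and apply the triangle inequality, whereas the paper argues via ball containment ($\ball{a'}{2^{\level{\serv}-8}} \subseteq \ball{a}{2^{\level{\serv}-5}}$) and ball disjointness; these are interchangeable formulations of the same argument.
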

\begin{proof}
    First, we prove that $\dist{a}{a'} \ge 2^{\level{\serv}-5} - 2^{\level{\serv}-8} $.
    Since $\serv$ was started, we know that $\trd_{\level{\serv}-3} \ge 2^{\level{\serv} - 3}$.
    From \cref{prop:OSY_NoSuperCritical}, we know that $\trd_{\level{\serv}-5} \le 2^{\level{\serv} - 4}$; moreover, the service $\serv$ is primary, and thus $\creqs_{\serv}$ contains requests of level at most $\level{\serv} - 5$.
    Thus, the total residual delay of $\creqs_\serv$ incurred inside $\ball{a_{\serv}}{2^{\level{\serv}-5}}$ is at most $2^{\level{\serv}-4}$.
    Now note that if $\dist{a}{a'} < 2^{\level{\serv}-5} - 2^{\level{\serv}-8}$ then $\ball{a'}{2^{\level{\serv}-8}} \subseteq \ball{a}{2^{\level{\serv}-5}}$, which contradicts the definition of $a'$.

    Second, we prove that $\dist{a}{a'} \le 2^{\level{\serv}-3} + 2^{\level{\serv} - 8}$.
    Assuming otherwise that $\dist{a}{a'} > 2^{\level{\serv}-3} + 2^{\level{\serv} - 8}$, we have that $\ball{a'}{2^{\level{\serv} - 8}}$ and $\ball{a}{2^{\level{\serv} - 3}}$ are disjoint, in contradiction to the former containing much of the residual delay of $\serv$.
\end{proof}

We define the cost of a service $\serv$, denoted $\cost{\serv}$, to be the total movement of the algorithm's server in $\serv$ plus the total amount by which investment counters are raised in $\serv$.

\begin{proposition}
    $\alg \le \sum_{\serv \in \servs} \cost{\serv}$.
\end{proposition}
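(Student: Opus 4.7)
The plan is to decompose $\alg$ into its two components and bound each separately against the right-hand side. First, the algorithm's cost equals the total server movement plus the total delay accrued by all requests, i.e.\ $\alg = \sum_{\serv \in \servs}(\text{movement of server in }\serv) + \sum_{\req \in \reqs} \ylt{\req}{\tserv{\req}}$, where $\tserv{\req}$ is the time at which $\req$ is served (this is well-defined by the assumption that every request is eventually served). The movement part trivially equals $\sum_{\serv}(\text{movement in }\serv)$ since the server only moves during services (either along the DFS tour of $S$ or in \cref{line:OSY_MoveServer}). So it remains to match the delay term to the total amount by which investment counters are increased.

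Next I would argue, per request, that $\ylt{\req}{\tserv{\req}}$ is at most the final value of $\ctr_{\req}$. Let $\serv$ be the service that serves $\req$, so $\serv$ adds $\req$ to $\reqs_{\serv}$, which forces $\req \in \elig_{\serv}$. Inside $\serv$, \cref{line:OSY_ZeroResidualDelay} sets $\ctr_{\req} \gets \max\{\ctr_{\req}, \ylt{\req}{\stime{\serv}}\}$, and since $\stime{\serv}=\tserv{\req}$, this guarantees that immediately after \cref{line:OSY_ZeroResidualDelay} we have $\ctr_{\req} \ge \ylt{\req}{\tserv{\req}}$. Since the counter $\ctr_{\req}$ is nondecreasing (the only places it is written, \cref{line:OSY_ZeroResidualDelay} and \cref{line:OSY_Invest}, use a $\max$ with its current value), its final value is also at least $\ylt{\req}{\tserv{\req}}$. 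Because $\ctr_{\req}$ starts at $0$ and only increases, this final value is precisely the total increase applied to $\ctr_{\req}$ over the course of the algorithm.

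Summing over $\req \in \reqs$ and regrouping by the service that performed each counter update, the total delay is at most the sum over all services $\serv$ of the total increase in counters performed during $\serv$. Combining with the movement equality gives $\alg \le \sum_{\serv \in \servs} \cost{\serv}$. There is no real obstacle here; the statement is essentially a bookkeeping identity. The only point that warrants care is confirming that every written location of $\ctr_{\req}$ in \cref{alg:OSY} is a monotone update, and that the ``zero residual delay'' step in \cref{line:OSY_ZeroResidualDelay} is executed for every eligible request, including those that will be served in the same service.
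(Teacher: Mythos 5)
Your proof is correct and follows the same route as the paper: decompose the algorithm's cost into movement plus delay, note that all movement is attributed to some service, and argue that upon service of any request its investment counter has already reached its total delay cost, so the total counter increase (which is counted in $\sum_{\serv}\cost{\serv}$) upper-bounds the total delay. You simply fill in the bookkeeping details—monotonicity of the counters and the fact that \cref{line:OSY_ZeroResidualDelay} applies to all eligible requests including those about to be served—that the paper's one-paragraph proof leaves implicit.
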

\begin{proof}
    Note that every request $\req$ is eventually served, and upon service $\ctr_{\req}$ is at least the delay cost of that request.
    Thus, the sum of counters upper-bounds delay costs, and the raising of every counter is counted towards $\cost{\serv}$ for some $\serv$.
    All movement costs in $\alg$ are also attributed to the cost of some service.
\end{proof}

\begin{proposition}
    The total cost of service $\serv$ is $O(1)\cdot 2^{\level{\serv}}$.
\end{proposition}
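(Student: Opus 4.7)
The plan is to decompose $\cost{\serv}$ into four contributions and bound each by $O(2^{\level{\serv}})$: (i) the counter raising performed in line~\ref{line:OSY_ZeroResidualDelay} that zeroes residual delay on $\elig_{\serv}$; (ii) the DFS tour of $S$ in line~\ref{line:OSY_TraverseTree}; (iii) the investment into counters in line~\ref{line:OSY_Invest}; and (iv) the server move from $a$ to $a'$ in line~\ref{line:OSY_MoveServer}.

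For (i), the total increase to counters equals $\sum_{\req'\in\elig_{\serv}} \rylt{\req'}{t} = \trd_{\level{\serv}}(t)$, which by \cref{prop:OSY_NoSuperCritical} is at most $2^{\level{\serv}+1}$. For (iv), I would invoke \cref{prop:OSY_MovingDistanceBounds}, which already gives $\dist{a}{a'} \le 2^{\level{\serv}-3}+2^{\level{\serv}-8}$.

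The main work is bounding (ii) and (iii) together. The DFS tour costs at most twice the edge weight of $S$, and the investment in line~\ref{line:OSY_Invest} contributes at most $\sum_{\req'\in \elig_{\serv}\setminus \reqs_{\serv}} \pen_\tau(\req')$ (since after the zeroing in line~\ref{line:OSY_ZeroResidualDelay} the quantity $\ylt{\req'}{\tau}-\ctr_{\req'}$ equals $\pen_\tau(\req')$). Hence (ii)+(iii) is at most $2\cdot\pcstree(\elig_{\serv},\pen_\tau;a)$, and it suffices to argue that this PCST cost is $O(2^{\level{\serv}})$.

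This is the delicate step, because $\tau$ is defined as the first time the $3$-approximation's cost reaches $6\cdot 2^{\level{\serv}}$, and the approximation's output cost need not be continuous in $\tau$. I would argue via the optimal PCST cost, which \emph{is} continuous in $\tau$: for any fixed candidate solution, the tree cost is constant in $\tau$ and the penalty contribution $\pen_\tau(\req')=\prp{\ylt{\req'}{\tau}-\ctr_{\req'}}$ is continuous in $\tau$ (since delay rises continuously), so $\pcstree^\ast$ is an infimum of continuous functions and therefore upper semicontinuous from the left. For every $\tau'<\tau$ the $3$-approximation returns a value below $6\cdot 2^{\level{\serv}}$, hence $\pcstree^\ast(\elig_{\serv},\pen_{\tau'};a)\le 2\cdot 2^{\level{\serv}}$; taking $\tau'\to\tau^-$ yields $\pcstree^\ast(\elig_{\serv},\pen_\tau;a)\le 2\cdot 2^{\level{\serv}}$, so by the approximation guarantee $\pcstree(\elig_{\serv},\pen_\tau;a)\le 6\cdot 2^{\level{\serv}}$. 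Summing the four contributions gives $\cost{\serv}=O(2^{\level{\serv}})$. The main obstacle, as noted, is precisely this left-continuity/limit argument at the critical time~$\tau$; everything else is a direct invocation of earlier propositions.
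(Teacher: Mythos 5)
Your four-way decomposition of $\cost{\serv}$ --- zeroing of residual delay on \cref{line:OSY_ZeroResidualDelay}, DFS tour on \cref{line:OSY_TraverseTree}, investment on \cref{line:OSY_Invest}, and the server move on \cref{line:OSY_MoveServer} --- matches the paper's proof exactly, as do the invocations of \cref{prop:OSY_NoSuperCritical} and \cref{prop:OSY_MovingDistanceBounds} for the first and last pieces. The strategy for the middle two pieces (bound them by twice the $\pcstree$ output at $\tau$, then argue that this output is $O(2^{\level{\serv}})$ by passing to the optimum and using the $3$-approximation) is also the paper's argument. However, there are two technical slips in your treatment of that middle step.

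First, the inference ``the $3$-approximation returns a value below $6\cdot 2^{\level{\serv}}$, hence $\pcstree^\ast(\elig_{\serv},\pen_{\tau'};a)\le 2\cdot 2^{\level{\serv}}$'' misuses the approximation guarantee. A $3$-approximation gives $\pcstree \le 3\,\pcstree^*$, i.e.\ a \emph{lower} bound $\pcstree^* \ge \pcstree/3$, not an upper bound $\pcstree^* \le \pcstree/3$. The correct direction is simply $\pcstree^*\le\pcstree$, which yields $\pcstree^*(\tau')<6\cdot 2^{\level{\serv}}$ for $\tau'<\tau$, and then $\pcstree(\tau)\le 3\cdot\pcstree^*(\tau)\le 18\cdot 2^{\level{\serv}}$ (so the DFS-plus-investment cost is at most $36\cdot 2^{\level{\serv}}$, as in the paper, rather than $12\cdot 2^{\level{\serv}}$). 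The constant changes but the $O(1)$ conclusion is unaffected.

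Second, and more substantively, the appeal to ``upper semicontinuity from the left'' is the wrong property. Upper semicontinuity of an infimum of continuous functions at $\tau$ gives $\limsup_{t\to\tau}\pcstree^*(t)\le\pcstree^*(\tau)$, which does \emph{not} let you push the bound $\pcstree^*(\tau')\le C$ from $\tau'<\tau$ to $\tau$ --- for that you need the opposite direction. An infimum over an infinite family of continuous functions can in fact jump \emph{up} at an isolated point. What saves the argument (and what the paper's ``delay rises continuously, and thus this optimal solution applies to the final iteration as well'' silently uses) is that there are only finitely many candidate PCST solutions (finitely many subsets of $\elig_\serv$ to serve and finitely many Steiner trees in the finite graph $\ms$), each with cost continuous in $\tau$; a minimum of finitely many continuous functions is genuinely continuous, so the limiting bound holds. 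Be explicit about this finiteness rather than relying on a semicontinuity property that in general points the wrong way.
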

\begin{proof}
    From \cref{prop:OSY_NoSuperCritical}, it holds that $\trd_{\level{\serv}} \le 2^{\level{\serv} + 1}$; thus, $2^{\level{\serv}+1}$ bounds the cost of raising counters on \cref{line:OSY_ZeroResidualDelay}.

    In addition, the cost of traversing the $\pcstree$ solution on \cref{line:OSY_TraverseTree} and investing in counters in \cref{line:OSY_Invest} can be bounded using the following argument: in the previous iteration, the cost of the $\pcstree$ solution was less than $6\cdot 2^{\level{\serv}}$, from the condition of the loop, which implies that the cost of the optimal solution was less than $6\cdot 2^{\level{\serv}}$; however, delay rises continuously, and thus this optimal solution applies to the final iteration as well (at cost at most $6\cdot 2^{\level{\serv}}$).
    Since the approximation algorithm $\pcstree$ that we use is a 3-approximation, its cost of its output can be bounded by $18\cdot 2^{\level{\serv}}$. The penalty part of the solution is paid exactly in \cref{line:OSY_Invest}, while the served part of the solution is traversed in DFS (at double the cost).
    Thus, the total cost of \cref{line:OSY_TraverseTree} and \cref{line:OSY_Invest} is at most $36\cdot 2^{\level{\serv}}$.

    Finally, the cost of moving the server in \cref{line:OSY_MoveServer} (if it takes place) is at most $2^{\level{\serv} -3} + 2^{\level{\serv}-5}$.
\end{proof}

\begin{proposition}
    \label{lem:OSY_AlgorithmBoundedByPrimaryAndCertified}
    $\alg \le O(1) \cdot \pr{\sum_{\serv \in \pservs} 2^{\level{\serv}} + \sum_{\serv \in \cservs} 2^{\level{\serv}}}$.
\end{proposition}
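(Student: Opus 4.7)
The plan is to mirror the deadline argument (Lemma~\ref{lem:OSD_AlgorithmBoundedByPrimaryAndCertified}) and charge the cost of every non-primary service to a certified one, losing only a constant factor. By the two preceding propositions, $\alg \le \sum_{\serv \in \servs} \cost{\serv} \le O(1)\cdot\sum_{\serv \in \servs} 2^{\level{\serv}}$, so it suffices to partition $\servs$ into $\pservs$ and $\servs \setminus \pservs$ and bound the non-primary contribution by $O(1)\cdot\sum_{\serv \in \cservs} 2^{\level{\serv}}$.

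For each non-primary service $\serv'$, Definition~\ref{defn:OSY_WitnessAndCertified} associates a unique certified service $\serv$ that $\serv'$ certifies, and Proposition~\ref{prop:OSY_CertificationLevelDifference} gives $\level{\serv'} \le \level{\serv} + 5$, hence $2^{\level{\serv'}} \le 32\cdot 2^{\level{\serv}}$. Thus, once I establish that each certified service $\serv$ is certified by at most a constant number of non-primary services, summing yields $\sum_{\serv' \in \servs \setminus \pservs} 2^{\level{\serv'}} \le O(1)\cdot\sum_{\serv \in \cservs} 2^{\level{\serv}}$, which combined with the primary contribution proves the proposition.

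The hard step is therefore showing that the certification map is essentially injective, analogous to Proposition~\ref{prop:OSD_EveryServiceCertifiedOnce}. I would suppose for contradiction that two non-primary services $\serv'_1, \serv'_2$ both certify the same $\serv$, with $\stime{\serv'_1} < \stime{\serv'_2}$. Let $\creq_i \in \creqs_{\serv'_i}$ be the chosen witness for $\serv$ at $\stime{\serv'_i}$. Each witness of $\serv$ was in $\elig_{\serv}$ and hence lies in $\ball{a_{\serv}}{2^{\level{\serv}}}$, and its level at those times equals $\level{\serv}+1$. Since $\creq_1 \in \creqs_{\serv'_1}$ forces $\dist{\creq_1}{a_{\serv'_1}} \le 2^{\level{\serv'_1}-3}$, a triangle inequality through $\creq_1$ yields $\dist{a_{\serv'_1}}{a_{\serv}} \le O(1)\cdot 2^{\level{\serv}}$, and then $\dist{\creq_2}{a_{\serv'_1}} \le O(1)\cdot 2^{\level{\serv}}$. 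Combined with $\level{\creq_2}(\stime{\serv'_1}) = \level{\serv}+1 \le \level{\serv'_1}$ and $\level{\serv'_1} \ge \level{\serv}+4$, this implies $\alevel{\creq_2}(\stime{\serv'_1}) \le \level{\serv'_1}$, so $\creq_2 \in \elig_{\serv'_1}$. But then $\serv'_1$ either serves $\creq_2$ or raises its level to $\level{\serv'_1}+1$ at \cref{line:OSY_LevelUpgrade}, so $\creq_2$ cannot be a witness for $\serv$ at $\stime{\serv'_2}$, a contradiction.

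The main obstacle is exactly this last step: the constant $5$ gap in service levels (versus the gap-of-$4$ used for triggering) is what keeps $\dist{\creq_2}{a_{\serv'_1}} \le 2^{\level{\serv'_1}}$, and I expect the level-$4$/level-$5$ split from Proposition~\ref{prop:OSY_CertificationLevelDifference} to be exactly tight enough to make the eligibility computation succeed. I also need to verify that the analysis is insensitive to the fact that the server's location may have changed mid-service in the delay setting: since $a_{\serv}$ denotes the location at the \emph{start} of $\serv$ and all distance bounds on $\creq_1, \creq_2$ are measured from $a_{\serv}$ (which is fixed once $\serv$ is triggered), this causes no issue. Once the at-most-one-certifier claim is in hand, the remaining algebra is immediate and follows the deadline template verbatim.
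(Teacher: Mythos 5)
Your proof is correct and mirrors the paper's argument. The one thing you do more carefully than the paper is supply an explicit proof that each certified service is certified by at most one other service (the analogue of Proposition~\ref{prop:OSD_EveryServiceCertifiedOnce}); the paper's two-line proof of this proposition invokes that fact directly without re-proving it in the delay setting, and your triangle-inequality argument showing $\creq_2 \in \elig_{\serv'_1}$ is exactly the right way to close that gap.
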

\begin{proof}
    The proof is similar to that of \cref{lem:OSD_AlgorithmBoundedByPrimaryAndCertified} for the deadline case.
    Every non-primary service of level $\ell$ certifies another service of level at least $\ell -5$ (through \cref{prop:OSY_CertificationLevelDifference}).
    Since every service is certified at most once, it holds that $\sum_{\serv \in \servs \setminus \pservs} 2^{\level{\serv}} \le O(1)\cdot \sum_{\serv \in \cservs} 2^{\level{\serv}}$.
\end{proof}

\subsubsection*{Bounding Primary Services}
As in the argument for deadlines, we identify a subset of primary services which we need to bound.
In fact, for delay, we identify two such subsets.
Define $a^*_{\serv}$ to be the final location of the optimum's server at $\stime{\serv}$.
We define two disjoint subsets of the primary services $\pservs$:
\begin{enumerate}
    \item Services in which the algorithm's server ended at the starting location (i.e. \cref{line:OSY_MoveServer} did not run), denoted $\pservss$.
    \item Services in which the algorithm's server moved to some location $a'$ (i.e. \cref{line:OSY_MoveServer} ran) such that $\dist{a^*_{\serv}}{a'} \ge 2^{\level{\serv}-7}$, denoted $\pservsf$.
\end{enumerate}
As stated in \cref{prop:OSY_BoundingPrimaryByFarAndStationary}, to bound the cost of all primary services it is enough to bound the cost of these two subsets.

\begin{proposition}
    \label{prop:OSY_BoundingPrimaryByFarAndStationary}
    \[
        \sum_{\serv \in \pservs} 2^{\level{\serv}} \le O(1)\cdot \opt + O(1) \cdot \sum_{\serv \in \pservsf} 2^{\level{\serv}} + O(1) \cdot \sum_{\serv \in \pservss} 2^{\level{\serv}}
    \]
\end{proposition}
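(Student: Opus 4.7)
The plan is to mimic the potential-function argument used for the deadline case in \cref{prop:OSD_BoundingPrimaryByFar}. Concretely, I define $\phi(t) := c \cdot \dist{a(t)}{a^*(t)}$ for a sufficiently large constant $c$, where $a(t)$ and $a^*(t)$ are the algorithm's and optimum's server positions at time $t$. Since $\phi$ is nonnegative and (WLOG) starts at $0$, and since every movement of $\opt$'s server changes $\phi$ by at most $c$ times the movement distance, the total sum over the algorithm's services of the potential drops attributable to the algorithm's own movements is bounded by $c \cdot \opt$.

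Next I would observe that the algorithm only moves its server during primary services, and among those only the services in $\pservs \setminus \pservss$ (the ones for which \cref{line:OSY_MoveServer} actually fires) induce any change in $\phi$; call the corresponding potential change $\Delta_\serv$. For each such $\serv$, \cref{prop:OSY_MovingDistanceBounds} supplies both-sided bounds on $\dist{a_\serv}{a'_\serv}$, where $a'_\serv$ is the new server location. I would then split $\pservs \setminus \pservss$ into two cases. For $\serv \in \pservsf$, the potential can rise, but only by at most $c \cdot \dist{a_\serv}{a'_\serv} \le O(1)\cdot 2^{\level{\serv}}$, an amount we are willing to charge against $\sum_{\pservsf} 2^{\level{\serv}}$. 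For $\serv \in \pservs \setminus (\pservsf \cup \pservss)$, the defining inequality $\dist{a'_\serv}{a^*_\serv} < 2^{\level{\serv}-7}$, combined with the triangle inequality $\dist{a_\serv}{a^*_\serv} \ge \dist{a_\serv}{a'_\serv} - \dist{a'_\serv}{a^*_\serv}$, produces an actual potential drop of
\[
    -\Delta_\serv \;\ge\; c \cdot \pr{\dist{a_\serv}{a'_\serv} - 2\dist{a'_\serv}{a^*_\serv}} \;\ge\; c \cdot \pr{2^{\level{\serv}-5} - 2^{\level{\serv}-8} - 2 \cdot 2^{\level{\serv}-7}} \;=\; \Omega\pr{2^{\level{\serv}}}.
\]

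Summing these inequalities over all services and using the global bound $\sum_\serv(-\Delta_\serv) \le c \cdot \opt$, the $\Omega(2^{\level{\serv}})$ drops coming from services in $\pservs \setminus (\pservsf \cup \pservss)$ can be paid for by $\opt$ plus the $O(2^{\level{\serv}})$ potential-gain penalties from $\pservsf$; rearranging gives $\sum_{\serv \in \pservs \setminus (\pservsf \cup \pservss)} 2^{\level{\serv}} \le O(1) \cdot \opt + O(1) \cdot \sum_{\serv \in \pservsf} 2^{\level{\serv}}$, and adding in the trivial contributions $\sum_{\pservsf} 2^{\level{\serv}}$ and $\sum_{\pservss} 2^{\level{\serv}}$ recovers the stated inequality. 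I expect the only delicate step to be verifying that the numerical constants make $\dist{a_\serv}{a'_\serv} - 2\dist{a'_\serv}{a^*_\serv}$ a positive multiple of $2^{\level{\serv}}$; here the slack is $3.5 \cdot 2^{\level{\serv}-7} - 2 \cdot 2^{\level{\serv}-7} = 1.5 \cdot 2^{\level{\serv}-7}$, which is comfortably positive and is what dictates the specific exponent $\level{\serv}-7$ appearing in the definition of $\pservsf$.
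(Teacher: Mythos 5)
Your proposal is correct and is essentially the paper's own argument: the same potential $\phi(t)=c\cdot\dist{a(t)}{a^*(t)}$ (the paper takes $c=4$), the same observation that only services in $\pservs\setminus\pservss$ move the algorithm's server, the same case split on $\pservsf$ versus $\pservs\setminus(\pservsf\cup\pservss)$, and the same use of \cref{prop:OSY_MovingDistanceBounds} together with the triangle inequality to turn $\dist{a'_\serv}{a^*_\serv}<2^{\level{\serv}-7}$ into a potential drop of $\Omega(2^{\level{\serv}})$. The paper packages the summation slightly differently (via an identity with $\dist{a'_\serv}{a_\serv}$ on both sides and then drops nonpositive terms), but the content and constants are the same.
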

\begin{proof}
    The proof is similar to the proof of \cref{prop:OSD_BoundingPrimaryByFar}.
    We define the potential function $\phi(t) := 4 \dist{a(t)}{a^*(t)}$, where $a(t), a^*(t)$ at the locations of the algorithm's server and the optimum's server at $t$, respectively.
    Note that the potential function equals $0$ at the beginning of the input, and can only take on positive values.
    Note that in $\pservss$, the final and initial server locations are the same; we thus only consider services in $\pservs\setminus \pservss$ in the potential argument.
    For every service $\serv \in \pservs \setminus\pservss$, we define $a'_{\serv}$ to be the final location of the server in $\serv$ (the value of $a'$ in $\UponCritical$).
    Following the argument for \cref{prop:OSD_BoundingPrimaryByFar}, we have
    \begin{equation}
        \label{eq:OSY_PotentialArgument}
        \sum_{\serv \in \pservs \setminus \pservss} \dist{a'_{\serv}}{a_{\serv}} \le 4 \opt + \sum_{\serv \in \pservs\setminus \pservss} \pr{\dist{a'_{\serv}}{a_{\serv}} + \Delta_{\serv}}
    \end{equation}

    Now, consider a service $\serv\in\pservs \setminus (\pservsf \cup \pservss)$.
    For ease, define $a^*_\serv := a^*(\stime{\serv})$.
    After $\serv$ the algorithm moves its server to $a'_{\serv}$ such that $\dist{a'_{\serv}}{a^*_{\serv}} \le 2^{\level{\serv} - 7}$.
    In addition, \cref{prop:OSY_MovingDistanceBounds} implies that $\dist{a_{\serv}}{a'_{\serv}} \ge 2^{\level{\serv}-5} - 2^{\level{\serv}-8} = 7\cdot 2^{\level{\serv} -8}$.
    Therefore:
    \begin{align*}
        \Delta_{\serv} &\le 4(\dist{a'_{\serv}}{a^*_{\serv}} - \dist{a_{\serv}}{a^*_{\serv}}) \\
        &\le 4\cdot \pr{\dist{a'_{\serv}}{a^*_{\serv}} - \dist{a_{\serv}}{a'_{\serv}} + \dist{a'_{\serv}}{a^*_{\serv}}}\\
        &\le 4\cdot \pr{2^{\level{\serv} - 6} - \dist{a_{\serv}}{a'_{\serv}}} \\
        &\le 4\cdot (-\frac{3}{7} \cdot \dist{a_{\serv}}{a'_{\serv}}) \\
        &\le -\dist{a_{\serv}}{a'_{\serv}}
    \end{align*}
    where the second inequality is due to the triangle inequality.
    Therefore we have $\dist{a_{\serv}}{a'_{\serv}} + \Delta_{\serv} \le 0$ for every $\serv \in \pservs\setminus (\pservsf \cup \pservss)$.
    Moreover, note that for every $\serv \in \pservs\setminus \pservss$ we have $\Delta_{\serv} \le 4\dist{a_{\serv}}{a'_{\serv}}$.
    Finally, note that for a service $\serv \in \pservs \setminus \pservss$, \cref{prop:OSY_MovingDistanceBounds} implies that $\dist{a_{\serv}}{a'_{\serv}} \ge 2^{\level{\serv} - 5} - 2^{\level{\serv}-8} =7\cdot 2^{\level{\serv}-8}$, and thus $2^{\level{\serv}} \le \frac{256}{7} \cdot \dist{a_{\serv}}{a'_{\serv}}$.
    In addition, $\dist{a_{\serv}}{a'_{\serv}} \le 2^{\level{\serv}-3}  + 2^{\level{\serv}-8} = \frac{33}{256}\cdot 2^{\level{\serv}}$.
    Combining all observations, we get
    \begin{align*}
        \sum_{\serv \in \pservs} 2^{\level{\serv}} &\le \sum_{\serv \in \pservss} 2^{\level{\serv}} + \sum_{\serv \in \pservs \setminus \pservss}2^{\level{\serv}} \\
        &\le  \sum_{\serv \in \pservss} 2^{\level{\serv}} + \frac{256}{7} \cdot \sum_{\serv \in \pservs \setminus \pservss} \dist{a_{\serv}}{a'_{\serv}} \\
        &\le \sum_{\serv \in \pservss} 2^{\level{\serv}} + \frac{256}{7} \cdot \pr{4\opt + \sum_{\serv \in \pservs \setminus \pservss} (\dist{a_{\serv}}{a'_{\serv}} + \Delta_{\serv}) }\\
        &\le \sum_{\serv \in \pservss} 2^{\level{\serv}} + \frac{256}{7} \cdot \pr{4\opt + \sum_{\serv \in \pservsf} (\dist{a_{\serv}}{a'_{\serv}} + \Delta_{\serv}) }\\
        &\le \sum_{\serv \in \pservss} 2^{\level{\serv}} + \frac{256}{7} \cdot \pr{4\opt + 5\sum_{\serv \in \pservsf} \dist{a_{\serv}}{a'_{\serv}}}\\
        &\le \sum_{\serv \in \pservss} 2^{\level{\serv}} + \frac{256}{7} \cdot \pr{4\opt + \frac{165}{256}\sum_{\serv \in \pservsf} 2^{\level{\serv}}}\\
        &=O(1)\cdot \opt + O(1) \cdot \sum_{\serv \in \pservsf} 2^{\level{\serv}} + O(1) \cdot \sum_{\serv \in \pservss} 2^{\level{\serv}}
    \end{align*}
\end{proof}

\begin{definition}[primary interals and cylinders]
    For every service $\serv \in \pservsf \cup \pservss$, we define:
    \begin{enumerate}
        \item The primary time interval $\pivl{\serv} := \I{\min_{\req \in \creqs_{\serv}} \rlt{\req}}{\stime{\serv}}$.
        \item The primary cylinder $\pcyl{\serv} := \pr{\ball{a_\serv}{2^{\level{\serv}-2}}, \pivl{\serv}}$.
    \end{enumerate}
    We also define $\pcylsf, \pcylss$ to be the sets of all primary cylinders of services from $\pservsf, \pservss$, respectively.
    We define $\pcyls = \pcylsf \cup \pcylss$.
    In addition, for every $i$ we define $\pcyls_i$ to be the subset of primary cylinders from $\pcyls$ that belong to level-$i$ services.
\end{definition}

\begin{definition}[$\pdchg{\serv}$]
    For every primary service $\serv \in\pservs$, let $R \subseteq \creqs_{\serv}$ be the set of requests unserved in the optimal solution at $\stime{\serv}$.
    Define $\pdchg{\serv} := \sum_{\req \in R} \rylt{\req}{\stime{\serv}}$. (Here, recall that $\stime{\serv}$ refers to the time immediately before the service $\serv$.)
\end{definition}

\begin{proposition}
    \label{prop:OSY_DisjointImpliesChargePrimary}
    Let $\servs' \subseteq \pservsf \cup \pservss$ be a set of services such that their primary cylinders are disjoint.
    Then it holds that
    $\sum_{\serv \in \servs'} \pr{\ccap{\opt}{\pcyl{\serv}} + \pdchg{\serv}} \le \opt$.
\end{proposition}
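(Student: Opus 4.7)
The plan is to split the optimal cost as $\opt = \opt_{\mathrm{mv}} + \opt_{\mathrm{dl}}$ (server movement plus accrued delay) and to establish the two one-sided bounds
\[
    \sum_{\serv \in \servs'} \ccap{\opt}{\pcyl{\serv}} \le \opt_{\mathrm{mv}}
    \qquad\text{and}\qquad
    \sum_{\serv \in \servs'} \pdchg{\serv} \le \opt_{\mathrm{dl}},
\]
which combine to give the proposition. The first inequality uses the disjointness hypothesis directly; the second is really a per-request argument that does not use disjointness at all.

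The movement bound is the direct analog of \cref{obs:OSD_DisjointImpliesCharge} in this setting. Because the primary cylinders in $\servs'$ are pairwise disjoint in space or in time, the (possibly partial) edges counted by $\ccap{\opt}{\pcyl{\serv}}$ for different $\serv$ occupy disjoint regions of the optimal server's space-time trajectory, and their total weight is therefore at most the total weight of that trajectory, which is $\opt_{\mathrm{mv}}$. I would simply restate and reuse this argument.

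For the delay bound, I would fix a request $\req$ and bound its total contribution to $\sum_\serv \pdchg{\serv}$ by OPT's delay cost for $\req$; summing over $\req$ then closes the argument. Let $\serv_1,\ldots,\serv_k \in \servs'$ be the services, in increasing order of $\stime{\serv_i}$, for which $\req$ lies in the corresponding unserved set $R$. Because $\req \in \creqs_{\serv_i} \subseteq \elig_{\serv_i}$, \cref{line:OSY_ZeroResidualDelay} of $\serv_i$ raises $\ctr_\req$ to at least $\ylt{\req}{\stime{\serv_i}}$, and since both \cref{line:OSY_ZeroResidualDelay} and \cref{line:OSY_Invest} update $\ctr_\req$ by a $\max$, the counter only grows between services. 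Hence $\ctr_\req(\stime{\serv_{i+1}}) \ge \ylt{\req}{\stime{\serv_i}}$, giving $\rylt{\req}{\stime{\serv_{i+1}}} \le \ylt{\req}{\stime{\serv_{i+1}}} - \ylt{\req}{\stime{\serv_i}}$. Telescoping over $i$ yields $\sum_i \rylt{\req}{\stime{\serv_i}} \le \ylt{\req}{\stime{\serv_k}}$; and since $\req$ being in $R$ at $\serv_k$ means OPT has not served $\req$ by $\stime{\serv_k}$, OPT pays delay at least $\ylt{\req}{\stime{\serv_k}}$ on $\req$, which dominates the sum.

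I expect the telescoping to be the main point requiring care: one has to accommodate intervening services in which $\req$ is eligible but not in the current $R$ (possibly incurring an additional counter raise via \cref{line:OSY_Invest}), and also to keep straight that ``$\req$ unserved by OPT at $\stime{\serv_k}$'' lets us charge $\ylt{\req}{\stime{\serv_k}}$ to OPT's delay for $\req$ even if OPT eventually serves $\req$. Both issues dissolve once one observes that $\ctr_\req$ is monotone nondecreasing and that OPT's delay for $\req$ is monotone in the time at which OPT serves it, so the rest of the proof is essentially bookkeeping.
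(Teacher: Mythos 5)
Your proposal is correct and follows essentially the same route as the paper: both split $\opt$ into movement and delay, use cylinder disjointness only for the movement part, and bound the delay part by a per-request no-double-charge argument based on the monotone counter $\ctr_\req$ being raised to $\ylt{\req}{\stime{\serv}}$ at each service. Your explicit telescoping is simply a more detailed unfolding of the paper's one-line remark that ``delay is never charged twice.''
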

\begin{proof}
    Denote by $\opt^m, \opt^d$ the movement and delay costs of the optimal solution, respectively.
    One can observe, as in \cref{obs:OSD_DisjointImpliesCharge}, that $\sum_{\serv \in \servs'} \ccap{\opt}{\ccyl{\serv}} \le \opt^m$.
    It remains to show that $\sum_{\serv \in \servs'} \pdchg{\serv} \le \opt^d$.

    Recall that the definition of $\pdchg{\serv}$ is $\sum_{\req \in R} \rylt{\req}{\stime{\serv}}$, where $R\subseteq \creqs_{\serv}$ is the subset of requests unserved by the optimal solution until $\stime{\serv}$.
    Note that the optimal solution indeed incurs $\rylt{\req}{\stime{\serv}} = \ylt{\req}{\stime{\serv}} - \ctr_{\req}{\stime{\serv}}$ delay for every request $\req \in R$.
    Moreover, delay is never charged twice, as the service raises $\ctr_{\req}$ to be $\ylt{\req}{\stime{\serv}}$ at service $\serv$.
    This completes the proof.
\end{proof}

\begin{proposition}
    \label{prop:OSY_PrimaryCylindersDisjoint}
    For every $i$, the set $\pcyls_i$ is a set of disjoint cylinders.
\end{proposition}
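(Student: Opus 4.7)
The plan is to mirror the argument of \cref{prop:OSD_PrimaryCylindersClassDisjoint} from the deadline case, adapting it to the delay setting. Suppose for contradiction that there exist two level-$i$ services $\serv_1, \serv_2 \in \pservsf\cup\pservss$ whose primary cylinders are not disjoint, and WLOG $\stime{\serv_1} < \stime{\serv_2}$. Non-disjointness gives us two facts: (i) the balls overlap, so $\dist{a_{\serv_1}}{a_{\serv_2}} \le 2 \cdot 2^{i-2} = 2^{i-1}$; and (ii) the intervals overlap, so $\stime{\serv_1} \ge \min_{\req \in \creqs_{\serv_2}} \rlt{\req}$, meaning there exists $\req \in \creqs_{\serv_2}$ with $\rlt{\req} \le \stime{\serv_1}$.

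Next I would verify that this request $\req$ lies in $\elig_{\serv_1}$. Since $\req \in \creqs_{\serv_2}$, we have $\alevel{\req}(\stime{\serv_2}) \le \level{\serv_2} - 3 = i - 3$, so in particular $\level{\req}(\stime{\serv_2}) \le i-3$ and $\dist{\req}{a_{\serv_2}} \le 2^{i-3}$. Combined with (i), this yields
\[
    \dist{\req}{a_{\serv_1}} \le \dist{\req}{a_{\serv_2}} + \dist{a_{\serv_2}}{a_{\serv_1}} \le 2^{i-3} + 2^{i-1} \le 2^{i},
\]
and since levels are monotone in time, $\level{\req}(\stime{\serv_1}) \le \level{\req}(\stime{\serv_2}) \le i-3$. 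Hence $\alevel{\req}(\stime{\serv_1}) \le i = \level{\serv_1}$. Since $\req$ is released by $\stime{\serv_1}$ and still pending at $\stime{\serv_2} > \stime{\serv_1}$, it was pending at $\stime{\serv_1}$ as well, so indeed $\req \in \elig_{\serv_1}$.

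To close the argument, I would observe that requests in $\elig_{\serv_1}$ are treated in exactly one of two ways by $\serv_1$: either $\req \in \reqs_{\serv_1}$, in which case $\req$ is served and could not reappear as a pending triggering request in $\creqs_{\serv_2}$; or $\req \in \elig_{\serv_1}\setminus\reqs_{\serv_1}$, in which case \cref{line:OSY_LevelUpgrade} raises $\level{\req}$ to $\level{\serv_1}+1 = i+1$, contradicting $\level{\req}(\stime{\serv_2}) \le i-3$. Both cases yield a contradiction, establishing disjointness.

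The only delicate point compared to the deadline analogue is that the delay case lacks an explicit deadline to tie $\req$ to a single service; instead, the triggering criterion bounds only $\alevel{\req}(\stime{\serv_2})$. The main obstacle is therefore the verification that $\req \in \elig_{\serv_1}$ despite $\level{\req}$ and the adjusted level being computed at an earlier time with respect to a different server location $a_{\serv_1}$. The monotonicity of $\level{\req}$ together with the triangle inequality on the ball radii is exactly what makes this go through, and the constants $2^{i-2}$ in the cylinder definition and the ``$-3$'' offset between triggering requests and their service level are comfortably large enough for the bound $2^{i-3} + 2^{i-1} \le 2^i$ to hold.
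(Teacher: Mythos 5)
Your proof is correct and takes essentially the same approach as the paper's: identify a triggering request $\req \in \creqs_{\serv_2}$ that is pending at $\stime{\serv_1}$, use the overlap of the balls and the triangle inequality to show $\dist{\req}{a_{\serv_1}} \le 2^i$, deduce $\req \in \elig_{\serv_1}$, and derive a contradiction from how $\serv_1$ processes eligible requests. Your write-up is somewhat more explicit than the paper's (separating the served vs.\ level-raised cases and spelling out the monotonicity of $\level{\req}$), but the argument is the same.
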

\begin{proof}
    Assume for contradiction that there exist two services $\serv_1,\serv_2$ of level $i$ such that $\pcyl{\serv_1}, \pcyl{\serv_2}$ are not disjoint.
    As the cylinders' time intervals are not disjoint, without loss of generality, assume that $\stime{\serv_1} \in \pivl{\serv_2}$.
    Since the cylinders are also not spatially disjoint, it holds that $\dist{a_{\serv_1}}{a_{\serv_2}} \le 2^{i-1}$.
    Now, from the definition of $\pivl{\serv_2}$, there exists a request $\req \in \creqs_{\serv_2}$ such that $\req$ is pending at $\stime{\serv_1}$.
    Now, note that since $\req \in \creqs_{\serv_2}$, it holds that $\dist{\req}{a_{\serv_2}} \le 2^{i-3}$, which implies $\dist{\req}{a_{\serv_1}} \le 2^{i-1} + 2^{i-3} \le 2^i$.
    Moreover, $\level{\req} \le i$ at $\stime{\serv_1}$.
    These facts imply that $\req$ was eligible for $\serv_1$, but this would imply that $\req$ increases in level to $i+1$ after $\serv_1$, in contradiction to $\req \in \creqs_{\serv_2}$.
\end{proof}

\begin{proposition}
    \label{prop:OSY_PrimaryCylinderIntersectionFar}
    For every $\serv \in \pservsf$, it holds that
    \[
        2^{\level{\serv}-8} \le \ccap{\opt}{\pcyl{\serv}} + \pdchg{\serv}.
    \]
\end{proposition}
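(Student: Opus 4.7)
The plan is to exploit the defining property of $\pservsf$: the algorithm found a small ball $\ball{a'}{2^{\level{\serv}-8}}$ containing a large amount of residual delay ($> 2^{\level{\serv}-4}$) of triggering requests, yet at $\stime{\serv}$ the optimum's server sits far from $a'$ (at distance $\ge 2^{\level{\serv}-7}$). Concretely, I would set $R := \creqs_{\serv} \cap \ball{a'}{2^{\level{\serv}-8}}$, and partition $R = R_u \cup R_s$, where $R_u$ are the requests in $R$ left unserved by $\opt$ at $\stime{\serv}$ and $R_s$ are the requests in $R$ served by $\opt$ strictly before $\stime{\serv}$. The argument then branches on whether $R_s$ is empty.

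If $R_s = \emptyset$, then every request in $R$ contributes to $\pdchg{\serv}$ (since $R \subseteq \creqs_{\serv}$), giving $\pdchg{\serv} \ge \rylt{R}{\stime{\serv}} > 2^{\level{\serv}-4} \ge 2^{\level{\serv}-8}$, and the claim follows. Otherwise, pick any $\req \in R_s$: the optimum's server visited $\req$ at some time $t_\req \in [\rlt{\req}, \stime{\serv})$, and since $\req \in \creqs_{\serv}$ gives $\rlt{\req} \ge \min_{\req' \in \creqs_{\serv}} \rlt{\req'}$, we have $t_\req \in \pivl{\serv}$. At $t_\req$ the optimum's server lies inside $\ball{a'}{2^{\level{\serv}-8}}$, while at $\stime{\serv}$ it lies outside $\ball{a'}{2^{\level{\serv}-7}}$; hence during the subinterval $[t_\req, \stime{\serv}] \subseteq \pivl{\serv}$ the optimum's server must traverse the annular gap between these two balls, incurring at least $2^{\level{\serv}-7} - 2^{\level{\serv}-8} = 2^{\level{\serv}-8}$ movement, all of which is realized inside $\ball{a'}{2^{\level{\serv}-7}}$.

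The last step is to verify that this movement is counted by $\ccap{\opt}{\pcyl{\serv}}$, i.e., that $\ball{a'}{2^{\level{\serv}-7}} \subseteq \ball{a_{\serv}}{2^{\level{\serv}-2}}$. For this I would invoke \cref{prop:OSY_MovingDistanceBounds}, which gives $\dist{a_{\serv}}{a'} \le 2^{\level{\serv}-3} + 2^{\level{\serv}-8}$, so by the triangle inequality any point of $\ball{a'}{2^{\level{\serv}-7}}$ lies within $2^{\level{\serv}-3} + 2^{\level{\serv}-8} + 2^{\level{\serv}-7} < 2^{\level{\serv}-2}$ of $a_{\serv}$. Combined with the movement bound, this yields $\ccap{\opt}{\pcyl{\serv}} \ge 2^{\level{\serv}-8}$ in the nonempty-$R_s$ case.

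The main obstacle I anticipate is keeping all the constants aligned so that the containment $\ball{a'}{2^{\level{\serv}-7}} \subseteq \ball{a_{\serv}}{2^{\level{\serv}-2}}$ holds with room to spare, and arguing carefully that the optimum's movement from the inner ball to outside the outer ball is geometrically forced to occur entirely inside $\ball{a'}{2^{\level{\serv}-7}}$ (not just in total length along the tour, but in the portion that crosses the annulus). Beyond that, the argument is a straightforward two-case split, and the rest of the proof amounts to bookkeeping on definitions of $\creqs_{\serv}$, $\pivl{\serv}$, and $\pdchg{\serv}$.
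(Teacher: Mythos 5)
Your proof is correct and follows essentially the same route as the paper's: both establish that either the optimum's server never entered $\ball{a'}{2^{\level{\serv}-8}}$ during $\pivl{\serv}$ (in which case the triggering requests concentrated there all contribute to $\pdchg{\serv}$, yielding at least $2^{\level{\serv}-4}$), or it did enter and must then cross the annulus out to distance $\ge 2^{\level{\serv}-7}$ by $\stime{\serv}$, yielding at least $2^{\level{\serv}-8}$ of movement inside $\ball{a'}{2^{\level{\serv}-7}} \subseteq \ball{a_{\serv}}{2^{\level{\serv}-2}}$. The only cosmetic difference is that you phrase the dichotomy in terms of whether some request of $R$ was served by the optimum (your $R_s$), while the paper phrases it directly as whether the optimum's server visited $\ball{a'}{2^{\level{\serv}-8}}$; these are logically aligned (nonempty $R_s$ implies a visit, and no visit implies $R_s=\emptyset$), and the containment check via \cref{prop:OSY_MovingDistanceBounds} matches the paper's. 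Your worry about the annulus-crossing being localized to $\ball{a'}{2^{\level{\serv}-7}}$ is handled by the paper's definition of edge-fragments inside a ball together with the triangle inequality, exactly as you anticipate; no gap there.
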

\begin{proof}
    Consider the location $a'$ to which the algorithm moved its server at the end of $\serv$.
    From the definition of $\pservsf$, we know that $a^*(\stime{\serv}) \notin \ball{a'}{2^{\level{\serv}-7}}$.
    However, we also know from the definition of $a'$ that at least $2^{\level{\serv} -4}$ of the residual delay of $\serv$ accumulated inside the ball $\ball{a'}{2^{\level{\serv} - 8}}$.
    Thus, at least one of the following holds:
    \begin{enumerate}
        \item The optimal server visited $\ball{a'}{2^{\level{\serv} - 8}}$ during $\pivl{\serv}$ and left by $\stime{\serv}$; thus, it incurred a moving cost of at least $2^{\level{\serv} - 8}$ inside $\ball{a'}{2^{\level{\serv}-7}}$.
        Now, note that $\dist{a'}{a} + 2^{\level{\serv}-7} \le 2^{\level{\serv} - 3} + 2^{\level{\serv}-8} + 2^{\level{\serv}-7} < 2^{\level{\serv}-2}$, and thus $\ball{a'}{2^{\level{\serv}-7}} \subseteq \ball{a}{2^{\level{\serv}-2 cx}}$.
        Thus, $\ccap{\opt}{\pcyl{\serv}} \ge 2^{\level{\serv} - 8}$.
        \item The optimal server did not visit $\ball{a'}{2^{\level{\serv} - 8}}$ during $\pivl{\serv}$.
        In this case, it must be that $\pdchg{\serv} \ge 2^{\level{\serv}-4}$.
    \end{enumerate}
    In both cases, $2^{\level{\serv}-8} \le \ccap{\opt}{\pcyl{\serv}} + \pdchg{\serv} $.
\end{proof}

\begin{proposition}
    \label{prop:OSY_PrimaryCylinderIntersectionStationary}
    For every $\serv \in \pservss$, it holds that
    \[
        2^{\level{\serv}-8} \le \ccap{\opt}{\pcyl{\serv}} + \pdchg{\serv}.
    \]
\end{proposition}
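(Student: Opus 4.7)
The plan is to show that if $\ccap{\opt}{\pcyl{\serv}} < 2^{\level{\serv}-8}$ then necessarily $\pdchg{\serv} \ge 2^{\level{\serv}-8}$, mirroring the structure of \cref{prop:OSY_PrimaryCylinderIntersectionFar} but leveraging a different property of the service. The key distinguishing fact for $\serv \in \pservss$ is that the condition in \cref{line:OSY_FindNewLoc} failed: for every $v \in \ms$, the residual delay of $\creqs_{\serv}$ inside $\ball{v}{2^{\level{\serv}-8}}$ is at most $2^{\level{\serv}-4}$. Applying this to $a^* := a^*(\stime{\serv})$ and combining with the triggering bound $\sum_{\req \in \creqs_{\serv}} \rylt{\req}{\stime{\serv}} \ge 2^{\level{\serv}-3}$ (since level $\level{\serv}-3$ was critical), we get that requests in $\creqs_{\serv} \setminus \ball{a^*}{2^{\level{\serv}-8}}$ carry total residual delay at least $2^{\level{\serv}-4}$.

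Next I would split this residual delay according to whether OPT has served each such request by $\stime{\serv}$. If the unserved portion alone contributes at least $2^{\level{\serv}-8}$, then $\pdchg{\serv} \ge 2^{\level{\serv}-8}$ and we are done. Otherwise there must exist a served triggering request $\req^\dagger \in \creqs_{\serv} \setminus R$ with $\dist{\req^\dagger}{a^*} > 2^{\level{\serv}-8}$, so OPT visited $\req^\dagger$ at some time $t \in \pivl{\serv}$ and then traveled to $a^*$ by $\stime{\serv}$.

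I would then argue that this travel alone contributes at least $2^{\level{\serv}-8}$ to $\ccap{\opt}{\pcyl{\serv}}$, by splitting on whether OPT's server remains in $\ball{a_{\serv}}{2^{\level{\serv}-2}}$ during $[t, \stime{\serv})$. If it does, the entire walk from $\req^\dagger$ to $a^*$ is captured by the cylinder and has length at least $\dist{\req^\dagger}{a^*} > 2^{\level{\serv}-8}$. If it does not, then since $\req^\dagger \in \creqs_{\serv} \subseteq \ball{a_{\serv}}{2^{\level{\serv}-3}}$, the sub-walk from $\req^\dagger$ to the first point where OPT exits $\ball{a_{\serv}}{2^{\level{\serv}-2}}$ has length at least $2^{\level{\serv}-2} - 2^{\level{\serv}-3} = 2^{\level{\serv}-3}$, which far exceeds $2^{\level{\serv}-8}$.

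The main obstacle I expect is the exits-the-ball subcase in the final step: one must verify that the portion of OPT's trajectory attributed to $\ccap{\opt}{\pcyl{\serv}}$ correctly respects both the time interval $\pivl{\serv}$ and the spatial restriction to $\ball{a_{\serv}}{2^{\level{\serv}-2}}$, including partial edges that cross the ball boundary. However, the generous gap between the forced $2^{\level{\serv}-3}$ and the target $2^{\level{\serv}-8}$ means a straightforward triangle-inequality estimate at the ball boundary is more than sufficient to conclude.
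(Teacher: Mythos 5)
Your proof is correct, and it takes a genuinely different route from the paper's. The paper's argument proceeds by first casing on the \emph{position} of $a^* := a^*(\stime{\serv})$ relative to $a_{\serv}$ (inside or outside $\ball{a_{\serv}}{1.5\cdot 2^{\level{\serv}-3}}$), and only in the ``close'' case does it invoke the failed condition of \cref{line:OSY_FindNewLoc}; in the ``far'' case it reasons purely geometrically about whether the optimal trajectory entered $\ball{a_{\serv}}{2^{\level{\serv}-3}}$. Your argument skips the positional case split entirely: you apply the failed condition at $a^*$ directly, obtain $\ge 2^{\level{\serv}-4}$ residual delay among triggering requests \emph{outside} $\ball{a^*}{2^{\level{\serv}-8}}$, and then split that mass by the served/unserved status under $\opt$. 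The unserved part feeds directly into $\pdchg{\serv}$, and the served part hands you a witness request $\req^\dagger$ from which a single enters-or-exits-the-cylinder-ball dichotomy gives the required intersection. This is arguably cleaner --- three cases instead of the paper's four, and no separate reasoning is needed about where $a^*$ sits relative to $a_{\serv}$. One small point worth pinning down in a full write-up: the walk from $\req^\dagger$ to $a^*$ could revisit edges, and $\ccap{\opt}{\cdot}$ is defined on the \emph{set} of traversed edges rather than the walk itself; your triangle-inequality bounds survive this because the edge set still connects $\req^\dagger$ to $a^*$ (or to the first exit of the ball), so its weight inside the ball dominates the relevant distance. Also, both your argument and the paper's implicitly take $a^*$ to be (or be close enough to) a point of $\ms$ so that \cref{line:OSY_FindNewLoc} applies, but since this assumption is shared, it is not a point of difference.
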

\begin{proof}
    Let $a^*$ be the location of the optimal server at $\stime{\serv}$.
    At least one of the following options holds:
    \begin{enumerate}
        \item $a^* \notin \ball{a_{\serv}}{1.5\cdot 2^{\level{\serv}-3}}$.
        In this case, we have the following subcases.
        Either the optimum did not visit $\ball{a_{\serv}}{2^{\level{\serv}-3}}$ during $\pivl{\serv}$, in which case all requests in $\creqs_{\serv}$ remain unserved in $\opt$ at $\stime{\serv}$, and $\pdchg{\serv} \ge 2^{\level{\serv}-3}$; or, the optimal solution visited $\ball{a_{\serv}}{2^{\level{\serv}-3}}$ during $\pivl{\serv}$, which in turn implies that $\ccap{\opt}{\pcyl{\serv}} \ge 2^{\level{\serv}-4}$.
        In both cases, $ \ccap{\opt}{\pcyl{\serv}} + \pdchg{\serv} \ge 2^{\level{\serv} - 4}$.

        \item $a^* \in \ball{a_{\serv}}{1.5\cdot 2^{\level{\serv}-3}}$.
        In this case, consider $\ball{a^*}{2^{\level{\serv}-8}}$: since $\serv \in \pservss$, it must be that the residual delay of $\serv$ inside $\ball{a^*}{2^{\level{\serv}-8}}$ is at most $2^{\level{\serv}-4}$.
        But, the total residual delay of $\serv$ is at least $2^{\level{\serv}-3}$.
        Thus, at least $2^{\level{\serv}-4}$ residual delay of $\serv$ is outside $\ball{a^*}{2^{\level{\serv}-8}}$.
        If $\opt$ was outside $\ball{a^*}{2^{\level{\serv}-8}}$ during $\pivl{\serv}$, it must be that $\ccap{\opt}{\pcyl{\serv}} \ge 2^{\level{\serv}-8}$ (for this, note that $\ball{a^*}{2^{\level{\serv}-8}} \subseteq \ball{a_{\serv}}{2^{\level{\serv}-2}}$ and thus inside $\pcyl{\serv}$).
        Otherwise, $\opt$ remained in $\ball{a^*}{2^{\level{\serv}-8}}$ during $\pivl{\serv}$, and thus $\pdchg{\serv} \ge 2^{\level{\serv}-4}$.
        In both cases, $ \ccap{\opt}{\pcyl{\serv}} + \pdchg{\serv} \ge 2^{\level{\serv} - 8}$.
    \end{enumerate}
    In all cases, $\ccap{\opt}{\pcyl{\serv}} + \pdchg{\serv} \ge 2^{\level{\serv}-8}$.
\end{proof}

\begin{lemma}
    \label{lem:OSY_BoundingPrimary}
    $\sum_{\serv \in \pservsf \cup \pservss} 2^{\level{\serv}} \le O(\log \npoint) \cdot \opt$.
\end{lemma}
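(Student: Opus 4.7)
The plan is to mirror the two-step strategy used for the deadline case in \cref{lem:OSD_BoundingFarPrimary} and its perforated strengthening \cref{lem:OSD_BoundingFarPrimaryImproved}. The existing ingredients already give a weaker $O(\log(\ar\npoint))$ bound: for each fixed level $i$, \cref{prop:OSY_PrimaryCylindersDisjoint} states that $\pcyls_i$ is a set of disjoint cylinders, so \cref{prop:OSY_DisjointImpliesChargePrimary} yields $\sum_{\serv \in \pcyls_i}(\ccap{\opt}{\pcyl{\serv}} + \pdchg{\serv}) \le \opt$; combined with the per-service lower bound of $2^{\level{\serv}-8}$ from \cref{prop:OSY_PrimaryCylinderIntersectionFar,prop:OSY_PrimaryCylinderIntersectionStationary}, this gives $\sum_{\serv \in \pcyls_i \cap (\pservsf \cup \pservss)} 2^{\level{\serv}} \le O(1) \cdot \opt$. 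Summed over the $O(\log(\ar\npoint))$ active levels, this already establishes a weak version of the lemma.

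To sharpen the factor to $O(\log \npoint)$, I plan to invoke the perforated-cylinder machinery from \cref{sec:Perf}: I will redefine each primary cylinder's metric component as the perforated ball of radius $2^{\level{\serv}-2}$ around $a_{\serv}$, obtained by removing balls of radius $r_{\serv} := 2^{\level{\serv}-2}/(c\npoint^2)$ around every node of $\ms$ for a sufficiently large constant $c$. With this definition, I would reverify the three key properties of the weak argument. Disjointness within a level (\cref{prop:OSY_PrimaryCylindersDisjoint}) transfers for free since perforation only shrinks cylinders. The intersection lower bound (\cref{prop:OSY_PrimaryCylinderIntersectionFar,prop:OSY_PrimaryCylinderIntersectionStationary}) is preserved up to a constant because the $\pdchg{\serv}$ component is unchanged, while in the $\ccap{\opt}{\pcyl{\serv}} \ge 2^{\level{\serv}-8}$ case the relevant $\opt$ traversal crosses a ball of radius $2^{\level{\serv}-8}$, which dwarfs the total perforation radius $O(\npoint \cdot r_{\serv}) = O(2^{\level{\serv}}/(c\npoint))$; choosing $c$ large enough ensures that an $\Omega(2^{\level{\serv}-8})$ fraction of the traversal lies outside every perforation hole.

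The crucial new property granted by perforation is cross-level disjointness: whenever $\level{\serv_1} - \level{\serv_2} > K$ for some $K = \Theta(\log \npoint)$, the perforation around $a_{\serv_2}$ inside $\serv_1$'s big ball has radius exceeding $2^{\level{\serv_2}-2}$, forcing the two perforated cylinders to be spatially disjoint regardless of their time intervals. Consequently, I can partition $\pservsf \cup \pservss$ into $K$ residue classes according to $\level{\serv} \bmod K$; within each class, all perforated cylinders are pairwise disjoint (within a level by the transferred \cref{prop:OSY_PrimaryCylindersDisjoint}, across levels by the new cross-level disjointness). Applying a perforated analog of \cref{prop:OSY_DisjointImpliesChargePrimary} to each class yields a charge of $\opt$, and summing over the $K = O(\log \npoint)$ classes gives the desired bound.

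The main obstacle will be cleanly verifying that the intersection lower bound survives perforation in both \cref{prop:OSY_PrimaryCylinderIntersectionFar} and \cref{prop:OSY_PrimaryCylinderIntersectionStationary}. The stationary case in particular is delicate: the relevant $\opt$ crossing is through a ball $\ball{a^*}{2^{\level{\serv}-8}}$ sitting near the interior of $\pcyl{\serv}$, so one must argue that even near the cylinder boundary the perforation holes (each of radius much smaller than $2^{\level{\serv}-8}$) cannot account for more than a constant fraction of the traversal length. Once this is established, the rest of the argument is a direct transcription of the deadline-case improvement, and the desired $O(\log \npoint) \cdot \opt$ bound follows.
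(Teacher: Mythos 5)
Your proposal is correct and takes essentially the same route as the paper: for each $\serv \in \pservsf \cup \pservss$, combine the lower bounds $\ccap{\opt}{\pcyl{\serv}} + \pdchg{\serv} \ge 2^{\level{\serv}-8}$ from \cref{prop:OSY_PrimaryCylinderIntersectionFar,prop:OSY_PrimaryCylinderIntersectionStationary}, pass to perforated cylinders of hole radius $\Theta(2^{\level{\serv}}/\npoint^2)$ via \cref{cor:OSD_PerforationCylinderIntersectionDifference}, then partition the perforated cylinders into $O(\log\npoint)$ disjoint families via \cref{prop:OSD_PerforationDisjointness} (per-level disjointness from \cref{prop:OSY_PrimaryCylindersDisjoint}, cross-level disjointness from perforation) and charge each family to $\opt$ using \cref{prop:OSY_DisjointImpliesChargePrimary}. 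One clarification: the ``main obstacle'' you flag at the end is in fact a non-issue in the paper's formalism --- \cref{prop:OSD_PerforationIntersectionDifference} bounds the perforation loss uniformly as $2r\npoint^2/\rho$ by a per-edge argument (each edge can lose at most $2r/\rho$, and the $\opt$ subgraph has at most $\npoint^2$ edges), with no case distinction between where the $\opt$ crossing lies within the cylinder; note also that your informal estimate of $O(\npoint\cdot r_\serv)$ for the loss is off by a factor of $\npoint$ (the correct bound is $O(\npoint^2 r_\serv)$), but since you already chose $r_\serv = \Theta(2^{\level{\serv}}/\npoint^2)$ this does not affect the conclusion.
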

\begin{proof}
    Define $\rho = 2^{9}\cdot \npoint^2$.
    Combining \cref{prop:OSY_PrimaryCylinderIntersectionStationary,prop:OSY_PrimaryCylinderIntersectionFar,cor:OSD_PerforationCylinderIntersectionDifference}, for every service $\serv \in \pservsf \cup \pservss$, we have that
    \begin{align*}
        2^{\level{\serv}-8} &\le \ccap{\opt}{\cyl_{\serv}} + \pdchg{\serv} \\
        &\le \ccap{\percyl{\rho}_{\serv}}{\opt} + 2\cdot  2^{\level{\serv}-1} \cdot \npoint^2 /\rho + \pdchg{\serv} \\
        &\le \ccap{\percyl{\rho}_{\serv}}{\opt}  + 2^{\level{\serv} - 9} + \pdchg{\serv}
    \end{align*}
    Simplifying, we get $2^{\level{\serv} - 9} \le \ccap{\percyl{\rho}_{\serv}}{\opt} + \pdchg{\serv}$.
    Summing over all $\serv \in \pservsf \cup \pservss$, we have
    \[
        \sum_{\serv \in  \pservsf \cup \pservss} 2^{\level{\serv}} \le O(1)\cdot \sum_{\serv \in  \pservsf \cup \pservss} \pr{\ccap{\percyl{\rho}_{\serv}}{\opt} + \pdchg{\serv} }.
    \]
    Note that for every $i$, the cylinders of $\pcyls_i$ are disjoint (\cref{prop:OSY_PrimaryCylindersDisjoint}).
    Thus, defining $\percyls := \pc{\percyl{\rho}_{\serv} | \cyl_{\serv} \in \ccyls}$, \cref{prop:OSD_PerforationDisjointness} implies that $\percyls$ can be partitioned into $O(\log \npoint)$ disjoint sets.
    \Cref{prop:OSY_DisjointImpliesChargeCertified} thus implies that $\sum_{\serv \in \cservs} 2^{\level{\serv}} \le O(\log \npoint) \cdot \opt$, completing the proof.
\end{proof}

\subsubsection*{Bounding Certified Services}

In this subsection, we would like to prove the following.
\begin{lemma}
    \label{lem:OSY_BoundingCertified}
    $\sum_{\serv \in \cservs} 2^{\level{\serv}} \le O(\log \npoint) \cdot \opt$.
\end{lemma}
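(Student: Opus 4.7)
The plan is to mirror the certified-service analysis from the deadline case (\cref{lem:OSD_BoundingCertified}) and combine it with the movement-plus-delay charging scheme used for primary services in the delay case (\cref{prop:OSY_DisjointImpliesChargePrimary}). I would define, for each $\serv \in \cservs$, a time $\ptime{\serv}$ and an interval $\civl{\serv} := \I{\ptime{\serv}}{\ftime{\serv}}$ in analogy with the deadline construction: $\ptime{\serv} := \ftime{\serv'}$ where $\serv'$ is the latest earlier certified service of the same level whose center lies within $O(2^{\level{\serv}})$ of $a_{\serv}$. The certified cylinder would then be $\ccyl{\serv} := (\ball{a_{\serv}}{3\cdot 2^{\level{\serv}}}, \civl{\serv})$, and $\ccyls_i$ the collection of such cylinders at level $i$.

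The first step is a separation lemma analogous to \cref{prop:OSD_CertifierBetweenCertified}: whenever two same-level certified services $\serv_1, \serv_2$ have nearby centers, the certifier of $\serv_1$ (of level $\level{\serv_1}+4$ or $+5$, by \cref{prop:OSY_CertificationLevelDifference}) must land strictly between their forwarding times. The same argument as in the deadline case goes through, because the critical level definition still forces upgrades to $\level{\serv}+1$ on all eligible requests (\cref{line:OSY_LevelUpgrade}), so any witness for $\serv_1$ that is still pending at $\stime{\serv_2}$ must have been ``consumed'' in between. This yields temporal disjointness of $\ccyls_i$, i.e., the analog of \cref{prop:OSD_CertifiedCylindersDisjoint}.

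The heart of the proof is the intersection lower bound. Here I would define a delay charge $\cdchg{\serv} := \sum_{\req \in R} \rylt{\req}{\ftime{\serv}}$, where $R \subseteq \reqs_{\serv}$ are the requests of $\reqs_\serv$ that the optimum leaves unserved until $\ftime{\serv}$; a disjointness-to-charge lemma analogous to \cref{prop:OSY_DisjointImpliesChargePrimary} would then show that for any set of services whose cylinders are pairwise disjoint, $\sum_\serv \pr{\ccap{\opt}{\ccyl{\serv}} + \cdchg{\serv}} \le \opt$. The key inequality to establish is $\ccap{\opt}{\ccyl{\serv}} + \cdchg{\serv} \ge \Omega(2^{\level{\serv}})$. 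The argument splits on whether $\opt$ serves most requests of $\reqs_\serv$ within $\civl{\serv}$: if it does, then by the Steiner-tree ball lemma (\cref{prop:OSD_BallIntersection}) applied to $\reqs_\serv \subseteq \ball{a_\serv}{2^{\level{\serv}}}$, the optimal movement within $\ball{a_\serv}{3\cdot 2^{\level{\serv}}}$ during $\civl{\serv}$ is at least $\Omega(\stree^*(\reqs_\serv)) = \Omega(2^{\level{\serv}})$, where the final equality uses the fact that $\pcstree$ reached its $6\cdot 2^{\level{\serv}}$ threshold at $\ftime{\serv}$ (\cref{line:OSY_DefineForwardingTime}); if $\opt$ leaves many requests unserved, then $\cdchg{\serv}$ collects the penalties that $\pcstree^*$ would have paid on those requests, which is again $\Omega(2^{\level{\serv}})$ by the same budget condition. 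I anticipate the subtle part here to be arguing that one of these two scenarios does indeed cover $\Omega(2^{\level{\serv}})$ of cost, since the $\pcstree$ threshold bounds the sum of the tree cost \emph{and} penalties rather than either one individually; the argument must carefully leverage that the penalty function $\pen_{\ftime{\serv}}$ is exactly the incremental delay each request would incur between $\rlt{\req}$ (which lies inside $\civl{\serv}$ by an analog of \cref{prop:OSD_RequestInCertifiedCylinder}) and $\ftime{\serv}$.

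Finally, to lift the bound from $O(\log(\ar\npoint))$ to $O(\log \npoint)$, I would apply the perforation machinery from \cref{sec:Perf}, replacing each $\ccyl{\serv}$ with its perforated version $\cpercyl{\serv}$ using radius parameter $\rho = \Theta(\npoint^2)$. As in the proof of \cref{lem:OSY_BoundingPrimary}, \cref{cor:OSD_PerforationCylinderIntersectionDifference} shows that the intersection $\ccap{\opt}{\cpercyl{\serv}}$ differs from $\ccap{\opt}{\ccyl{\serv}}$ by only $o(2^{\level{\serv}})$, while \cref{prop:OSD_PerforationDisjointness} partitions the perforated cylinders across all levels into $O(\log \npoint)$ disjoint subsets. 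Summing over these partitions and applying the disjointness-to-charge lemma yields $\sum_{\serv \in \cservs} 2^{\level{\serv}} \le O(\log \npoint) \cdot \opt$, as desired. The main obstacle, as noted, is isolating the $\Omega(2^{\level{\serv}})$ charge from the prize-collecting budget condition in the presence of two simultaneously available charging modes (movement vs.\ delay).
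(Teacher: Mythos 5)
Your overall architecture matches the paper's: the same certified cylinders $\ccyl{\serv} = (\ball{a_{\serv}}{3\cdot 2^{\level{\serv}}}, \civl{\serv})$, the same temporal-disjointness argument via an analogue of \cref{prop:OSD_CertifierBetweenCertified}, and the same use of \cref{cor:OSD_PerforationCylinderIntersectionDifference} and \cref{prop:OSD_PerforationDisjointness} at the end. However, the heart of the argument --- the intersection lower bound --- is not correctly set up, and the obstacle you flag at the end is exactly the place where your sketch breaks down.

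First, your delay charge $\cdchg{\serv}$ is defined over $R \subseteq \reqs_\serv$, the requests \emph{served} by the algorithm. This is the wrong set: the prize-collecting budget at \cref{line:OSY_DefineForwardingTime} is a statement about $\pcstree(\elig_{\serv}, \pen_{\tau}; a)$, where penalties can (and typically do) accrue on $\elig_{\serv}\setminus\reqs_{\serv}$, the eligible requests that the algorithm chose to \emph{invest in} rather than serve. If you charge only the delay of $\reqs_\serv$, you cannot connect $\cdchg{\serv}$ to the budget inequality at all. The paper defines $\cdchg{\serv}$ over $R\subseteq \elig_{\serv}$, and uses as the per-request charge $\cpen{\serv}(\req) = \prp{\ylt{\req}{\ftime{\serv}} - \max\pc{\ylt{\req}{\stime{\serv}}, \ctr_{\req}(\stime{\serv})}}$, which is exactly the $\pcstree$ penalty $\pen_{\ftime{\serv}}(\req)$ after the reset of \cref{line:OSY_ZeroResidualDelay}; your $\rylt{\req}{\ftime{\serv}}$ is not the same quantity and would not match the budget.

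Second, and more importantly, the case split you propose (``OPT serves most of $\reqs_\serv$'' versus ``OPT leaves many unserved'') does not establish the $\Omega(2^{\level{\serv}})$ bound. In the first branch you need $\stree^*(\reqs_\serv) = \Omega(2^{\level{\serv}})$, but the budget condition only constrains $\pcstree(\elig_{\serv}, \pen_{\tau}; a)$, whose tree-cost component can be arbitrarily small if the penalty component is large; in the second branch you have no handle on how much penalty OPT actually pays. The paper avoids the case split entirely with one comparison: let $R\subseteq\elig_{\serv}$ be the eligible requests whose locations OPT visits during $\civl{\serv}$ (equivalently, those OPT serves by $\ftime{\serv}$, via the analogue of \cref{prop:OSD_RequestInCertifiedCylinder}). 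Serving $R$ and paying penalties on $\elig_{\serv}\setminus R$ is a \emph{feasible} $\pcstree^*$ solution, so
\[
  2\cdot 2^{\level{\serv}} \le \pcstree^*(\elig_{\serv},\cpen{\serv};a_{\serv}) \le \stree^*(R\cup\{a_{\serv}\}) + \sum_{\req\in\elig_{\serv}\setminus R}\cpen{\serv}(\req) \le 2\ccap{\opt}{\ccyl{\serv}} + 2^{\level{\serv}} + \cdchg{\serv},
\]
using \cref{prop:OSD_BallIntersection} for the Steiner-tree term. This single inequality handles both ``charging modes'' simultaneously and yields $2^{\level{\serv}} \le 2\ccap{\opt}{\ccyl{\serv}} + \cdchg{\serv}$. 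The idea you are missing is that OPT's own movement within the cylinder induces the feasible $\pcstree^*$ solution against which the budget is compared; the unknown split between movement and delay is resolved by construction rather than by case analysis.
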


\begin{definition}
    Let $\serv \in \cservs$ be a certified service.
    Let $\serv' \in\cservs$ be the certified service with maximum $\ftime{\serv'}$ subject to $\level{\serv'} = \level{\serv}$, $\ftime{\serv'} \le \stime{\serv}$ and $\dist{\req_{\serv'}}{\req_{\serv}} < 6\cdot 2^{\level{\serv}}$.
    We define:
    \begin{enumerate}
        \item The time $\ptime{\serv} := \ftime{\serv'}$ if $\serv'$ exists (otherwise, define $\ptime{\serv} = -\infty$).
        \item The time interval $\civl{\serv} := \I{\ptime{\serv}}{\ftime{\serv}}$; note that $\stime{\serv}\in\civl{\serv}$.
    \end{enumerate}
\end{definition}

\begin{definition}[$\cpen{\serv}$ and $\cdchg{\serv}$]
    For every certified service $\serv \in\cservs$:
    \begin{enumerate}
        \item Define $\cpen{\serv}$ on requests $\req \in \elig_{\serv}$ such that
        \[
            \cpen{\serv}(\req) := \prp{\ylt{\req}{\ftime{\serv}} - \max\pc{\ylt{\req}{\stime{\serv}}, \ctr_{\req}(\stime{\serv})}}.
        \]
        (Here, recall that $\stime{\serv}$ refers to the time immediately before the service $\serv$.)
        \item Let $R \subseteq \elig_{\serv}$ be the subset of $\serv$-eligible requests unserved in the optimal solution at $\ftime{\serv}$.
        Define $\cdchg{\serv} := \sum_{\req \in R} \cpen{\serv}(\req)$.
    \end{enumerate}
\end{definition}

\begin{proposition}
    [analogue of \cref{prop:OSD_CertifierBetweenCertified}]
    \label{prop:OSY_CertifierBetweenCertified}
    Let services $\serv_1, \serv_2 \in \cservs$ be such that $\level{\serv_1} = \level{\serv_2} = \ell$ and $\dist{a_{\serv_1}}{a_{\serv_2}} \le 6\cdot 2^{\ell}$.
    Assuming WLOG that $\stime{\serv_1} < \stime{\serv_2}$, and letting $\serv$ be the service that made $\serv_1$ certified, it holds that $\stime{\serv}\in\I{\ftime{\serv_1}}{\stime{\serv_2}}$.
    (In particular, $\ftime{\serv_1} < \stime{\serv_2}$.)
\end{proposition}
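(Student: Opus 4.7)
The plan is to mirror the two-case structure of the deadline analogue in \cref{prop:OSD_CertifierBetweenCertified}, ruling out $\stime{\serv} \le \ftime{\serv_1}$ and $\stime{\serv} > \stime{\serv_2}$ in turn. Throughout, I exploit that by \cref{prop:OSY_CertificationLevelDifference} the certifying service $\serv$ satisfies $\level{\serv} \in \{\ell+4, \ell+5\}$, and that the certifying witness $\req \in \creqs_{\serv}$ for $\serv_1$ must have $\level{\req}(\stime{\serv}) = \ell+1$ (its level having been assigned by $\serv_1$ via \cref{line:OSY_LevelUpgrade}).

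For the first case, suppose $\stime{\serv} \le \ftime{\serv_1}$. The plan is to derive a contradiction by showing that the witness $\req$ cannot have positive residual delay at $\stime{\serv}$, violating $\req \in \creqs_{\serv}$. Since $\req \in \elig_{\serv_1} \setminus \reqs_{\serv_1}$, \cref{line:OSY_Invest} raised $\ctr_{\req}$ to at least $\ylt{\req}{\ftime{\serv_1}}$ during $\serv_1$; since delay is non-decreasing and $\stime{\serv} \le \ftime{\serv_1}$, this forces $\rylt{\req}{\stime{\serv}} = \prp{\ylt{\req}{\stime{\serv}} - \ctr_{\req}(\stime{\serv})} = 0$, contradicting $\req \in \creqs_{\serv}$.

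For the second case, suppose $\stime{\serv} > \stime{\serv_2}$, and let $\serv'$ be the service that certified $\serv_2$; since $\serv_2$'s witness for $\serv'$ was created at $\stime{\serv_2}$, we have $\stime{\serv'} > \stime{\serv_2}$. I would split into the two subcases $\stime{\serv} < \stime{\serv'}$ and $\stime{\serv} > \stime{\serv'}$, deriving a contradiction in each by showing that an eligibility relation must hold. Specifically, chaining $\dist{\req}{a_{\serv}} \le 2^{\level{\serv}-3}$, $\dist{\req}{a_{\serv_1}} \le 2^\ell$, the hypothesis $\dist{a_{\serv_1}}{a_{\serv_2}} \le 6\cdot 2^\ell$, and the symmetric bounds for $\serv'$ via its witness $\req' \in \creqs_{\serv'}$, one obtains $\dist{a_{\serv}}{a_{\serv_2}} \le 11 \cdot 2^\ell$ and $\dist{a_{\serv'}}{a_{\serv_1}} \le 11\cdot 2^\ell$. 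Thus any witness $\req''$ of $\serv_2$ (which has $\dist{\req''}{a_{\serv_2}} \le 2^\ell$ and level $\ell+1$) satisfies $\dist{\req''}{a_{\serv}} \le 12\cdot 2^\ell < 2^{\level{\serv}}$, placing it in $\elig_{\serv}$; symmetrically the witness $\req$ of $\serv_1$ lands in $\elig_{\serv'}$. In the first subcase, $\serv$ would then raise $\req''$'s level to $\ge \ell+5$, contradicting $\req'$ still being a level-$(\ell+1)$ witness at $\stime{\serv'}$; in the second subcase, $\serv'$ would erase $\req$ as a level-$(\ell+1)$ witness, contradicting $\req \in \creqs_{\serv}$ at $\stime{\serv}$.

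The main obstacle is the tightness of the triangle-inequality budget, since $\level{\serv}$ may be $\ell+4$ rather than $\ell+5$, so the eligibility radius $2^{\level{\serv}}$ is only $16 \cdot 2^\ell$, leaving a narrow but sufficient gap against the accumulated $12\cdot 2^\ell$. A secondary subtlety is that the delay analogue of ``served by earliest deadline'' in the deadline proof is replaced by the investment step on \cref{line:OSY_Invest}; I would therefore rely on the residual-delay zeroing mechanism (rather than on an ordering of deadlines) to obtain the contradiction in the first case.
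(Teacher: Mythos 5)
Your proposal is correct and follows essentially the same two-case structure and triangle-inequality arithmetic as the paper's proof: case one rules out $\stime{\serv}\le\ftime{\serv_1}$ via the investment counter raised at \cref{line:OSY_Invest}, and case two chains distances to show witnesses fall within the eligibility ball of the later service, matching the paper's bound of $12\cdot 2^\ell<2^{\level{\serv}}$. The only minor imprecision is that in your contradiction steps you say the later service ``would raise the level'' of a witness, whereas it could alternatively serve it; but either way the request ceases to be a witness, so the conclusion holds unchanged.
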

\begin{proof}
    The two possible cases which contradict our proposition are that $\stime{\serv} \le \ftime{\serv_1}$ or that $\stime{\serv} >\stime{\serv_2}$.
    First, we prove that $\stime{\serv} > \ftime{\serv_1}$
    Consider the triggering request $\req \in \creqs_\serv$ that made $\serv$ certify $\serv_1$: this request was a witness for $\serv_1$, and thus in $\elig_{\serv_1}$; however, $\serv_1$ maintains that eligible requests will not accumulate residual delay until after time $\ftime{\serv_1}$.
    But, requests in $\creqs_{\serv}$ have positive residual delay at $\stime{\serv}$; thus $\stime{\serv} > \ftime{\serv_1}$.

    Now, assume for contradiction that $\stime{\serv} >\stime{\serv_2}$.
    Consider the service $\serv'$ which certified $\serv_2$; it must also be the case that $\stime{\serv'} > \stime{\serv_2}$.
    Suppose that $\stime{\serv} < \stime{\serv'}$.
    In this case, observe that all witnesses for $\serv_2$ at $\stime{\serv}$ are in $\ball{a_{\serv_2}}{2^{\ell}}$; but it holds through triangle inequality that
    \[
        \dist{a_{\serv_2}}{a_{\serv}} \le \dist{a_{\serv_2}}{a_{\serv_1}} +\dist{a_{\serv_1}}{\req} + \dist{\req}{a_{\serv}} \le   6\cdot 2^{\ell} + 2^{\ell} + 2^{\level{\serv} - 3}
    \]
    Using \cref{prop:OSY_CertificationLevelDifference}, $\level{\serv} \le \ell+5$, yielding that $\dist{a_{\serv_2}}{a_{\serv}} \le 11\cdot 2^{\ell}$.
    This implies that $\ball{a_{\serv_2}}{2^{\ell}} \subseteq \ball{a_\serv}{12\cdot 2^\ell}$.
    However, \cref{prop:OSY_CertificationLevelDifference} again implies $\level{\serv}\ge \ell+4$, and thus $\ball{a_{\serv_2}}{2^{\ell}} \subseteq \ball{a_{\serv}}{2^{\level{\serv}}}$.
    Combine this with the fact that the level of all witnesses for $\serv_2$ at $\stime{\serv}$ is at most $\ell + 1$ which is less than $\level{\serv}$; we thus obtain that all witnesses to $\serv_2$ at $\stime{\serv}$ are in $\elig_{\serv}$.
    But, after $\serv$, these witnesses would no longer be witnesses for $\serv_2$, in contradiction to one of them triggering $\serv'$ and certifying $\serv_2$.
    Similarly, if $\stime{\serv} >\stime{\serv'}$, the service $\serv'$ would leave no witnesses for $\serv_1$ to trigger $\serv$, which is a contradiction.
\end{proof}

\begin{definition}[certified cylinders]
    For a certified service $\serv \in \cservs$, define the certified cylinder $\ccyl{\serv} := (\ball{\req_{\serv}}{3\cdot 2^{\level{\req}}}, \civl{\serv})$.
    Define $\ccyls$ to be the set of all certified cylinders; in addition, for every $i$ define $\ccyls_i$ to be the set of certified cylinders formed from level-$i$ services.
\end{definition}

\begin{proposition}
    \label{prop:OSY_DisjointImpliesChargeCertified}
    Let $\servs' \subseteq \cservs$ be a set of services such that their certified cylinders are disjoint.
    Then it holds that
    \[
        \sum_{\serv \in \servs'} \pr{\ccap{\opt}{\ccyl{\serv}} + \cdchg{\serv}} \le \opt.
    \]
\end{proposition}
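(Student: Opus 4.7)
The plan is to decompose $\opt = \opt^m + \opt^d$ into its movement and delay components and bound each term separately, in the spirit of the proof of \cref{prop:OSY_DisjointImpliesChargePrimary}. For the movement term, the certified cylinders in $\servs'$ are pairwise disjoint by hypothesis, so the restrictions of the optimal server's trajectory to the different cylinders are disjoint pieces of the time-space plane; exactly as in \cref{obs:OSD_DisjointImpliesCharge}, this yields $\sum_{\serv \in \servs'} \ccap{\opt}{\ccyl{\serv}} \le \opt^m$.

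For the delay term, I would swap the order of summation and bound the total charge on each request separately. Fix a request $\req$ and let $\serv_1, \ldots, \serv_k$ be the services in $\servs'$ that contribute a nonzero $\cpen{\serv_j}(\req)$ term, ordered by service time; by the definition of $\cdchg{\serv_j}$ we have $\req \in \elig_{\serv_j}$ for every $j$, and $\opt$ has not served $\req$ by $\ftime{\serv_j}$. For each $j < k$ the request $\req$ is still pending at $\stime{\serv_{j+1}}$, so it cannot have been served by $\serv_j$, i.e., $\req \in \elig_{\serv_j} \setminus \reqs_{\serv_j}$; hence \cref{line:OSY_Invest} in $\serv_j$ raises $\ctr_{\req}$ to at least $\ylt{\req}{\ftime{\serv_j}}$, and monotonicity of counters gives $\ctr_{\req}(\stime{\serv_{j+1}}) \ge \ylt{\req}{\ftime{\serv_j}}$. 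Plugging this into the definition of $\cpen{\serv_{j+1}}(\req)$ yields $\cpen{\serv_{j+1}}(\req) \le \ylt{\req}{\ftime{\serv_{j+1}}} - \ylt{\req}{\ftime{\serv_j}}$.

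Summing telescopes to $\sum_{j=1}^{k} \cpen{\serv_j}(\req) \le \ylt{\req}{\ftime{\serv_k}}$, where the base step $\cpen{\serv_1}(\req) \le \ylt{\req}{\ftime{\serv_1}}$ follows by observing that the $\max$ in the definition of $\cpen{\serv_1}(\req)$ is at least $\ylt{\req}{\stime{\serv_1}} \ge 0$. Since $\opt$ has not served $\req$ by $\ftime{\serv_k}$, the delay cost $\opt$ eventually incurs on $\req$ is at least $\ylt{\req}{\ftime{\serv_k}}$; summing over all requests therefore yields $\sum_{\serv \in \servs'} \cdchg{\serv} \le \opt^d$, and adding the two bounds gives the claim. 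I do not anticipate a serious obstacle here: the proof is essentially a bookkeeping telescoping identification between the algorithm's investment counters and the delay $\opt$ still owes, and the only care required is in handling the $\max$ in $\cpen$ and in the one-line verification that $\req$ cannot have been served by any $\serv_j$ with $j<k$, so that \cref{line:OSY_Invest} is indeed the line that updates $\ctr_{\req}$.
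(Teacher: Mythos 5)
Your proof is correct and takes essentially the same route as the paper's: both bound the movement part via \cref{obs:OSD_DisjointImpliesCharge} (which is where the disjointness hypothesis is used) and bound the delay part per request by arguing that the charges $\cpen{\serv}(\req)$ across different services cover disjoint slices of the delay $\opt$ eventually pays for $\req$, using \cref{line:OSY_Invest} to show that the counter $\ctr_\req$ is pushed up to $\ylt{\req}{\ftime{\serv}}$ after each non-serving service. The paper phrases the no-double-counting as disjoint time intervals $[t',\ftime{\serv}]$ while you make the telescoping sum over delay values explicit, but these are the same bookkeeping argument.
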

\begin{proof}
    Denote by $\opt^m, \opt^d$ the movement and delay costs of the optimal solution, respectively.
    One can observe, as in \cref{obs:OSD_DisjointImpliesCharge}, that $\sum_{\serv \in \servs'} \ccap{\opt}{\ccyl{\serv}} \le \opt^m$.
    It remains to show that $\sum_{\serv \in \servs'} \cdchg{\serv} \le \opt^d$.

    Recall that the definition of $\cdchg{\serv}$ is $\sum_{\req \in R} \cpen{\serv}(\req)$, where $R\subseteq \elig_{\serv}$ is the set of requests unserved by the optimal solution until $\ftime{\serv}$.
    For every $\req \in R$, note that $\ctr'_{\req} := \max\pc{\ylt{\req}{\stime{\serv}}, \ctr_{\req}(\stime{\serv})}$ is exactly the value of the counter $\ctr_{\req}$ after \cref{line:OSY_ZeroResidualDelay} of $\serv$; thus, $\cpen{\serv}(\req) =\prp{ \ylt{\req}{\stime{\serv}} - \ctr'_{\req}} $.
    Let $t' \in [\stime{\serv}, \ftime{\serv}]$ be the point in time in which $\ylt{\req}{t'} = \ctr'_{\req}$; the optimal solution incurs a delay cost of $\ylt{\req}{\ftime{\serv}} - \ctr'_{\req}$ during the interval $[t', \ftime{\serv}]$, and thus the charging $\cdchg{\serv}$ is valid.
    Moreover, $\serv$ either serves $\req$, or raises $\ctr_{\req}$ to at least $\ylt{\req}{\ftime{\serv}}$ (in \cref{line:OSY_Invest}); thus, the delay of $\req$ during $[t', \ftime{\serv}]$ is only charged once to the optimal solution.
    This completes the proof.
\end{proof}

\begin{proposition}
    \label{prop:OSY_CertifiedCylindersDisjoint}
    For every $i$, $\ccyls_i$ is a set of disjoint cylinders.
\end{proposition}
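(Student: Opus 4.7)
The plan is to mirror the proof of \cref{prop:OSD_CertifiedCylindersDisjoint} from the deadline case, leveraging the already-established \cref{prop:OSY_CertifierBetweenCertified} which is the delay-case analogue of \cref{prop:OSD_CertifierBetweenCertified}. The key structural fact is that the radius $3\cdot 2^{\level{\serv}}$ of the ball in $\ccyl{\serv}$ and the threshold $6\cdot 2^{\level{\serv}}$ in the definition of $\ptime{\serv}$ were chosen precisely so that the spatial and temporal cases line up.

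Concretely, I would pick two arbitrary cylinders $\ccyl{\serv_1}, \ccyl{\serv_2} \in \ccyls_i$ and split into two cases based on the distance between their centers. In the first case, if the two representative points used to center the balls are at distance greater than $6 \cdot 2^i$, then the two balls of radius $3\cdot 2^i$ are spatially disjoint and the cylinders are disjoint by definition. This is the easy case.

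In the second (close) case, WLOG $\stime{\serv_1} < \stime{\serv_2}$, and we invoke \cref{prop:OSY_CertifierBetweenCertified} on the pair $(\serv_1, \serv_2)$: letting $\serv$ be the service that certifies $\serv_1$, it lies temporally between $\ftime{\serv_1}$ and $\stime{\serv_2}$, which in particular yields $\ftime{\serv_1} < \stime{\serv_2}$. Then $\serv_1$ itself is an eligible candidate in the definition of $\ptime{\serv_2}$ (same level $i$, forwarding time at most $\stime{\serv_2}$, centers within $6\cdot 2^i$), so the maximality in that definition gives $\ptime{\serv_2} \ge \ftime{\serv_1}$. Consequently the intervals $\civl{\serv_1} = \I{\ptime{\serv_1}}{\ftime{\serv_1}}$ and $\civl{\serv_2} = \I{\ptime{\serv_2}}{\ftime{\serv_2}}$ are disjoint, and the cylinders are disjoint temporally.

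There is no substantial obstacle here beyond correctly matching definitions. The only subtlety worth double-checking is the interaction between the representative point used in the definition of $\ccyl{\serv}$ (the triggering-request notation $\req_{\serv}$) and the center $a_{\serv}$ that appears in \cref{prop:OSY_CertifierBetweenCertified}: since triggering requests of a level-$i$ service lie within $\ball{a_{\serv}}{2^{i-3}}$, the $\le 6\cdot 2^i$ closeness condition on the two centers is equivalent up to constants, and the $3\cdot 2^{i}$ ball radius comfortably absorbs the shift. So the spatial-disjointness case still goes through without modification, and the proof is a near-verbatim adaptation of the deadline argument.
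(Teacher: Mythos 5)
Your proposal is correct and takes the same approach as the paper, which literally defers to the deadline-case proof of \cref{prop:OSD_CertifiedCylindersDisjoint} carried over via \cref{prop:OSY_CertifierBetweenCertified}: the two-case split (spatial disjointness when centers are far; temporal disjointness via \cref{prop:OSY_CertifierBetweenCertified} plus the maximality in the definition of $\ptime{\serv_2}$ when close) is exactly what the paper intends.

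One caveat on your ``subtlety'' paragraph: the claim that the $3\cdot 2^i$ radius ``comfortably absorbs the shift'' between $\req_{\serv}$ and $a_{\serv}$ does not hold as stated. If each ball center shifts by up to $2^{i-3}$, then $\dist{\req_{\serv_1}}{\req_{\serv_2}} \le 6\cdot 2^i$ only gives $\dist{a_{\serv_1}}{a_{\serv_2}} \le 6\cdot 2^i + 2^{i-2}$, which is \emph{not} enough to invoke \cref{prop:OSY_CertifierBetweenCertified} (whose hypothesis is $\le 6\cdot 2^i$); similarly, the far case would not cleanly give spatial disjointness of the $3\cdot 2^i$-balls. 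The cleaner resolution is that $\req_{\serv}$ in the delay-case definitions of $\ptime{\serv}$ and $\ccyl{\serv}$ is a carryover typo from the deadline section: in the delay setting there is no single triggering request (only the set $\creqs_{\serv}$), and the surrounding statements --- \cref{prop:OSY_CertifierBetweenCertified}, \cref{prop:OSY_CertifiedCylinderIntersection} --- consistently use $a_{\serv}$ as the relevant center. Reading $a_{\serv}$ uniformly makes the constants match exactly, and no absorption argument is needed.
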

\begin{proof}
    The proposition results from \cref{prop:OSY_CertifierBetweenCertified} in the same way that \cref{prop:OSD_CertifiedCylindersDisjoint} results from \cref{prop:OSD_CertifierBetweenCertified}.
\end{proof}

\begin{proposition}
    \label{prop:OSY_RequestInCertifiedCylinder}
    For every certified service $\serv \in \cservs$ and request $\req \in \elig_{\serv}$, it holds that $\rlt{\req} > \ptime{\serv}$.
\end{proposition}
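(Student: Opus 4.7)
The plan is to mirror the structure of the deadline-case argument in \cref{prop:OSD_RequestInCertifiedCylinder}. If $\ptime{\serv} = -\infty$ the claim is immediate, so I assume for contradiction that $\rlt{\req} \le \ptime{\serv}$ and aim to show that $\req$ must participate in an intermediate service whose level exceeds $\level{\serv}$, which would upgrade $\level{\req}$ past $\level{\serv}$ and contradict $\req \in \elig_{\serv}$ (noting that eligibility requires $\req$ to be pending at $\stime{\serv}$, so in particular pending throughout $[\rlt{\req}, \stime{\serv}]$).

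To produce that intermediate service, let $\serv'$ be the certified level-$\level{\serv}$ service from the definition of $\ptime{\serv}$, so $\ftime{\serv'} = \ptime{\serv}$ and $\dist{a_{\serv'}}{a_{\serv}} < 6 \cdot 2^{\level{\serv}}$. Applying \cref{prop:OSY_CertifierBetweenCertified} to $\serv'$ and $\serv$ yields the service $\serv''$ that certified $\serv'$, with $\stime{\serv''} \in \I{\ftime{\serv'}}{\stime{\serv}}$. By \cref{defn:OSY_WitnessAndCertified} there is a request $\req^* \in \creqs_{\serv''}$ that was a witness for $\serv'$ at $\stime{\serv''}$; hence $\req^* \in \elig_{\serv'}$, giving $\dist{\req^*}{a_{\serv'}} \le 2^{\level{\serv}}$, and $\req^* \in \creqs_{\serv''}$ gives $\dist{\req^*}{a_{\serv''}} \le 2^{\level{\serv''}-3}$. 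By \cref{prop:OSY_CertificationLevelDifference}, $\level{\serv''} \ge \level{\serv}+4$.

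Next, I bound $\dist{\req}{a_{\serv''}}$ by the four-hop triangle inequality through $a_\serv$, $a_{\serv'}$, $\req^*$:
\[
    \dist{\req}{a_{\serv''}} \le 2^{\level{\serv}} + 6\cdot 2^{\level{\serv}} + 2^{\level{\serv}} + 2^{\level{\serv''}-3} \le 2^{\level{\serv''}-1} + 2^{\level{\serv''}-3} \le 2^{\level{\serv''}},
\]
where $\dist{\req}{a_{\serv}} \le 2^{\level{\serv}}$ follows from $\req \in \elig_{\serv}$ and the last step uses $\level{\serv''} \ge \level{\serv}+4$. Since levels are monotone in time, $\level{\req}(\stime{\serv''}) \le \level{\req}(\stime{\serv}) \le \level{\serv} < \level{\serv''}$. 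Combining, $\alevel{\req}(\stime{\serv''}) \le \level{\serv''}$. Under the contradictory hypothesis $\rlt{\req} \le \ptime{\serv} = \ftime{\serv'} < \stime{\serv''}$, the request $\req$ is pending at $\stime{\serv''}$, so $\req \in \elig_{\serv''}$; then $\serv''$ either serves $\req$ (so $\req$ is not pending at $\stime{\serv}$) or upgrades $\level{\req}$ to $\level{\serv''}+1 > \level{\serv}$ via \cref{line:OSY_LevelUpgrade}. Either way, $\req \notin \elig_{\serv}$, a contradiction.

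The main obstacle is keeping the distance bookkeeping tight: unlike the deadline case where \cref{prop:OSD_CertifierBetweenCertified} directly names the relevant request close to $a_{\serv'}$, here I have to extract the appropriate $\req^*$ from the delay-style certification definition and verify that the looser level gap $\level{\serv''} \in \{\level{\serv}+4, \level{\serv}+5\}$ (rather than a fixed $+4$) still absorbs the accumulated constants. The $8\cdot 2^{\level{\serv}} + 2^{\level{\serv''}-3} \le 2^{\level{\serv''}}$ slack comfortably handles this, so no change of constants is needed in downstream arguments.
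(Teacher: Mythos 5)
Your proof is correct and takes essentially the same approach as the paper's: both proofs invoke \cref{prop:OSY_CertifierBetweenCertified} to locate the intermediate service $\serv''$ in $\I{\ftime{\serv'}}{\stime{\serv}}$, both use a chain of triangle inequalities (yours makes the intermediate witness $\req^*$ explicit as a fourth hop, whereas the paper folds $\dist{a_{\serv'}}{a_{\serv''}}$ into a separate two-hop bound before chaining, but this is cosmetic), and both derive the contradiction by showing $\req \in \elig_{\serv''}$ must have been served or upgraded by $\serv''$. If anything, you make a small point the paper leaves implicit — that being served by $\serv''$ already contradicts $\req$ being pending at $\stime{\serv}$, so the upgrade branch is forced — and you only need the lower bound $\level{\serv''} \ge \level{\serv}+4$ from \cref{prop:OSY_CertificationLevelDifference}, whereas the paper's intermediate step also uses the upper bound; neither difference changes the argument in substance.
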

\begin{proof}
    If $\ptime{\serv} = -\infty$ then we are done.
    Otherwise, there exists a certified service $\serv' \in \cservs$ such that $\level{\serv'} = \level{\serv}$, $\ftime{\serv'} = \ptime{\serv}$, and $\dist{a_\serv}{a_\serv'} \le 6\cdot 2^{\level{\serv}}$.
    Define $\ell := \level{\serv}$.
    Using \cref{prop:OSY_CertifierBetweenCertified}, the service $\serv''$ that certified $\serv'$ occured in the interval $\I{\ftime{\serv'}}{\stime{\serv}}$.
    Thus, there must exist a witness request that made $\serv''$ certify $\serv'$; thus, there exists a request in $\elig_{\serv'} \cap \creqs_{\serv''}$.
    But this means that $\dist{a_{\serv'}}{a_{\serv''}} \le 2^{\level{\serv'}} + 2^{\level{\serv''}-3} \le 5\cdot 2^{\ell}$, where the second inequality uses \cref{prop:OSY_CertificationLevelDifference} for $\level{\serv''} \le \level{\serv}+5$.

    Assume for contradiction that $\rlt{\req} \le \ptime{\serv} = \ftime{\serv'}$.
    Thus, $\req$ is pending at $\stime{\serv''}$, and also has level at most $\ell$ (since $\req \in \elig_{\serv}$).
    In addition we have:
    \begin{align*}
        \dist{\req}{a_{\serv''}} &\le \dist{\req}{a_{\serv}} + \dist{a_{\serv}}{a_{\serv'}} + \dist{a_{\serv'}}{a_{\serv''}} \\
        &\le 2^{\ell} + 6\cdot 2^{\ell} + 5\cdot 2^{\ell} = 12\cdot 2^{\ell} \le 2^{\level{\serv''}}
    \end{align*}
    where the final inequality uses \cref{prop:OSY_CertificationLevelDifference} to claim that $\level{\serv''} \ge \level{\serv} + 4 $.
    Thus, $\req \in \elig_{\serv''}$, and since it is not served by $\serv''$, its level after $\serv''$ is at least $\level{\serv''} + 1 \ge \ell + 5$; but this contradicts the level of $\req$ being at most $\ell$ at $\stime{\serv}$.
    This completes the proof of the proposition.
\end{proof}

\begin{proposition}
    \label{prop:OSY_CertifiedCylinderIntersection}
    For every certified service $\serv \in \cservs$, it holds that $ 2^{\level{\serv}} \le  2\ccap{\opt}{\ccyl{\serv}} + \cdchg{\serv}$.
\end{proposition}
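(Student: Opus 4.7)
The plan is to follow the template of the deadline analogue \cref{prop:OSD_CertifiedCylinderIntersection}, but with the prize-collecting Steiner tree machinery replacing the pure Steiner tree argument. I will upper-bound $\pcstree^{*}(\elig_{\serv},\pen_{\ftime{\serv}};a_{\serv})$ by a PCST solution derived from $\opt$, and lower-bound the same quantity using the algorithm's stopping condition at $\ftime{\serv}$: by the definition of $\ftime{\serv}$ the approximate $\pcstree$ cost reaches $6\cdot 2^{\level{\serv}}$ there, so since the algorithm is a $3$-approximation we obtain $\pcstree^{*}(\elig_{\serv},\pen_{\ftime{\serv}};a_{\serv})\ge 2\cdot 2^{\level{\serv}}$.

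To produce the upper bound, I would split $\elig_{\serv}=R'\cup R$, where $R'$ is the set of eligible requests that $\opt$ serves by $\ftime{\serv}$ and $R=\elig_{\serv}\setminus R'$. \Cref{prop:OSY_RequestInCertifiedCylinder} gives $\rlt{\req}>\ptime{\serv}$ for every $\req\in\elig_{\serv}$, so $\opt$ visits each request of $R'$ during $\civl{\serv}=\I{\ptime{\serv}}{\ftime{\serv}}$; thus the subgraph $G^{*}$ of edges traversed by $\opt$ during $\civl{\serv}$ connects $R'$. Since eligibility forces $R'\subseteq\ball{a_{\serv}}{2^{\level{\serv}}}$, \cref{prop:OSD_BallIntersection} with $\rho=a_{\serv}$ and $r=2^{\level{\serv}}$ yields $\stree^{*}(R')\le 2\ccap{G^{*}}{\ball{a_{\serv}}{3\cdot 2^{\level{\serv}}}}=2\ccap{\opt}{\ccyl{\serv}}$, and then $\stree^{*}(R'\cup\{a_{\serv}\})\le \stree^{*}(R')+2^{\level{\serv}}$ by linking $a_{\serv}$ directly to any vertex of $R'$.

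Next I would assemble the PCST solution that takes this Steiner tree on $R'\cup\{a_{\serv}\}$ and pays the penalty $\pen_{\ftime{\serv}}(\req)$ for each $\req\in R$. Because \cref{line:OSY_ZeroResidualDelay} sets $\ctr_{\req}\gets\max\{\ctr_{\req}(\stime{\serv}),\ylt{\req}{\stime{\serv}}\}$ for every eligible $\req$, the penalty used by the algorithm satisfies $\pen_{\ftime{\serv}}(\req)=\cpen{\serv}(\req)$ on $\elig_{\serv}$, so the total penalty paid equals $\cdchg{\serv}$. Hence
\[
\pcstree^{*}(\elig_{\serv},\pen_{\ftime{\serv}};a_{\serv})\le 2\ccap{\opt}{\ccyl{\serv}}+2^{\level{\serv}}+\cdchg{\serv},
\]
and combining with the lower bound $2\cdot 2^{\level{\serv}}\le \pcstree^{*}(\elig_{\serv},\pen_{\ftime{\serv}};a_{\serv})$ yields $2^{\level{\serv}}\le 2\ccap{\opt}{\ccyl{\serv}}+\cdchg{\serv}$.

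The main obstacle is making sure the two penalty notions $\pen_{\ftime{\serv}}$ and $\cpen{\serv}$ really coincide on $\elig_{\serv}$; this is exactly what the counter update on \cref{line:OSY_ZeroResidualDelay} buys us, and it is the reason the zeroing step is performed before the $\pcstree$ call. Beyond this, the Steiner-tree-from-$\opt$ argument is essentially the one used for deadlines, with the set $R$ of requests left unserved by $\opt$ absorbed as penalty (covering the edge case $R'=\emptyset$ automatically) rather than having to be served through the ball.
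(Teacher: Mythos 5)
Your proof is correct and follows essentially the same route as the paper: lower-bound $\pcstree^{*}(\elig_{\serv},\pen_{\ftime{\serv}};a_{\serv})$ via the stopping condition and the $3$-approximation factor, upper-bound it by a PCST solution built from $\opt$'s traversal during $\civl{\serv}$ together with \cref{prop:OSD_BallIntersection}, and absorb the remaining eligible requests as penalty equal to $\cdchg{\serv}$. The paper presents the same chain of inequalities (with $R$ denoting the served set rather than the unserved one), and likewise relies implicitly on \cref{prop:OSY_RequestInCertifiedCylinder} and on the counter update of \cref{line:OSY_ZeroResidualDelay} making $\pen_{\ftime{\serv}}$ coincide with $\cpen{\serv}$, both of which you make usefully explicit.
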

\begin{proof}
    Denote by $R \subseteq \elig_{\serv}$ the set of requests in $\elig_{\serv}$ whose location was visited by during $\civl{\serv}$.
    Thus, using \cref{prop:OSD_BallIntersection}, it holds that $\stree^*(R) \le 2\ccap{\opt}{3\cdot 2^{\level{\serv}}}$.
    Now, note that $\stree^*(R\cup\pc{a_{\serv}}) \le \stree^*(R) + 2^{\level{\serv}}$.
    Finally, note that the prize-collecting Steiner tree problem whose solution is traversed by the algorithm uses the penalty function $\cpen{\serv}$.
    Combining, we get
    \begin{align*}
        2^{\level{\serv} + 1} &\le \pcstree^*(\elig_{\serv}, \pi_{\serv}; a_{\serv}) \\
        &\le  \stree^*(R\cup\pc{a_{\serv}}) + \sum_{\req \in \elig_{\serv}\backslash R} \cpen{\serv}(\req) \\
        &\le \stree^*(R) + 2^{\level{\serv}} + \cdchg{\serv} \\
        &\le 2\ccap{\opt}{3\cdot 2^{\level{\serv}}}  + 2^{\level{\serv}} + \cdchg{\serv}
    \end{align*}
    Where the first inequality stems from \cref{line:OSY_DefineForwardingTime} together with the fact that $\pcstree$ is a 3-approximation and the second inequality is since serving only $R$ is a feasible solution to $\pcstree^*(\elig_{\serv}, \pi_{\serv}; a_{\serv})$.
    Simplifying, we get that $2^{\level{\serv}} \le 2\ccap{\opt}{\ccyl{\serv}} + \cdchg{\serv}$.
\end{proof}

\begin{proof}
    [Proof of \cref{lem:OSY_BoundingCertified}]
    Define $\rho = 24\cdot \npoint^2$.
    Combining \cref{prop:OSY_CertifiedCylinderIntersection} and \cref{cor:OSD_PerforationCylinderIntersectionDifference}, for every cylinder $\cyl_{\serv} \in \ccyls$, we have
    \begin{align*}
        2^{\level{\serv}} &\le 2\ccap{\opt}{\cyl_{\serv}} + \cdchg{\serv} \\
        &\le 2\ccap{\percyl{\rho}_{\serv}}{\opt} + 4\cdot 3\cdot 2^{\level{\serv}} \cdot \npoint^2 /\rho + \cdchg{\serv} \\
        &\le 2\ccap{\percyl{\rho}_{\serv}}{\opt}  + 2^{\level{\serv} - 1} + \cdchg{\serv}
    \end{align*}
    Thus, $\sum_{\serv \in \cservs} 2^{\level{\serv}} \le O(1)\cdot \sum_{\serv \in \servs} (\ccap{\opt}{\cyl_{\serv}} + \cdchg{\serv})$.
    Note that for every $i$, the cylinders of $\ccyls_i$ are disjoint (\cref{prop:OSY_CertifiedCylindersDisjoint}).
    Thus, defining $\percyls := \pc{\percyl{\rho}_{\serv} | \cyl_{\serv} \in \ccyls}$, \cref{prop:OSD_PerforationDisjointness} implies that $\percyls$ can be partitioned into $O(\log \npoint)$ disjoint sets.
    \Cref{prop:OSY_DisjointImpliesChargeCertified} thus implies that $\sum_{\serv \in \cservs} 2^{\level{\serv}} \le O(\log \npoint) \cdot \opt$, completing the proof.
\end{proof}

\begin{proof}
    [Proof of \cref{thm:OSY_Competitiveness}]
    The following holds:
    \begin{align*}
        \alg &\le O(1) \cdot \pr{\sum_{\serv \in \cservs} 2^{\level{\serv}} +\sum_{\serv \in \pservsf} 2^{\level{\serv}} +\sum_{\serv \in \pservss} 2^{\level{\serv}} + \opt  } \\
        &\le O(\log \npoint) \cdot \opt
    \end{align*}
    where the first inequality uses \cref{lem:OSY_AlgorithmBoundedByPrimaryAndCertified,prop:OSY_BoundingPrimaryByFarAndStationary}, and the second inequality uses \cref{lem:OSY_BoundingPrimary,lem:OSY_BoundingCertified}
\end{proof}

    \section{Extension to Request Regime}
    \label{sec:RR}
    In this section, we extend the results of the paper to obtain competitiveness as a function of $\nreqs$, the number of requests in the input.
Specifically, we prove the following theorem.

\begin{theorem}
    \label{thm:RR_OSYCompetitiveness}
    There exists a $O(\log \min\pc{\npoint,\nreqs})$-competitive algorithm for online service with delay (in particular, also for online service with deadlines).
\end{theorem}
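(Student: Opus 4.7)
The plan is to revisit the perforation argument behind \cref{lem:OSD_BoundingFarPrimaryImproved,lem:OSD_BoundingCertifiedImproved} (and their delay analogues \cref{lem:OSY_BoundingPrimary,lem:OSY_BoundingCertified}), which currently perforates balls around every one of the $\npoint$ nodes of $\ms$ and thereby loses $\log \npoint$. The $\log \npoint$ factor arises from two coupled appearances of $\npoint$ in the appendix: the perforation radius is chosen as $r/\Theta(\npoint^2)$ so that the loss in $\ccap{\opt}{\cdot}$ is negligible (\cref{cor:OSD_PerforationCylinderIntersectionDifference}), and this choice then forces cylinders with level gap $\Omega(\log \npoint)$ before absorption kicks in (\cref{prop:OSD_PerforationDisjointness}). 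Replacing $\npoint$ by a set $P$ with $\ps{P} = O(\nreqs)$ will give an $O(\log \nreqs)$ bound; taking the minimum with the existing $O(\log \npoint)$ bound yields \cref{thm:RR_OSYCompetitiveness}.

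First, I would show that every cylinder center arising in the analyses of \cref{sec:OSD,sec:OSY} already lives in a set $P$ of cardinality $O(\nreqs)$. In \cref{alg:OSD}, the initial server location $a_{\serv}$ of any service $\serv$ is either the global starting location $a_0$ or the triggering request $\creq_{\serv'}$ of an earlier primary service $\serv' \in \pservs$; so $a_{\serv} \in P := \{a_0\} \cup \{\creq_{\serv'} : \serv' \in \pservs\}$ and $\ps{P} \le \nreqs + 1$. For \cref{alg:OSY}, a small tweak is needed in \cref{line:OSY_FindNewLoc}: instead of an arbitrary dense-ball center $a'$, pick $a'$ to be any request in $\creqs_{\serv} \cap \ball{a'}{2^{\level{\serv}-8}}$, which is non-empty because it carries positive residual delay. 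The effective ball at most doubles in radius, affecting only constants in \cref{prop:OSY_MovingDistanceBounds,prop:OSY_PrimaryCylinderIntersectionFar,prop:OSY_PrimaryCylinderIntersectionStationary}, and guarantees that every $a_{\serv}$ again lies in $P$.

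Next, I would redo the appendix's perforation construction using only the $O(\nreqs)$ points of $P$, at radius $r/\Theta(\nreqs^2)$. The disjointness conclusion of \cref{prop:OSD_PerforationDisjointness} still holds: in any pair of cylinders with level gap $\Omega(\log \nreqs)$, the smaller cylinder is centered at a point of $P$ and is therefore fully swallowed by the perforation hole at its own center. The charge-preservation bound \cref{cor:OSD_PerforationCylinderIntersectionDifference} now loses at most $O(2^{\level{\serv}} \cdot \ps{P}^2 / \rho)$ per cylinder, which with $\rho = \Theta(\nreqs^2)$ remains a small fraction of $2^{\level{\serv}}$. Plugging these refined bounds into the proofs of \cref{lem:OSD_BoundingFarPrimaryImproved,lem:OSD_BoundingCertifiedImproved,lem:OSY_BoundingPrimary,lem:OSY_BoundingCertified} yields $O(\log \nreqs)$-competitiveness.

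The main obstacle is confirming that the appendix's perforation argument is genuinely parameterized by the size of the perforated set rather than by the ambient $\npoint$. Specifically, one must verify that the loss bound in \cref{cor:OSD_PerforationCylinderIntersectionDifference} scales as $\ps{P}^2$ whenever perforation is restricted to $P$, and that the disjointness argument depends only on cylinder centers lying inside the perforated set rather than on perforating every node of $\ms$. If either dependence secretly relies on covering the full metric space, one would need a more intricate modification---for instance, augmenting $P$ with a bounded number of auxiliary anchor points per service---but I expect the appendix's arguments to be robust to this replacement.
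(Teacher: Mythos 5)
Your proposal correctly identifies that the cylinder centers can be confined to a set $P$ of size $O(\nreqs)$ (possibly after a minor algorithm tweak, as you note for \cref{line:OSY_FindNewLoc}), and that \cref{prop:OSD_PerforationDisjointness} only needs the centers of the \emph{smaller} cylinders to carry perforation holes. That part of the plan is sound.

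However, there is a genuine gap in the step you flag as the ``main obstacle,'' and the answer is that your optimism is misplaced: the $\npoint^2$ factor in \cref{prop:OSD_PerforationIntersectionDifference} is \emph{not} the number of perforation centers. Reread the proof: each edge $e$ of $G'$ loses at most $2r/\rho$ weight regardless of how many balls intersect it (the removed segments near each endpoint are nested, so they contribute at most $r/\rho$ per endpoint), and the factor $\npoint^2$ is the bound on the \emph{number of edges in the subgraph $G'$ traversed by} $\opt$, which is a subgraph of $\ms$. Restricting the perforation to a set $P$ of size $O(\nreqs)$ does nothing to shrink this edge count: if $\ms$ is, say, a path on $\npoint \gg \nreqs$ nodes, $\opt$'s traversal $G'$ can contain $\Theta(\npoint)$ distinct edges, each losing up to $2r/\rho$ weight. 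With $\rho = \Theta(\nreqs^2)$ the total loss per cylinder is then $\Theta(\npoint r / \nreqs^2)$, which can exceed $\ccap{\opt}{\cyl}$ entirely. To make the loss small you would be forced back to $\rho = \Theta(\npoint^2)$, and \cref{prop:OSD_PerforationDisjointness} then gives $O(\log\rho) = O(\log \npoint)$ color classes, not $O(\log\nreqs)$.

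The paper's actual fix is different and sidesteps this problem cleanly. Instead of reducing the perforation set inside the ambient graph $\ms$, it replaces the underlying metric itself by the \emph{induced} complete graph $\ms'$ on the $\le \npoint'+1$ points that host a request (plus the server's start), and then runs the whole analysis on $\ms'$. Two things cooperate to make this legal: the algorithm is modified to keep its server on $\ms_t$ (restrict \cref{line:OSY_FindNewLoc} and call $\pcstree$ on $\ms_t$), and $\opt$ can be assumed lazy, so its server never leaves $\ms'$ either and its traversal $G'$ is a subgraph of $\ms'$ with at most $(\npoint'+1)^2$ edges. That is what lets every appearance of $\npoint$ in the appendix --- including the edge-count bound you need in \cref{prop:OSD_PerforationIntersectionDifference}, not just the number of perforation centers --- be replaced by $\npoint'+1$. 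Your plan would need to incorporate this metric replacement (or an equivalent way to cap the number of edges $\opt$ uses) to close the gap.
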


Denote by  $\npoint'$ the number of points on which a request is released in the input, and note that $\npoint' \le \min\pc{\npoint,\nreqs}$; we show $O(\log \npoint')$-competitiveness for online service with deadlines or delay, thus proving \cref{thm:RR_OSYCompetitiveness}.

\subsection{Proof of Theorem \ref{thm:RR_OSYCompetitiveness}}

The algorithm and its proof are almost identical to that of \cref{sec:OSY}.
Thus, instead of reiterating the proof, we go over the necessary changes.

\textbf{Modified Algorithm.}
For every time $t$, define $\ms_t$ to be the metric space induced by requests that have been released by time $t$.
Formally, we observe the points in $\ms$ on which requests have been released, plus the initial location of the server; these are the vertices of the graph $\ms_t$.
Between every two vertices $u,v$ in $\ms_t$, there exists an edge of weight $\dist{u}{v}$.
Note that $\ms_{t_1}$ is a subgraph of $\ms_{t_2}$ for every $t_2 \ge t_1$.
Define $\ms' := \ms_{\infty}$, and note that the number of vertices in $\ms'$ is at most $\npoint'+1$.

We modify \cref{alg:OSY} to consider only points on which requests have previously been released.
Specifically, we make the following changes to $\UponCritical$ called at time $t$:
\begin{enumerate}
    \item Whenever the algorithm identifies a new location for the server (\cref{line:OSY_FindNewLoc}), we only consider points in $\ms_{t}$.
    Hence, at time $t$, the server of the algorithm is always at a point in $\ms_{t}$.
    \item Whenever $\pcstree$ is called, we instead call $\pcstree_{\ms_{t}}$, which is the approximation for prize-collecting Steiner tree run on the graph $\ms_{t}$.
    Note that whenever we call $\pcstree$, the requested terminals and the root node all belong to $\ms_{t}$, and thus this is well-defined.
\end{enumerate}

In addition, we can assume without loss of generality that the optimal solution is lazy, and only moves its server to the location of a pending request.
Thus, at any time $t$ the optimal solution's server exists only in $\ms_{t}$.
At this point, the proofs of the lemmas and propositions of \cref{sec:OSY} go through, where the underlying graph is $\ms'$.
This yields a competitive ratio of $O(\log \npoint')$, which is at most $O(\log \min\pc{\npoint, \nreqs})$.

    \section{Improved Analysis through Perforated Cylinders}
    \label{sec:Perf}
    \textbf{Perforated balls and cylinders.}

\begin{definition}[perforated balls and cylinders]
    For every point $v \in \ms$, radius $r$, and number $\rho > 1$, define the perforated ball
    \[
        \perball{\rho}{v}{r} := \ball{v}{r} - \bigcup_{v'\in\ms} \ball{v'}{\frac{r}{\rho}}.
    \]
    In addition, given a cylinder $\cyl = (\ball{v}{r}, I)$, define the perforated cylinder $\percyl{\rho} := (\perball{\rho}{v}{r}, I)$.
\end{definition}

\begin{proposition}
    \label{prop:OSD_PerforationIntersectionDifference}
    For every subgraph $G'$, and every choice of $v,r,\rho$, it holds that $\ccap{G'}{\ball{v}{r}} \le \ccap{G'}{\perball{\rho}{v}{r}} + \frac{2r\npoint^2}{\rho}$.
\end{proposition}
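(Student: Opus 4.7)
The plan is to directly bound the ``removed volume'' $\ccap{G'}{\ball{v}{r}\setminus\perball{\rho}{v}{r}}$, since by the natural additivity of $\ccap{G'}{\cdot}$ over disjoint regions this quantity is exactly the gap between the two sides of the inequality. By the definition of a perforated ball, we have $\ball{v}{r}\setminus\perball{\rho}{v}{r} \subseteq \bigcup_{v'\in\ms}\ball{v'}{r/\rho}$, so it suffices to show that $\ccap{G'}{\bigcup_{v'\in\ms}\ball{v'}{r/\rho}} \le 2r\npoint^2/\rho$.

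First I would work per edge of $G'$. Fix an edge $e \in G'$ with endpoints $u_1, u_2$, and bound the portion of $e$ that lies in $\bigcup_{v'}\ball{v'}{r/\rho}$. According to the paper's convention, only portions of $e$ near endpoints that lie inside a given ball are counted. For any $v'\in\ms$ with $u_1 \in \ball{v'}{r/\rho}$, the contributed portion lies entirely within the length-$(r/\rho)$ segment of $e$ at $u_1$; similarly for $u_2$. When both endpoints happen to be in $\ball{v'}{r/\rho}$, then by the triangle inequality $w_e \le 2r/\rho$, and the full edge is covered by the union of these two endpoint-adjacent segments. Taking the union over all $v'\in\ms$, the covered portion of $e$ is thus contained in the two length-$(r/\rho)$ segments at the endpoints of $e$, contributing at most $2r/\rho$ in total.

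Then I would sum over edges. Since $\ms$ is a simple graph on $\npoint$ nodes and $G' \subseteq \ms$, we have $|E(G')| \le \binom{\npoint}{2} \le \npoint^2/2$. Multiplying the per-edge bound of $2r/\rho$ by this count gives $\ccap{G'}{\bigcup_{v'\in\ms}\ball{v'}{r/\rho}} \le \npoint^2 \cdot r/\rho \le 2r\npoint^2/\rho$, which is exactly the claim.

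The delicate point, and the main thing to verify carefully, is the per-edge geometric observation. One must confirm that even for a center $v'$ far from $u_1$ and $u_2$, the portion of $e$ contributed to $\ball{v'}{r/\rho}$ (under the paper's endpoint-based convention) is still confined to the two length-$(r/\rho)$ segments at the endpoints of $e$. This uses the fact that the paper's definition of edge-intersection with a ball counts only segments adjacent to endpoints lying in the ball, not points in the interior of edges, so no new portions of $e$ can be introduced by small balls whose center is neither $u_1$ nor $u_2$.
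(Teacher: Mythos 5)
Your proof is correct and follows essentially the same approach as the paper's: bound the portion of each edge contained in the union of the small balls by $2r/\rho$ (concentrated near the two endpoints by the paper's edge-ball convention), then multiply by an $O(\npoint^2)$ edge count. Your version is slightly more careful on two points the paper states tersely — it spells out the case analysis for the per-edge geometric claim and uses the tighter edge count $\binom{\npoint}{2}\le \npoint^2/2$ — but neither change alters the structure or the conclusion.
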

\begin{proof}
    Consider every edge $e$ in $G'$.
    The total weight of the edge $e$ contained in balls of radius $r/\rho$ is at most $2r/\rho$ (specifically, this weight is contained in the intersection with the balls centered in the two endpoints of the edge).
    The fact that the number of edges in $G'$ is at most $\npoint ^2$ completes the proof.
\end{proof}
The following corollary follows immediately.

\begin{corollary}
    \label{cor:OSD_PerforationCylinderIntersectionDifference}
    For every cylinder $\cyl = (\ball{v}{r}, I)$ and parameter $\rho$, it holds that
    \[
        \ccap{\opt}{\cyl} \le \ccap{\opt}{\percyl{\rho}} + \frac{2r\npoint^2}{\rho}
    \]
\end{corollary}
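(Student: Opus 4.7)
The plan is to observe that the corollary follows by applying Proposition \ref{prop:OSD_PerforationIntersectionDifference} to a single, specific subgraph, namely the one that implicitly defines $\ccap{\opt}{\cyl}$. Recall that by definition, for any cylinder $\cyl = (B, I)$ we have $\ccap{\opt}{\cyl} = \ccap{G^{*}_{I}}{B}$, where $G^{*}_{I}$ is the subgraph of edges traversed by $\opt$ during the time interval $I$. The perforated cylinder $\percyl{\rho} = (\perball{\rho}{v}{r}, I)$ uses the same time interval, so likewise $\ccap{\opt}{\percyl{\rho}} = \ccap{G^{*}_{I}}{\perball{\rho}{v}{r}}$.

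Given these two identities, I would instantiate Proposition \ref{prop:OSD_PerforationIntersectionDifference} with the subgraph $G' := G^{*}_{I}$ and with the same $v, r, \rho$ as in the cylinder. That proposition immediately yields
\[
    \ccap{G^{*}_{I}}{\ball{v}{r}} \le \ccap{G^{*}_{I}}{\perball{\rho}{v}{r}} + \frac{2r\npoint^2}{\rho},
\]
which, after substituting back the $\opt$-cylinder notation, is exactly the statement of the corollary.

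There is essentially no obstacle here; the only thing to be careful about is that Proposition \ref{prop:OSD_PerforationIntersectionDifference} is stated for an arbitrary subgraph of $\ms$ (so that its hypothesis is trivially satisfied by $G^{*}_{I}$), and that the time interval $I$ plays no role beyond selecting which subgraph we plug in. Thus the corollary is simply a rewriting of the proposition for the particular subgraph induced by $\opt$'s movements during $I$.
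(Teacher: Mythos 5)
Your proof is correct and matches the paper exactly: the paper derives the corollary as an immediate consequence of Proposition~\ref{prop:OSD_PerforationIntersectionDifference}, precisely by the instantiation $G' := G^*_I$ that you carry out explicitly.
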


\begin{proposition}
    \label{prop:OSD_PerforationDisjointness}
    Suppose that for every integer $i$, $\cyls_i$ is some set of disjoint cylinders of the form $(\ball{v}{x}, I)$ where $\ceil{\log x} = i$, and define $\cyls = \bigcup_i \cyls_{i}$.
    Then for every parameter $\rho \ge 2$, and defining $\percyls := \pc{\percyl{\rho} | \cyl \in \cyls}$, the set $\percyls$ can be partitioned into $O(\log \rho)$ sets of disjoint cylinders.
\end{proposition}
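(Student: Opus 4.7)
The plan is to partition the levels of $\cyls$ modulo an integer $k = O(\log \rho)$ and argue that within each residue class, the perforated cylinders are pairwise disjoint; this yields the desired $O(\log \rho)$ disjoint subsets.

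First I would establish the crucial single-pair lemma: if $\cyl_1 = (\ball{v_1}{x_1}, I_1)$ and $\cyl_2 = (\ball{v_2}{x_2}, I_2)$ satisfy $i_2 := \ceil{\log x_2} \ge i_1 + \ceil{\log \rho} + 1 =: i_1 + k$, then $\perball{\rho}{v_1}{x_1}$ and $\perball{\rho}{v_2}{x_2}$ are spatially disjoint. Indeed, any point $p \in \perball{\rho}{v_2}{x_2}$ lies at distance strictly greater than $x_2/\rho$ from every vertex of $\ms$, and in particular from $v_1$. Since $x_1 \le 2^{i_1}$ and $x_2 > 2^{i_2 - 1} \ge 2^{i_1 + \ceil{\log \rho}}$, we get $x_2/\rho > 2^{i_1} \ge x_1$, so $p \notin \ball{v_1}{x_1} \supseteq \perball{\rho}{v_1}{x_1}$. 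Hence the intersection is empty, and the perforated cylinders are disjoint (spatially).

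Next, I would partition by residue of the level modulo $k$. For $j \in \{0, 1, \ldots, k-1\}$, define
\[
    \percyls^j := \pc{\percyl{\rho} \,\middle|\, \cyl \in \cyls_i \text{ for some } i \equiv j \pmod{k}}.
\]
Since $k = \ceil{\log \rho} + 1 = O(\log \rho)$, there are $O(\log \rho)$ such classes, and they clearly cover $\percyls$. It remains to verify that each $\percyls^j$ is a disjoint set of cylinders. Consider two distinct cylinders $\percyl{\rho}_1, \percyl{\rho}_2 \in \percyls^j$ arising from $\cyl_1, \cyl_2 \in \cyls$ with levels $i_1, i_2$.
If $i_1 = i_2$, then $\cyl_1, \cyl_2 \in \cyls_{i_1}$ are already disjoint by hypothesis, and since $\perball{\rho}{v}{r} \subseteq \ball{v}{r}$ for every $v, r$, perforation preserves both spatial and temporal disjointness.
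Otherwise $i_1 \ne i_2$, and since $i_1 \equiv i_2 \pmod{k}$ we have $\ps{i_1 - i_2} \ge k$, so the single-pair lemma gives spatial disjointness of the perforated balls.

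The main obstacle is really only the single-pair lemma, whose proof reduces to a short calculation using the two facts that (i) a point in a perforated ball of radius $x_2$ is at distance $> x_2/\rho$ from \emph{every} vertex of $\ms$ (including the center of the other ball), and (ii) the levels $\ceil{\log x}$ lower-bound radii up to a factor of $2$. Once this is in hand, the modular partition argument is routine.
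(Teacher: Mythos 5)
Your proposal is correct and follows essentially the same route as the paper's own proof: partition the levels modulo $k = \ceil{\log \rho} + 1$, and for cylinders in the same residue class but at different levels, observe that the perforation punches a hole of radius (larger radius)$/\rho$ around the smaller ball's center, which is large enough to contain the entire smaller ball. The only cosmetic difference is that you argue pointwise ($\dist{p}{v_1} > x_2/\rho > x_1$ for any $p$ in the larger perforated ball) while the paper argues by ball containment ($\ball{v_2}{r_2} \subseteq \ball{v_2}{r_1/\rho}$, which is disjoint from $\perball{\rho}{v_1}{r_1}$ by construction), but these are the same calculation.
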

\begin{proof}
    Define $\levsep = \ceil{\log \rho}+1$, and define $\percyls_i = \pc{\percyl{\rho} | \cyl \in \cyls_i}$.
    For every $i \in [\levsep]$, define $\overline{\percyls}_i = \bigcup_{j \in i + \levsep\mathbb{Z}} \percyls_j$; note that $\percyls$ is partitioned into those $\levsep$ sets.

    It remains to show that $\overline{\percyls}_i$ is a set of disjoint cylinders for every $i$.
    Consider two cylinders $\percyl{\rho}_1, \percyl{\rho}_2 \in \overline{\percyls}_i$, denote the centers of their balls by $v_1, v_2$, and denote the radii of their balls by $r_1, r_2$, respectively.
    If $\ceil {\log r_1}= \ceil {\log r_2} = i$, then $\percyl{\rho}_1,\percyl{\rho}_2 \in \percyls_i$; in this case, they must be disjoint: $\cyls_i$ is a set of disjoint cylinders, and $\percyl{\rho} \subseteq \cyl$ for every $\cyl$.
    Otherwise, $r_1\neq r_2$; assume WLOG that $\ceil {\log r_1} \ge \ceil {\log r_2} + \levsep$.
    Now, consider that $\perball{\rho}{v_1}{r_1}$ and $\ball{v_2}{r_1/\rho}$ are disjoint, by construction; now note that $r_2 \le r_1 / 2^{\levsep-1} \le r_1 / \rho$, and thus $\perball{\rho}{v_2}{r_2} \subseteq \ball{v_2}{r_1/\rho}$, which shows that $\percyl{\rho}_1,\percyl{\rho}_2$ are disjoint.
    Overall, we proved that $\overline{\percyls}_i$ is a disjoint set.
\end{proof}

\begin{proof}[Proof of \cref{lem:OSD_BoundingFarPrimaryImproved}]
    Define $\rho = 2^{6}\npoint^2$.
    Combining \cref{prop:OSD_PrimaryCylinderIntersection} and \cref{cor:OSD_PerforationCylinderIntersectionDifference}, for every cylinder $\cyl_{\serv} \in \pcyls$, we have
    \begin{align*}
        2^{\level{\serv}-6} &\le \ccap{\opt}{\cyl_{\serv}} \\
        &\le \ccap{\percyl{\rho}_{\serv}}{\opt} + 2\cdot 2^{\level{\serv}-2} \cdot \npoint^2 /\rho \\
        &\le \ccap{\percyl{\rho}_{\serv}}{\opt}  + 2^{\level{\serv} - 7}.
    \end{align*}
    Simplifying, we get $2^{\level{\serv}-7} \le \ccap{\percyl{\rho}_{\serv}}{\opt}$.
    Defining the set of cylinders $\ppercyls := \cset{\percyl{\rho}_{\serv}}{\cyl_{\serv} \in \pcyls}$, we have
    $\sum_{\serv \in \pservsf} 2^{\level{\serv}} \le O(1)\cdot \sum_{\serv \in \pservsf} \ccap{\opt}{\percyl{\rho}_{\serv}}$.
    Now, using \cref{prop:OSD_PerforationDisjointness}, we know that $\ppercyls$ can be partitioned into $O(\log \rho) = O(\log \npoint)$ sets of disjoint cylinders.
    Thus, $\sum_{\serv \in \pservsf} \ccap{\opt}{\percyl{\rho}_{\serv}} \le O(\log \npoint) \cdot \opt$.
\end{proof}

\begin{proof}[Proof of \cref{lem:OSD_BoundingCertifiedImproved}]
    Define $\rho = 24\cdot \npoint^2$.
    Combining \cref{prop:OSD_CertifiedCylinderIntersection} and \cref{cor:OSD_PerforationCylinderIntersectionDifference}, for every cylinder $\cyl_{\serv} \in \ccyls$, we have
    \begin{align*}
        2^{\level{\serv}-1} &\le \ccap{\opt}{\cyl_{\serv}} \\
        &\le \ccap{\percyl{\rho}_{\serv}}{\opt} + 2\cdot 3\cdot 2^{\level{\serv}} \cdot \npoint^2 /\rho \\
        &\le \ccap{\percyl{\rho}_{\serv}}{\opt}  + 2^{\level{\serv} - 2}.
    \end{align*}
    Simplifying, $2^{\level{\serv}-2} \le \ccap{\percyl{\rho}_{\serv}}{\opt}$.
    Defining $\percyls := \cset{\percyl{\rho}_{\serv}}{\cyl_{\serv} \in \ccyls}$, we have
    $\sum_{\serv \in \cservs} 2^{\level{\serv}} \le O(1)\cdot \sum_{\serv \in \cservs} \ccap{\opt}{\percyl{\rho}_{\serv}}$.
    \Cref{prop:OSD_PerforationDisjointness} yields that $\percyls$ can be partitioned into $O(\log \rho) = O(\log \npoint)$ sets of disjoint cylinders.
    Thus, $\sum_{\serv \in \cservs} \ccap{\opt}{\percyl{\rho}_{\serv}} \le O(\log \npoint) \cdot \opt$.
\end{proof}


    \bibliographystyle{ACM-Reference-Format}
    \bibliography{bibfile}

\end{document}